\newcommand {\mathset} [1] {\ensuremath {\mathbb {#1}}\xspace}
\newcommand {\Q} {\mathset {Q}}
\renewcommand{\S}{\ensuremath{\mathcal{S}}\xspace}
\newcommand{\ACMQ}{ACM$\Q$\xspace}
\newcommand{\SPM}{\ensuremath{\mathit{SPM}}\xspace}
\newtheorem{conjecture}{Conjecture}
\newtheorem{corollary}{Corollary}
\newtheorem{lemma}{Lemma}
\newtheorem{observation}{Observation}
\newtheorem{theorem}{Theorem}
\newcommand{\forceqed}{\tag*{\qedsymbol}}
\renewcommand{\@makecaption}[2]{%
  \vskip\abovecaptionskip
  {\fontsize{9.7}{10}\selectfont \textbf{#1.} #2\par}
  \vskip\belowcaptionskip
}
\author[Sarita de Berg et al.]{Sarita de Berg\affiliationmark{1,2}
  \and Guillermo Esteban\affiliationmark{3}\\
  \and Rodrigo I. Silveira\affiliationmark{4} \and Frank Staals\affiliationmark{2}}
\title{Exact solutions to the Weighted Region Problem}
\affiliation{
  IT University of Copenhagen, Copenhagen, Denmark\\
  Utrecht University, Utrecht, The Netherlands\\
  Universidad de Alcalá, Alcalá de Henares, Spain\\
  Universitat Politècnica de Catalunya, Barcelona, Spain}
\keywords{Weighted Region Problem, Shortest Path Map, orthoconvex}
\begin{document}
\publicationdata{vol. 28:2}{2026}{25}{10.46298/dmtcs.15053}{2025-01-10; 2025-01-10; 2026-01-12}{2026-03-31}
\maketitle
\vspace*{-0.15cm}
\begin{abstract}
  In this paper, we consider the Weighted Region Problem. In the Weighted Region Problem, the length of a path is defined as the sum of the weights of the subpaths within each region, where the weight of a subpath is its Euclidean length multiplied by a weight $ \alpha \geq 0 $ depending on the region. We study a restricted version of the problem of determining shortest paths through a single weighted rectangular region. We prove that even this very restricted version of the problem is unsolvable within the Algebraic Computation Model over the Rational Numbers (ACM$ \mathbb{Q} $). On the positive side, we provide equations for the shortest paths that are computable within the ACM$ \mathbb{Q} $. Additionally, we provide equations for the bisectors between regions of the Shortest Path Map for a source point on the boundary of (or inside) the rectangular region.
\end{abstract}
\section{Introduction}
The Weighted Region Problem (WRP)~\cite{mitchell1991weighted} is a well-known geometric problem that, despite having been studied extensively, is still far from being well understood. Consider a subdivision of the plane into (usually polygonal) regions. Each region $ R_i $ has a  weight $ \alpha_i \geq 0$, representing the cost per unit distance of traveling in that region. Thus, a straight-line segment~$ \sigma $, of Euclidean length $|\sigma|$, between two points in the same region has weighted length $ \alpha_i \cdot \rvert \sigma \lvert $ when traversing the interior of $ R_i $, or $ \min\{\alpha_i,\alpha_j\}\cdot\rvert \sigma \lvert $ if it goes along the edge between~$ R_i $ and $ R_j $. Then, the weighted length of a path through a subdivision is the sum of the weighted lengths of its subpaths through each face or edge. The resulting metric is called the \emph{Weighted Region Metric}.
The WRP entails computing a shortest path $ \pi(s,t) $ between two given points $s$ and~$t$ under this metric. We denote the weighted length of $ \pi(s,t) $ by~$d(s,t)$. 
Figure~\ref{fig:example} shows how the shape of a shortest path changes as the weight of one region varies.

\begin{figure}[tb]
    \centering
    \includegraphics[page=6]{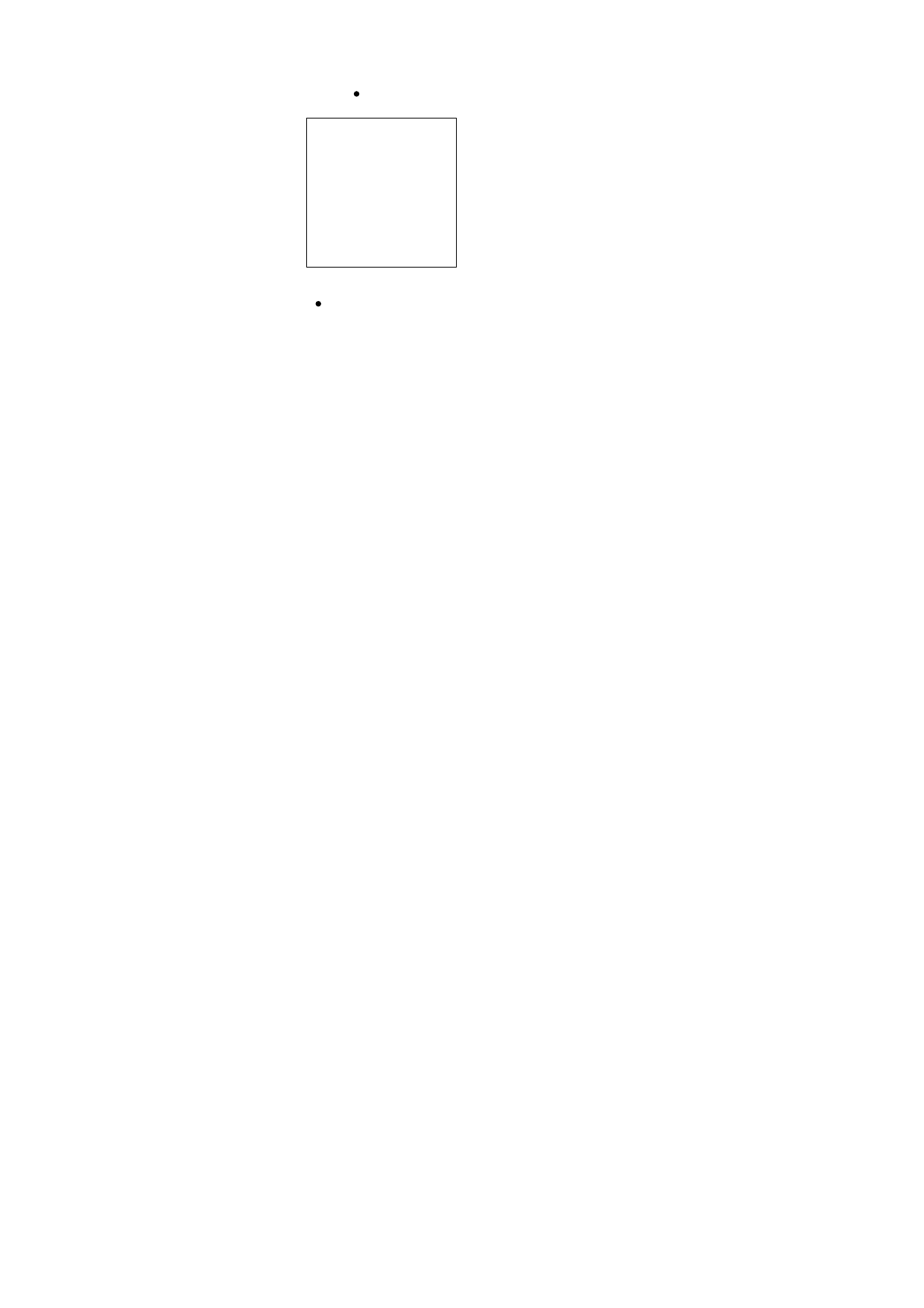}
    \caption{Examples of shortest paths between two points---shown in orange---for two weighted regions. The unbounded region has weight 1, the squares have varying weight $\alpha$.}
    \label{fig:example}
\end{figure}

Existing algorithms for the WRP---in its general formulation---are approximate. Since the seminal work by~\cite{mitchell1991weighted}, with the first $(1+\varepsilon)$-ap\-prox\-i\-ma\-tion, several  algorithms have been proposed, with improvements on running times, but always keeping some dependency on the vertex coordinates sizes and weight ranges. These methods are usually based on the continuous Dijkstra's algorithm, subdividing triangle edges in parts for which crossing shortest paths have the same combinatorial structure (e.g.,~\cite{mitchell1991weighted}), or on adding Steiner points (e.g., see~\cite{aleksandrov1998varepsilon,aleksandrov2000approximation,aleksandrov2005determining,Equilateral,sun2001bushwhack}). However, rather recently it has been proved that computing an exact shortest path between two points using the Weighted Region Metric, even if there are only three different weights, is an unsolvable problem in the Algebraic Computation Model over the Rational Numbers (\ACMQ)~\cite{carufel14}. In the \ACMQ one can  compute exactly any number that can be obtained from rational numbers by applying a finite number of operations from $ +, -, \times, \div $, and $ \sqrt[k]{} $, for any integer $ k \geq 2 $. This provides a theoretical explanation for the lack of exact algorithms for the WRP, and justifies the study of approximation methods. 

This also raises the question of which are the special cases for which the WRP can be solved exactly. Two natural ways to restrict the problem are by limiting the possible weights and by restricting the shape of the regions. For example, computing a shortest path among polygonal or curved obstacles can be seen as a variant of the WRP with weights in the set $\{1, \infty \}$. Efficient algorithms exists for this problem, culminating with the recent algorithms by~\cite{wang2023new} for polygonal obstacles, and by~\cite{hershberger2022near} for shortest paths among curved obstacles. The case for polygonal regions with weights in $ \{0, 1, \infty\} $ can be solved in $ O(n^2)$ time~\cite{gewali1988path} by constructing a graph known as the \emph{critical graph}, an extension of the visibility graph. Other variants that can be solved exactly correspond to regions shaped as regular $k$-gons with weight $\geq 2$ (since they can be considered as obstacles), or regions with two weights $ \{1, \alpha\} $ consisting of parallel strips~\cite{narayanappa2006exact}. In the latter case, the angle of incidence in each of the strips is the same, so they can be grouped into one strip that can be shifted so that $ s $ or $ t $ are on the boundary of the new strip, then the angle of incidence can be computed exactly using Snell's law of refraction.

\subparagraph{Our results.}
In light of the fact that the WRP is unsolvable in the \ACMQ already for three different weights, in this work we study the case of two arbitrary weights, that is, weights in $ \{1, \alpha\} $, where $ \alpha \in \mathbb{Q}^+ $. In particular, and without loss of generality, we assume that the weight of the unbounded region is $ 1 $. Otherwise, we could always rescale the weights to be $ 1 $ outside the regions. This case is particularly interesting, since an algorithm for weights  $ \{1, \alpha\} $ can be transformed into one for weights in $ \{0, 1, \alpha, \infty\} $~\cite{mitchell1987shortest}. However, the variant with weights $ \{1, \alpha\} $ was conjectured to be as hard as the general WRP problem, see the first open problem in \cite[Section $ 7 $]{gewali1988path}. (The results in~\cite{carufel14} do not directly apply to weights $ \{0, 1, \alpha, \infty\} $.). This paper is organized as follows. First we present some preliminaries in Section \ref{sec:Shortest_paths_and_their_properties}. In Section \ref{sec:Computing_a_shortest_path} we consider two weights and one rectangular region $R$, with the source point~$s$ on its boundary or inside. For this setting, we figure out all types of possible optimal paths and give exact formulas to compute their lengths. In Section~\ref{sub:source_point_outside} we focus on the case where $s$ is outside of $R$, and prove that in this case the WRP with weights $ \{1, \alpha\} $ is already unsolvable in the \ACMQ, confirming the suspicions of~\cite{mitchell1987shortest}. In Section~\ref{sec:Computing_a_shortest_path_map} we explore the computation of the Shortest Path Map for $s$. We finish with some conclusions in Section~\ref{sec:conclusions}.

\section{Shortest paths and their properties}
\label{sec:Shortest_paths_and_their_properties}

In this section we briefly review some key properties of shortest paths in weighted regions.

First, with our assumption that the weight within each region does not account for the effect of certain force fields that favors some directions of travel, there will always be a shortest path in the Weighted Region Problem that is piecewise linear, see~\cite[Lemma $ 3.1 $]{mitchell1991weighted}. Second, it is known that shortest paths must obey Snell’s law of refraction. So we can think of a shortest path as a ray of light. Throughout this paper, the \emph{angle of incidence} $ \theta $ is defined as the minimum angle between the incoming ray and the vector perpendicular to the region boundary. The \emph{angle of refraction} $ \theta' $ is defined as the minimum angle between the outgoing ray and the vector perpendicular to the region boundary. Snell's law states that whenever the ray goes from one region $ R_i $ to another region $ R_j $, then $\alpha_i \sin \theta = \alpha_{j} \sin \theta'$. In addition, whenever $\alpha_i > \alpha_j$, the angle $ \theta_c $ at which $ \frac{\alpha_i}{\alpha_{j}}\sin{\theta_c} = 1 $ is called the \emph{critical angle}. A ray that hits an edge at an angle of incidence greater than $ \theta_c $, will be totally reflected from the point at which it hits the boundary. In our problem, a shortest path will never be incident to an edge at an angle greater than~$ \theta_c $. 

Finally, if the space only contains orthoconvex regions\footnote{A region is orthoconvex if its intersection with every horizontal and vertical line is connected or empty~\cite{rawlins1988ortho}.} with weight at least~$ \sqrt{2} $, they can be simply considered as obstacles~\cite{narayanappa2006exact}. Thus, since we focus on a rectangular region $ R $, we assume that its weight is $ 0 < \alpha < \sqrt{2} $. We first provide some general properties of shortest paths for arbitrary weighted regions that are interesting on their own.

\begin{lemma}
    \label{prop:visitonce}
    Let $\S$ be a polygonal subdivision for which each region has a weight in the set $\{1, \alpha\}$, with $\alpha \geq 0$. A shortest path $\pi(s,t) $ visits any edge of the subdivision at most once.
\end{lemma}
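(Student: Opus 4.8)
The plan is to argue by contradiction: suppose a shortest path $\pi(s,t)$ crosses some edge $e$ of the subdivision at two distinct points $p$ and $q$ (with $p$ encountered before $q$ along $\pi$), and derive a shorter path, contradicting optimality. First I would set up the local picture: the edge $e$ is a line segment lying on a line $\ell$, and it separates (locally) two regions, one on each side; call their weights $\alpha_1$ and $\alpha_2$ (each in $\{1,\alpha\}$). The subpath $\pi[p,q]$ of $\pi$ between the two crossings lies entirely on one side of $\ell$ or makes excursions to both sides; in any case both endpoints $p,q$ lie on $\ell$. The key quantity is the weighted length of $\pi[p,q]$ compared with the weighted length of the straight segment $\overline{pq}\subseteq e$, which, being on the boundary between the two regions, has weighted length $\min\{\alpha_1,\alpha_2\}\cdot|pq|$.

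The main step is then to show that replacing $\pi[p,q]$ by $\overline{pq}$ does not increase the weighted length, and strictly decreases it unless $\pi[p,q]$ was already a straight segment along $e$ — but if it were, $\pi$ would not ``visit'' $e$ at two separate points in the sense intended, or we could simply merge the visit into one. Concretely: the portion of $\pi[p,q]$ lying in the region of weight $\alpha_1$ on one side projects orthogonally onto $\ell$ to a sub-multiset of segments whose total length is at least the length of its projection, and similarly on the $\alpha_2$ side; since the straight segment $\overline{pq}$ travels at the cheaper rate $\min\{\alpha_1,\alpha_2\}$ while every piece of $\pi[p,q]$ travels at rate $\alpha_1$ or $\alpha_2\ge\min\{\alpha_1,\alpha_2\}$ and is at least as long as its projection onto $\ell$, we get $\mathrm{wlength}(\pi[p,q])\ge \min\{\alpha_1,\alpha_2\}\cdot|pq| = \mathrm{wlength}(\overline{pq})$, with equality only if $\pi[p,q]$ is itself a straight segment contained in $e$. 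This yields the contradiction, since a shortest path that goes straight along $e$ between $p$ and $q$ visits $e$ along that whole stretch as a single visit, not two.

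A couple of technical points need care. One must handle the case where $e$ lies on the boundary between a region and the unbounded region, or where more than two edges are collinear at $\ell$; the projection argument is insensitive to this as long as every region touched by $\pi[p,q]$ has weight $\ge \min\{\alpha_1,\alpha_2\}$, which could fail if, say, $\alpha<1$ and the cheap region of weight $\alpha$ is only reachable by first crossing $e$ — but crossing back and forth across $e$ is exactly the configuration we are ruling out, so a careful choice of the \emph{first} and \emph{last} crossing of $e$ along $\pi$, together with the observation that between them $\pi$ stays within the union of the (at most two) regions bordering $e$ along the relevant stretch, keeps the argument valid. I would also invoke the piecewise-linearity of shortest paths (\cite[Lemma~3.1]{mitchell1991weighted}) so that ``visits an edge'' and the projection argument are well defined. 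The main obstacle, I expect, is precisely this bookkeeping about which regions the detour $\pi[p,q]$ can pass through and ensuring none of them is cheaper than the segment along $e$; once that is pinned down, the projection/shortcut inequality is routine.
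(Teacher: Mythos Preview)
Your overall plan---shortcut the subpath $\pi[p,q]$ by the straight segment $\overline{pq}\subset e$---matches the paper's, but your execution has a real gap. You compare against the \emph{local} boundary rate $\min\{\alpha_1,\alpha_2\}$, and for your inequality you then need every portion of $\pi[p,q]$ to lie in a region of weight at least $\min\{\alpha_1,\alpha_2\}$. Your proposed justification---that between the chosen crossings $\pi$ ``stays within the union of the (at most two) regions bordering $e$''---is false in general: $\pi[p,q]$ avoids $e$, but nothing prevents it from crossing \emph{other} edges of the subdivision and entering arbitrary regions, some of which may be cheaper than $\min\{\alpha_1,\alpha_2\}$. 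The ``bookkeeping'' you flag as the main obstacle is not a detail to be filled in later; as stated, the argument does not close.

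The paper avoids this entirely by using the two-weight hypothesis \emph{globally} rather than locally. Since every region has weight in $\{1,\alpha\}$, the weighted length of any path from $p$ to $q$ is at least $\min\{1,\alpha\}$ times its Euclidean length, regardless of which regions it traverses. By the strict triangle inequality (there is an interior point of $\pi[p,q]$ off the segment $\overline{pq}$), this strictly exceeds $\min\{1,\alpha\}\,\lvert pq\rvert$, which is the cost of $\overline{pq}$ along $e$. No projection, no tracking of intermediate regions. The single idea you are missing is to replace $\min\{\alpha_1,\alpha_2\}$ by the global minimum $\min\{1,\alpha\}$; the two-weight hypothesis is precisely what makes that bound available everywhere.
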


\begin{proof}
    Assume, for the sake of contradiction, that $\pi(s,t)$ intersects an edge $e$ in at least two disjoint intervals $I_1$ and $I_3$ (note that $ I_1 $ and $ I_3 $ could be points). Moreover, let $p_1 \in I_1$ and $p_3 \in I_3$ be points for which the subpath $\pi(p_1,p_3) \subseteq \pi(s,t)$ does not intersect $e$ in any points other than $p_1$ and $p_3$. Let $p_2$ be a point on $\pi(p_1,p_3)$ between $p_1$ and $p_3$, which thus does not lie on $e$. Now observe that there exists a path $\overline{p_1p_3}$ from $p_1$ to $p_3$ of length $\min\{1,\alpha\}\lvert\overline{p_1p_3}\rvert$. Since $p_2$ does not lie on $\overline{p_1p_3}$, it follows by the triangle inequality that the length of $\pi(p_1,p_3)$ is strictly larger than $\min\{1,\alpha\}\lvert\overline{p_1p_3}\rvert$. Hence, $\pi(s,t)$ is not a shortest path, and we obtain a contradiction.
\end{proof}

Observe that the previous result is not true when there are more than two weights, see~\cite[Figure~$2$]{mitchell1991weighted}.

\begin{corollary}
    \label{cor:combinatorial_complexity_shortest_path}
    Let $\S$ be a polygonal subdivision with $n$ vertices, such that the weight of each region is within the set $\{1, \alpha\}$, with $\alpha \geq 0$. Any shortest path $\pi(s,t)$ is a polygonal chain with at most $O(n)$ vertices.
\end{corollary}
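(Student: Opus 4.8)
The plan is to bound the number of ``events'' along $\pi(s,t)$, where an event is a point at which the path leaves one face or edge of $\S$ and enters another. Between two consecutive events the path is contained in the closure of a single face (or a single edge) of the subdivision, and since within a fixed weight a shortest path is a straight segment (by the piecewise-linearity recalled in Section~\ref{sec:Shortest_paths_and_their_properties}, together with the fact that a straight segment inside one region cannot be improved), each such piece contributes exactly one vertex. Hence it suffices to show that $\pi(s,t)$ has $O(n)$ events.

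First I would classify the events. An event point lies on an edge $e$ of $\S$: either the path crosses $e$ transversally (a refraction or reflection event), or it reaches $e$ and then travels along $e$ for a while before leaving it (edge-following events), or it passes through a vertex of $\S$. By Lemma~\ref{prop:visitonce}, $\pi(s,t)$ meets each edge $e$ of $\S$ in at most one connected interval; in particular it can have at most a constant number of events associated with each edge $e$ (entering $e$, leaving $e$, and at most the two endpoints of the interval where it may enter or leave an incident face, but all of these are confined to the single interval $\pi(s,t)\cap e$). Summing the constant per edge over the $O(n)$ edges of $\S$ gives $O(n)$ events.

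More carefully, I would argue as follows. Walk along $\pi(s,t)$ from $s$ to $t$ and record the maximal subpaths that lie on the $1$-skeleton of $\S$ and the maximal subpaths that lie in the (open) interior of a single face. A maximal subpath on the $1$-skeleton lies, by Lemma~\ref{prop:visitonce}, within a single edge $e$ (it cannot continue across a vertex onto a second edge and then come back to $e$, and if it continued onto a second edge $e'$ that subpath would be a separate maximal piece on $e'$ — each edge hosts at most one such maximal piece); hence there are at most $O(n)$ such ``on-skeleton'' pieces, one per edge. A maximal subpath in the interior of a face $F$ is a straight segment whose two endpoints lie on $\partial F$; charging such a segment to, say, the edge of $\partial F$ containing its first endpoint, and noting that by Lemma~\ref{prop:visitonce} the path meets that edge in one interval and so can be charged at most a constant number of times, we again get $O(n)$ interior pieces. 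Each piece contributes one segment to the polygonal chain $\pi(s,t)$, so $\pi(s,t)$ has $O(n)$ segments and hence $O(n)$ vertices.

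The only real subtlety — and the step I expect to require the most care — is handling the pieces of the path that run \emph{along} edges of $\S$ and pass through vertices, and making the charging argument watertight there: one must be sure that a single edge $e$ cannot accumulate more than a constant number of charged pieces even when the path alternates between the interior of a face incident to $e$ and the edge $e$ itself. This is exactly where Lemma~\ref{prop:visitonce} is essential: since $\pi(s,t)\cap e$ is a single interval, all the relevant transitions near $e$ occur within that one interval, bounding the local contribution by a constant. Everything else is routine bookkeeping.
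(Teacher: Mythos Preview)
Your argument is correct and follows essentially the same route as the paper: the key step in both is Lemma~\ref{prop:visitonce}, from which one concludes that each edge of $\S$ contributes only $O(1)$ vertices to $\pi(s,t)$, giving $O(n)$ in total. The paper's proof is considerably shorter because it invokes \cite[Proposition~3.8]{mitchell1991weighted} directly for the fact that all interior vertices of $\pi(s,t)$ lie on edges of $\S$, and then observes that Lemma~\ref{prop:visitonce} bounds the contribution of each edge by two; your event/charging analysis re-derives this structure from scratch, which accounts for the extra length (and the small expositional wobble about whether a maximal on-skeleton subpath spans one edge or several --- your parenthetical clarification fixes this, but it would read more cleanly if you just decomposed the on-skeleton part edge by edge from the outset).
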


\begin{proof}
    Any shortest path is a polygonal chain whose interior vertices all lie on edges of~\S, see~\cite[Proposition $ 3.8 $]{mitchell1991weighted}. By Lemma~\ref{prop:visitonce}, each edge contributes at most two vertices.
\end{proof}

\section{Computing a shortest path}
\label{sec:Computing_a_shortest_path}

We now consider the problem of computing a shortest path $\pi(s,t)$ from $s$ to $t$ when the region $R$ is an axis-aligned rectangle of weight $\alpha$. The exact shape of $\pi(s,t)$ depends on the position of $s$ and $t$ with respect to~$ R $, and on the value of~$ \alpha $. In Sections~\ref{sub:The_source_point_lies_on_the_boundary} and \ref{sub:The_source_point_lies_inside} we consider the case where $s$ lies on the boundary or inside of~$R$, respectively. We categorize the various types of shortest paths, and show that we can compute the shortest path of each type, and thus we can compute $ \pi(s,t) $. In Section~\ref{sub:source_point_outside}, we consider the case where $s$ and $ t $ lie outside $R$. In this case $ \pi(s,t) $ may have only two vertices on the boundary of $R$, and these vertices may not have the critical angle property. We show that the coordinates of these vertices cannot be computed exactly within the \ACMQ.

\subsection{The source point $s$ lies on the boundary of $R$}
\label{sub:The_source_point_lies_on_the_boundary}

Throughout this section we consider the case where $s$ is restricted to the boundary of~$R$, a rectangle of unit height with top-left corner at $(0,0)$. Let $ s = (s_x,0)$, $s_x > 0$, be a point on the top side of~$R$, see Figure~\ref{fig:on_boundary}. In addition, we assume that $t$ is to the left of the line through $s$ perpendicular to the top side of~$R$. The other cases are symmetric.

\subparagraph{Shortest path types.}

Lemma~\ref{prop:visitonce} implies that in this setting, there are only $O(1)$ combinatorial types of paths that we have to consider. More precisely, we have that:

\begin{figure}[tb]
    \centering
    \includegraphics[page=3]{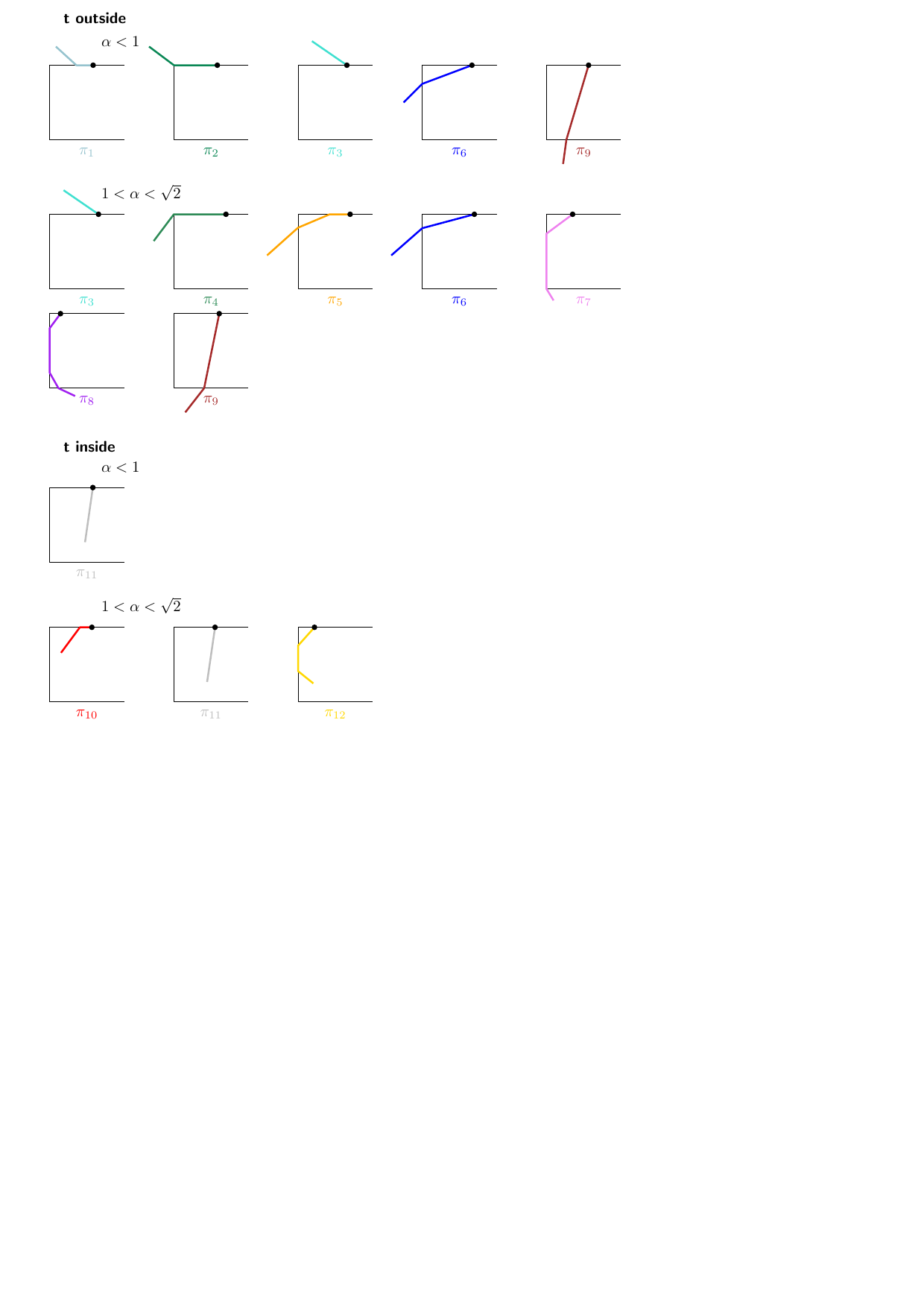}
    \caption{Path types for $s$ on the boundary of $R$ of weight $\alpha < 1$ (blue) and $1 \leq \alpha < \sqrt{2}$ (orange).}
    \label{fig:on_boundary}
\end{figure}

\begin{observation}
    \label{obs:typesboundary}
    Let $ s $ be a point on the top boundary of a rectangle $ R $ with weight~$ 0 <\alpha<\sqrt{2} $. There are $ 13 $ types of shortest paths $ \pi_i(s,t) $, shown in Figure~\ref{fig:on_boundary}, up to symmetries.
\end{observation}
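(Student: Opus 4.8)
The plan is to enumerate, in a structured way, all combinatorial types of shortest paths when $s$ lies on the top boundary of $R$, and then argue that exactly $13$ of them can occur (up to the stated symmetries). By Lemma~\ref{prop:visitonce} and Corollary~\ref{cor:combinatorial_complexity_shortest_path}, any shortest path $\pi(s,t)$ crosses each of the four sides of $R$ at most once, so its interior vertices on $\partial R$ form a short sequence, and the number of combinatorial types is finite. The combinatorial type of $\pi(s,t)$ is determined by: (i) whether $t$ lies inside $R$, on $\partial R$, or outside $R$; (ii) the (cyclic) sequence of sides of $R$ that the path crosses or travels along; and (iii) whether the path uses the interior of $R$ at all or stays outside. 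I would organize the case analysis by the position of $t$ relative to $R$ and by which sides of $R$ the path interacts with, using the assumption that $t$ is to the left of the downward vertical line through $s$ to cut the enumeration roughly in half.

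First I would dispose of the easy cases: if the straight segment $\overline{st}$ does not meet the interior of $R$, then $\pi(s,t) = \overline{st}$ is the unique type (a path of weight $1$ entirely outside $R$); this handles $t$ outside $R$ on the same side as $s$ with no obstruction. If $t$ is inside $R$, I would argue via Snell's law and the critical-angle property (using $\alpha < \sqrt{2}$, so $R$ is not an obstacle) that the path either goes straight from $s$ into $R$ to $t$, or enters the interior after first travelling along part of the top side, or travels along the top side and then down a lateral side, etc.; each such itinerary through at most the top side plus one lateral side plus possibly the bottom side yields one type. For $t$ outside $R$ but with $\overline{st}$ blocked by $R$, the path must enter $R$, cross it (refracting on entry and exit per Snell's law), and leave through another side; the entry side is the top side (possibly after sliding along it), and the exit side is the left side or the bottom side, giving a bounded list. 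Throughout, I would use that a shortest path never hits an edge above the critical angle, and that whenever $\alpha \ge 1$ it is never advantageous to enter $R$ unless forced, which prunes several a priori itineraries — this split between $\alpha<1$ (blue paths in Figure~\ref{fig:on_boundary}) and $1\le\alpha<\sqrt{2}$ (orange paths) accounts for why the two weight regimes contribute different subsets of the $13$ types.

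The main obstacle I expect is not the finiteness — that is immediate from Lemma~\ref{prop:visitonce} — but the \emph{tightness}: showing that each of the $13$ listed types is actually realized by some placement of $t$ and some admissible $\alpha$, and that no further type is ever optimal. The "no further type" direction requires ruling out itineraries that a naive crossing-sequence count would allow but that can always be shortcut: for instance, a path that enters $R$ through the top, exits through a lateral side, re-approaches $R$, and enters again is forbidden by Lemma~\ref{prop:visitonce} (each side used at most once), but one must also rule out, e.g., paths that travel along two different lateral sides, or that would require crossing $R$ "the long way" when going around is cheaper. I would handle this by a local-optimality argument: at each vertex on $\partial R$ Snell's law must hold with the appropriate weights, and a path whose itinerary forces an angle exceeding the critical angle (or forces a strictly longer detour than a competing itinerary) cannot be optimal. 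The realizability direction is then a matter of exhibiting, for each type, an explicit $(t,\alpha)$ witness, which can be read off from Figure~\ref{fig:on_boundary}; since the statement only claims the enumeration "up to symmetries," I would also make explicit which reflections (across the vertical line through $s$, and the left–right/top–bottom symmetries of $R$ combined with repositioning $s$) are being quotiented out, so that the count $13$ is unambiguous.
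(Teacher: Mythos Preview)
The paper does not prove this statement: it is labeled an \emph{Observation} and is justified only by reference to Figure~\ref{fig:on_boundary}, together with the remark immediately preceding it that Lemma~\ref{prop:visitonce} implies there are only $O(1)$ combinatorial path types. There is therefore no ``paper's own proof'' to compare your proposal against.

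Your plan is a reasonable outline for turning the observation into a rigorous claim, and its ingredients (Lemma~\ref{prop:visitonce} to bound edge crossings, Snell's law and the critical angle to prune itineraries, the split between $\alpha<1$ and $1<\alpha<\sqrt{2}$, and explicit witnesses for realizability) are exactly the right ones. But as written it is a proposal, not a proof: you never actually carry out the enumeration, and the crucial step---showing that the list of admissible crossing sequences collapses to precisely the $13$ depicted types and no more---is deferred to phrases like ``I would handle this by a local-optimality argument.'' If you want a genuine proof here, you must execute the case analysis: list every sequence of sides (with at most one visit each, per Lemma~\ref{prop:visitonce}) a path starting on the top edge could traverse, discard those violating Snell's law or dominated by a shorter itinerary, and match the survivors to $\pi_1,\ldots,\pi_{13}$. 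The paper sidesteps this entirely by treating the figure as the argument.
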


Note that only some of the types can exist for both $\alpha < 1$ and $1 \leq \alpha < \sqrt{2}$. These types are included twice in Figure~\ref{fig:on_boundary}, once for each regime of $\alpha$.

\subparagraph{Length of $ \pi_i(s,t) $.}
When $s$ is on the boundary of $R$, there is at most one vertex of $ \pi_i(s,t) $ without the critical angle property. This allows us to compute the exact coordinates of the vertices of $ \pi_i(s,t) $ in the \ACMQ. We now provide the equations for the length $ d_i(s,t) $ of the $ 13 $ types of shortest paths $ \pi_i(s,t) $. Theorem~\ref{thm:length} summarizes the results.

\begin{theorem}
    \label{thm:length}
    Let $ s=(s_x,0) $ be a point on the boundary of $R$ with weight $ 0 < \alpha < \sqrt{2} $. A shortest path $\pi(s,t)=\pi_i(s,t) $ from $s$ to a point $ t=(t_x,t_y) $ and its length $d(s,t)=d_i(s,t)$ can be computed in $O(1)$ time in the \ACMQ. In particular, the length $d_i(s,t)$ is given by
    
    \begin{align*}
      d_i(s,t) = &
      \begin{cases}
        \sqrt{(s_x-t_x)^2+t_y^2}                                & \text{if } i = 1 \\
        \alpha(s_x-t_x)+\sqrt{1-\alpha^2}t_y                    & \text{if } i = 2 \\
        \alpha s_x+\sqrt{t_x^2+t_y^2}                           & \text{if } i = 3 \\
        s_x + \sqrt{t_x^2+t_y^2}                                & \text{if } i = 4 \\
        s_x - \sqrt{2-\alpha^2}t_x - \sqrt{\alpha^2-1}t_y       & \text{if } i = 5 \\
        \alpha\sqrt{s_x^2+w_1^2}+\sqrt{t_x^2+(t_y-w_1)^2}           & \text{if } i = 6 \\
        \sqrt{\alpha^2-1}s_x + 1 + \sqrt{t_x^2+(t_y+1)^2}       & \text{if } i = 7 \\
        \sqrt{\alpha^2-1}(s_x+t_x) - \sqrt{2-\alpha^2}(1+t_y)+1 & \text{if } i = 8 \\
        \alpha\sqrt{s_x^2+1}+\sqrt{t_x^2+(t_y+1)^2}           & \text{if } i = 9 \\
        \alpha\sqrt{(s_x-w_2)^2+1}+\sqrt{(t_x-w_2)^2+(t_y+1)^2}     & \text{if } i = 10
      \end{cases}\quad\left(~\parbox{0.15\textwidth}{and thus $t$ lies outside $R$}~\right)\text{, and}\\
      d_i(s,t) = &
      \begin{cases}
         s_x-t_x-\sqrt{\alpha^2-1}t_y   \hspace*{3.4cm} & \text{if } i = 11 \\
         \alpha\sqrt{(s_x-t_x)^2+t_y^2}  & \text{if } i = 12 \\
         \sqrt{\alpha^2-1}(s_x+t_x)-t_y  & \text{if } i = 13
      \end{cases}\quad\!\left(~\parbox{0.15\textwidth}{and thus $t$ lies inside $R$}~\right)\text{,}
    \end{align*}
    
    in which $w_1$ is the unique real solution in the interval $(t_y,0)$ to the equation
    \begin{equation*}
        \beta w_1^4-2t_y\beta w_1^3+\left[\alpha^2t_x^2+\beta t_y^2-s_x^2\right]w_1^2+2s_x^2t_yw_1-s_x^2t_y^2 = 0,
    \end{equation*}
    
    where $ \beta = \alpha^2-1$, and $w_2$ is the unique real solution in the interval $(t_x,s_x)$ to the equation
    \begin{equation}\label{eq:d9}
    \begin{split}
        & \beta w_2^4-2\beta(t_x+s_x)w_2^3+\left[\beta(s_x^2+t_x^2+4s_xt_x)+\alpha^2(1+t_y)^2-1\right]w_2^2\\
        &-2\left[\beta(t_xs_x^2+t_x^2s_x)+\alpha^2(1+t_y)^2s_x-t_x\right]w_2+\beta t_x^2s_x^2+\alpha^2(1+t_y)^2s_x^2-t_x^2=0.
    \end{split}
    \end{equation}
\end{theorem}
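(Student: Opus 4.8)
The statement packages three things: that a shortest path is one of the $13$ types of Observation~\ref{obs:typesboundary}; that for each type the path is determined and has the stated length; and that all of this is computable in $O(1)$ \ACMQ operations. The first is Observation~\ref{obs:typesboundary} itself, so once the per-type formulas are available a shortest path is obtained by evaluating, for every type that is geometrically realizable for the given $s$, $t$, and $\alpha$, the length of the (unique) path of that type and returning the minimum; since there are $O(1)$ types this is $O(1)$ work. So the proof reduces to deriving the $13$ length formulas, which I would do one type at a time.

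Fix a type $\pi_i(s,t)$. By Corollary~\ref{cor:combinatorial_complexity_shortest_path} it is a polygonal chain with $O(1)$ interior vertices, each on a side of $R$, and at every vertex where it crosses $\partial R$ Snell's law holds. I would first justify the claim, already asserted in the text, that with $s\in\partial R$ at most one vertex of $\pi_i(s,t)$ fails the critical-angle property: a maximal subsegment of $\pi_i$ through the interior of $R$ that does not have $s$ as an endpoint must enter and leave $R$ through $\partial R$; if it left through the relative interior of a side, Snell's law at that exit point would force the outgoing segment to continue the incoming one across the boundary, which together with the maximality of the subsegment pins the angle on both ends to the critical angle $\theta_c$ (so in fact a subsegment runs along a side of $R$). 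Hence the only vertex of $\pi_i$ that need not be at $\theta_c$ is the one where a through-$R$ segment emanates from $s$.

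Given this, all turn angles of $\pi_i$ are forced: subsegments along a side of $R$ are axis-parallel, through-$R$ segments incident to a critical-angle vertex make angle $\theta_c$ (so their contribution involves $\sqrt{\alpha^2-1}$ and $\sqrt{2-\alpha^2}$ for $\alpha\ge 1$, and $\alpha$ and $\sqrt{1-\alpha^2}$ for $\alpha<1$), and the last segment is pinned by $t$. For every type with no free vertex --- which turns out to be all of them except $i=6$ and $i=10$ --- the path is then the unique chain through $s$, certain corners of $R$, and $t$ with these slopes, and summing weighted segment lengths gives the displayed formula for $i\in\{1,2,3,4,5,7,8,9,11,12,13\}$ after elementary algebra; I would carry out one ``interior'' case such as $i=2$ and one ``corner'' case such as $i=8$ in full and note that the remaining ones are analogous. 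For $i=6$, with its one free vertex $(0,w_1)$ on the left side of $R$, the length is $L(w_1)=\alpha\sqrt{s_x^2+w_1^2}+\sqrt{t_x^2+(t_y-w_1)^2}$, and for $i=10$, with free vertex $(w_2,-1)$ on the bottom side, it is the analogous $L(w_2)$; imposing $L'=0$ (equivalently, Snell's law at the free vertex) and squaring twice to clear the two radicals produces exactly the quartic equations in the statement.

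The main obstacle, apart from the bookkeeping across thirteen cases, is to show these quartics have a \emph{unique} root in the stated open interval, so that $w_1$ and $w_2$ are well defined. I would derive this from convexity: $L$ is a sum of Euclidean norms of affine functions of the single variable, hence convex, and strictly convex on the relevant interval (the two foci are distinct), so $L'$ is strictly increasing and has at most one zero there; the zero exists precisely when the type-constrained minimum is attained in the interior rather than at an endpoint, which is exactly the realizability condition for that type and may therefore be assumed. Finally, all the closed forms use only $+,-,\times,\div$ and square roots, hence lie in \ACMQ, and a quartic is solvable by radicals (Ferrari), so $w_1$ and $w_2$ --- and the selection of the correct root among the radical expressions, which requires only comparisons, permitted in the model --- are in \ACMQ as well; performing these $O(1)$ operations over the $O(1)$ types yields the claimed bound.
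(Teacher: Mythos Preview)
Your plan is essentially the paper's: it too proceeds type by type, computing each bending point from Snell's law (or the critical-angle relation) and summing weighted segment lengths, with the refraction cases $i=6,10$ leading to the quartics via one squaring of the Snell identity. Two remarks. First, your justification that ``at most one vertex fails the critical-angle property'' is not sound as written: Snell's law at a refraction point does \emph{not} force the outgoing ray to continue the incoming one (that would require $\alpha=1$), so the sentence that is supposed to pin both ends of an interior subsegment to $\theta_c$ does not go through. Fortunately this argument is not load-bearing---the paper simply takes the 13 types from Observation~\ref{obs:typesboundary} as given and verifies each, which is also what you intend to do---so you can drop it or replace it by the enumeration. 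Second, you actually supply two ingredients the paper only asserts: the strict convexity of $L(w)=\alpha\sqrt{s_x^2+w^2}+\sqrt{t_x^2+(t_y-w)^2}$ (and its analogue for $w_2$) to get uniqueness of the stationary point, and Ferrari's formula to place the quartic's roots in \ACMQ. One small correction there: since the Snell relation already has a single radical on each side, a \emph{single} squaring (and cross-multiplication) yields the quartic, not two; and to conclude that the quartic has a unique root in the interval you should also note that on $(t_y,0)$ (resp.\ $(t_x,s_x)$) the two sides of the pre-squared equation have the same sign, so squaring introduces no spurious roots there.
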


In the rest of this section we provide proofs for the lengths $d_i(s,t)$ for $i \in \{1,\dots,13\}$. The lengths for $i \in \{1, 3, 4, 9, 12\}$ are straightforward to compute and thus not included here. All angles in this section are measured relative to the vertical or horizontal line through a vertex of $ \pi_i(s,t) $ for a vertex on the top (and bottom) or left side of $R$, respectively. Recall that $\theta_c$ denotes the critical angle. The following two equations capture the properties we use of the critical angle. For $\alpha < 1$, we have
\begin{equation}
        \label{eq:critical2}
        \sin{\theta_{c}} = \alpha \Rightarrow \begin{cases}
            \cos{\theta_{c}} = \sqrt{1-\alpha^2} \\
            \tan{\theta_{c}} = \frac{\sin{\theta_{c}}}{\cos{\theta_{c}}} = \frac{\alpha}{\sqrt{1-\alpha^2}}.
        \end{cases}
    \end{equation}
    
And for $\alpha > 1$, we have
    \begin{equation}
        \label{eq:critical}
        \sin{\theta_c} = \frac{1}{\alpha} \Rightarrow \begin{cases}
            \cos{\theta_c} = \sqrt{1-\frac{1}{\alpha^2}} \\
            \tan{\theta_c} = \frac{\sin{\theta_c}}{\cos{\theta_c}} = \frac{\frac{1}{\alpha}}{\sqrt{1-\frac{1}{\alpha^2}}} = \frac{1}{\sqrt{\alpha^2-1}}.
        \end{cases}
    \end{equation} 

Note that we do not take into account the case where $ \alpha = 1 $ since in that case the shortest path from $ s $ to $ t $ is simply a straight-line segment.

We frequently use these equations in the rest of this section to determine the lengths of the paths $\pi_i(s,t)$, for all $ i $.

\begin{lemma}
    The length of $ \pi_2(s,t) $ is given by $ d_2(s,t) = \alpha(s_x-t_x)+\sqrt{1-\alpha^2}t_y $.
\end{lemma}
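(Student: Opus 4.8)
The plan is to set up coordinates for the path type $\pi_2(s,t)$ and compute its length directly from the geometry. By the classification in Figure~\ref{fig:on_boundary}, $\pi_2(s,t)$ is the path that leaves $s=(s_x,0)$ on the top side of $R$, travels \emph{outside} $R$ (in the region of weight $1$) along the top side until it refracts into... no: reading the figure and the fact that $t$ lies outside $R$, $\pi_2$ is the path that goes from $s$ along a segment inside $R$ making the critical angle with the top side, exits through the top side, and then continues; equivalently, it is the degenerate refraction path in which the ray inside $R$ travels at exactly the critical angle $\theta_c$, so that its continuation outside runs along (parallel to) the boundary. First I would write $\pi_2(s,t)$ as the concatenation of a horizontal segment of the unbounded region from $s$ to some point $p=(p_x,0)$ on the top side, followed by a straight segment inside $R$ from $p$ to $t$ that makes the critical angle with the vertical. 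Here the horizontal piece has weight $1$ and the slanted piece has weight $\alpha$.

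Next I would express the length as a function of $p_x$ and minimise. The horizontal piece has Euclidean length $p_x-t_x$ if $p$ is to the... actually $s$ is to the right of $t$, so let the break point be $p$ with $t_x \le p_x \le s_x$; the horizontal subpath from $s$ to $p$ has length $s_x-p_x$ with weight $1$, and the subpath from $p$ to $t$ has length $\sqrt{(p_x-t_x)^2+t_y^2}$ with weight $\alpha$. So $d_2(s,t) = (s_x-p_x) + \alpha\sqrt{(p_x-t_x)^2+t_y^2}$. Differentiating with respect to $p_x$ and setting the derivative to zero yields $-1 + \alpha(p_x-t_x)/\sqrt{(p_x-t_x)^2+t_y^2}=0$, i.e.\ $\sin\phi = 1/\alpha$ where $\phi$ is the angle the slanted segment makes with the vertical through $p$; this is exactly the critical-angle condition, consistent with $\alpha>1$ for this path type. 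Using~\eqref{eq:critical} this gives $\cos\phi = \sqrt{1-1/\alpha^2}$ and $\tan\phi = 1/\sqrt{\alpha^2-1}$, hence $p_x-t_x = -t_y\tan\phi = -t_y/\sqrt{\alpha^2-1}$ (note $t_y<0$ since $t$ is below the top side), and $\sqrt{(p_x-t_x)^2+t_y^2} = -t_y/\cos\phi = -\alpha t_y/\sqrt{\alpha^2-1}$.

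Finally I would substitute back: $s_x-p_x = (s_x - t_x) - (p_x - t_x) = (s_x-t_x) + t_y/\sqrt{\alpha^2-1}$, so
\[
d_2(s,t) = (s_x-t_x) + \frac{t_y}{\sqrt{\alpha^2-1}} + \alpha\cdot\frac{-\alpha t_y}{\sqrt{\alpha^2-1}} = (s_x-t_x) + \frac{t_y(1-\alpha^2)}{\sqrt{\alpha^2-1}} = (s_x-t_x) - \sqrt{\alpha^2-1}\,t_y.
\]
Hmm — this matches the formula for $d_{11}$, not $d_2$. So in fact for $\pi_2$ the relevant regime must be $\alpha<1$: there the path goes from $s$ \emph{into} $R$ at the critical angle and then travels along the boundary in the unbounded region; redoing the same computation with the roles of the weights swapped (slanted piece of weight $\alpha$ from $s$, then a horizontal piece), the stationarity condition becomes $\sin\theta_c=\alpha$, and by~\eqref{eq:critical2} one gets $\cos\theta_c=\sqrt{1-\alpha^2}$, $\tan\theta_c = \alpha/\sqrt{1-\alpha^2}$, producing $d_2(s,t) = \alpha(s_x-t_x) + \sqrt{1-\alpha^2}\,t_y$ after the substitution. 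I would therefore carefully pin down from Figure~\ref{fig:on_boundary} which segment carries which weight and which side of $R$ the break point lies on, and then the computation is a short exercise in differentiation and back-substitution using~\eqref{eq:critical2}. The only real obstacle is getting these conventions right — in particular the sign of $t_y$ and the direction of the boundary-hugging subpath — since an error there flips a sign or swaps $\alpha \leftrightarrow \sqrt{1-\alpha^2}$; the algebra itself is routine.
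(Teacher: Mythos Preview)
Your overall method---parametrise the break point on the top side, minimise the weighted length, recover the critical-angle condition, and substitute back---is essentially the paper's argument; the paper simply invokes the critical angle from the outset (via Equation~\eqref{eq:critical2}) rather than rederiving it by differentiation.

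However, your geometric description of $\pi_2$ is wrong in both attempts, and the ``swap'' you propose does not actually match the correct picture. For $\pi_2$ the target $t$ lies \emph{above} the top side of $R$ (so $t_y>0$, not $t_y<0$), and the path consists of a horizontal segment along the top boundary from $s$ to a break point $(b,0)$, followed by a straight segment from $(b,0)$ to $t$ in the weight-$1$ region. The key point you are missing is that an edge of the subdivision carries weight $\min\{1,\alpha\}=\alpha$ when $\alpha<1$, so the horizontal boundary segment has weight $\alpha$, and the slanted segment has weight $1$. Thus the correct length to optimise is
\[
d(b)=\alpha\,(s_x-b)+\sqrt{(b-t_x)^2+t_y^2},
\]
whose stationarity condition gives $\sin\theta_c=\alpha$ exactly as in~\eqref{eq:critical2}. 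Your first setup (horizontal piece weight $1$, slanted piece into $R$ with weight $\alpha$, $t_y<0$) is precisely the configuration for $\pi_{11}$, which is why you landed on $d_{11}$; your second description (``slanted piece of weight $\alpha$ from $s$, then a horizontal piece'') still has the pieces in the wrong order and the wrong location for $t$. Once you fix the geometry as above, your computation goes through verbatim and yields the stated formula.
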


\begin{proof}
    Let $ \theta_{c} $ be the critical angle made by $ \pi_2(s,t) $ on the top boundary of $R$, and let $ (b,0) $ be the point where the shortest path leaves $R$. We use Equation (\ref{eq:critical2}) to obtain the value of $ b $: 
    \begin{equation}
        \label{eq:bending7}
        \frac{\alpha}{\sqrt{1-\alpha^2}} = \frac{\lvert b - t_x \rvert}{\lvert t_y \rvert} = \frac{b-t_x}{t_y} \Rightarrow b = t_x+\frac{\alpha}{\sqrt{1-\alpha^2}}t_y.
    \end{equation}
    
    We know that the weight of the shortest path $ \pi_{2}(s,t) $ is given by $ d_{2}(s,t) = \alpha|s_x-b| + \sqrt{(b-t_x)^2+t_y^2} $. By using Equation (\ref{eq:bending7}), we have that
    \begin{align*}
        d_{2}(s,t) & = \alpha\left(s_x-t_x-\frac{\alpha}{\sqrt{1-\alpha^2}}t_y\right) + \sqrt{\left(t_x+\frac{\alpha}{\sqrt{1-\alpha^2}}t_y-t_x\right)^2+t_y^2} \\
        & = \alpha\left(s_x-t_x-\frac{\alpha}{\sqrt{1-\alpha^2}}t_y\right) + \sqrt{\frac{\alpha^2t_y^2}{1-\alpha^2}+t_y^2}\\
        & = \alpha(s_x-t_x)-\frac{\alpha^2t_y}{\sqrt{1-\alpha^2}} + \sqrt{\frac{t_y^2}{1-\alpha^2}} = \alpha(s_x-t_x)-\frac{\alpha^2t_y}{\sqrt{1-\alpha^2}} + \frac{\lvert t_y \rvert}{\sqrt{1-\alpha^2}}\\
        & = \alpha(s_x-t_x)+\frac{1-\alpha^2}{\sqrt{1-\alpha^2}}t_y= \alpha(s_x-t_x)+\sqrt{1-\alpha^2}t_y.\forceqed
    \end{align*}
    \renewcommand{\qed}{}
\end{proof}

\begin{lemma}
    \label{lemma:bisector12}
    The length of $ \pi_5(s,t) $ is given by $ d_5(s,t) = s_x-\sqrt{2-\alpha^2}t_x-\sqrt{\alpha^2-1}t_y $.
\end{lemma}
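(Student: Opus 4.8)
The plan is to mimic the structure of the proof just given for $d_2(s,t)$, but now working with the path type $\pi_5(s,t)$ which, according to Figure~\ref{fig:on_boundary}, corresponds to the $1 \le \alpha < \sqrt{2}$ regime and enters $R$ through the top side and exits through the left side, with the critical angle property at \emph{both} interior vertices (one on the top side, one on the left side). First I would introduce names for the two bend points: let $(a,0)$ be the point on the top side where $\pi_5$ crosses into $R$, and $(0,-c)$ the point on the left side where it leaves $R$. Since $s$ is on the boundary, the segment from $s$ to $(a,0)$ runs along the top edge (cost $\min\{1,\alpha\}=1$ per unit length, since $\alpha>1$), so that piece contributes $s_x - a$; the segment from $(a,0)$ to $(0,-c)$ is interior to $R$ and contributes $\alpha\sqrt{a^2+c^2}$; and the segment from $(0,-c)$ to $t=(t_x,t_y)$ is in the unbounded region and contributes $\sqrt{t_x^2+(t_y+c)^2}$.

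Next I would impose the critical-angle conditions using Equation~(\ref{eq:critical}). At the top vertex $(a,0)$, measuring angles from the vertical, the ray inside $R$ makes the critical angle $\theta_c$ with $\tan\theta_c = a/c = 1/\sqrt{\alpha^2-1}$, hence $a = c/\sqrt{\alpha^2-1}$. At the left vertex $(0,-c)$, measuring angles from the horizontal, the outgoing ray in the unbounded region makes the critical angle, giving $\tan\theta_c = |t_x|/|t_y+c| = -t_x/(t_y+c) = 1/\sqrt{\alpha^2-1}$ (here $t_x<0$ by the standing assumption that $t$ lies to the left, and the path must reach below $R$ so $t_y+c<0$ or the geometry places $t$ appropriately), yielding $t_y + c = -\sqrt{\alpha^2-1}\,t_x$, i.e.\ $c = -t_y - \sqrt{\alpha^2-1}\,t_x$. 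Substituting $a = c/\sqrt{\alpha^2-1}$ into the interior segment gives $\alpha\sqrt{a^2+c^2} = \alpha c\sqrt{1/(\alpha^2-1)+1} = \alpha c \cdot \alpha/\sqrt{\alpha^2-1} = \alpha^2 c/\sqrt{\alpha^2-1}$, and similarly the last segment simplifies because $\sqrt{t_x^2+(t_y+c)^2} = \sqrt{t_x^2+(\alpha^2-1)t_x^2} = \alpha\sqrt{t_x^2} = -\alpha t_x$ (using $t_x<0$).

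Then it is a matter of collecting terms. We get
\begin{align*}
  d_5(s,t) &= (s_x - a) + \alpha\sqrt{a^2+c^2} + \sqrt{t_x^2+(t_y+c)^2}\\
           &= s_x - \frac{c}{\sqrt{\alpha^2-1}} + \frac{\alpha^2 c}{\sqrt{\alpha^2-1}} - \alpha t_x
            = s_x + \frac{(\alpha^2-1)c}{\sqrt{\alpha^2-1}} - \alpha t_x
            = s_x + \sqrt{\alpha^2-1}\,c - \alpha t_x.
\end{align*}
Plugging in $c = -t_y - \sqrt{\alpha^2-1}\,t_x$ gives $\sqrt{\alpha^2-1}\,c = -\sqrt{\alpha^2-1}\,t_y - (\alpha^2-1)t_x$, so $d_5(s,t) = s_x - \sqrt{\alpha^2-1}\,t_y - (\alpha^2-1)t_x - \alpha t_x = s_x - \sqrt{\alpha^2-1}\,t_y - (\alpha^2 - 1 + \alpha)t_x$, which does not immediately match the claimed $s_x - \sqrt{2-\alpha^2}\,t_x - \sqrt{\alpha^2-1}\,t_y$; this discrepancy tells me I have the wrong picture for which edges the critical angle acts on. The main obstacle, then, will be correctly reading off from Figure~\ref{fig:on_boundary} the exact combinatorial structure of $\pi_5$ — in particular, the coefficient $\sqrt{2-\alpha^2}$ strongly suggests that the relevant angle condition is $\sin\theta' = \alpha/\alpha = 1$-type behavior where the \emph{complementary} angle appears, or that the path crosses a corner of $R$ and the segment in question makes an angle whose cosine is $\sqrt{1-\sin^2} $ with $\sin\theta_c = 1/\alpha$ but measured against the \emph{other} axis, producing $\sqrt{\alpha^2-1}/\alpha$ and hence, after the algebra, a $\sqrt{2-\alpha^2}$ factor via $2-\alpha^2 = (\alpha^2-1)\cdot\frac{2-\alpha^2}{\alpha^2-1}$. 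I would resolve this by carefully setting up both bend angles from the figure (which segment is "horizontal-adjacent" and which is "vertical-adjacent"), recomputing $a$ and $c$ accordingly, and then the same three-term collection and substitution will yield the stated closed form; the algebra itself is routine once the geometry is pinned down.
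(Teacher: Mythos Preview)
Your setup is right up to the exit vertex, and you have in fact diagnosed your own error at the end: the bend at the left side of $R$ is \emph{not} at the critical angle. The segment inside $R$ makes angle $\theta_c$ with the vertical (the normal to the top edge), so its angle with the horizontal (the normal to the left edge) is $\tfrac{\pi}{2}-\theta_c$. At the left edge you therefore apply Snell's law with incidence angle $\tfrac{\pi}{2}-\theta_c$, obtaining a refraction angle $\theta_1$ with
\[
\sin\theta_1=\alpha\sin\!\left(\tfrac{\pi}{2}-\theta_c\right)=\alpha\cos\theta_c=\sqrt{\alpha^2-1},\qquad
\cos\theta_1=\sqrt{2-\alpha^2},\qquad
\tan\theta_1=\frac{\sqrt{\alpha^2-1}}{\sqrt{2-\alpha^2}}.
\]
This is exactly the mechanism that produces the $\sqrt{2-\alpha^2}$ you were looking for; your guess about a ``complementary angle'' was on the mark, but the proposal as written stops short of actually using it.

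With the correct $\theta_1$, the relation at the exit point becomes $\tan\theta_1=\dfrac{b_2-t_y}{-t_x}$ (in the paper's notation $(0,b_2)$ for the exit), giving $b_2=t_y-\dfrac{\sqrt{\alpha^2-1}}{\sqrt{2-\alpha^2}}\,t_x$, and the last segment has length $\sqrt{t_x^2+(t_y-b_2)^2}=\dfrac{|t_x|}{\sqrt{2-\alpha^2}}$ rather than $-\alpha t_x$. Redoing your three-term collection with these values yields $s_x-\sqrt{2-\alpha^2}\,t_x-\sqrt{\alpha^2-1}\,t_y$ directly. So the approach is the same as the paper's; the only missing ingredient is recognizing that only the entry bend is critical while the exit bend is an ordinary Snell refraction at the complementary incidence angle.
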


\begin{proof}
    The shortest path $ \pi_5(s,t) $ from $s$ to $t$ intersects the top side of $R$, and then it enters~$R$ using the critical angle, see Figure~\ref{fig:bisector12}. We proceed to compute the coordinates of the vertices of the shortest path in this case.
    \begin{figure}[tb]
        \centering
        \includegraphics[page=2]{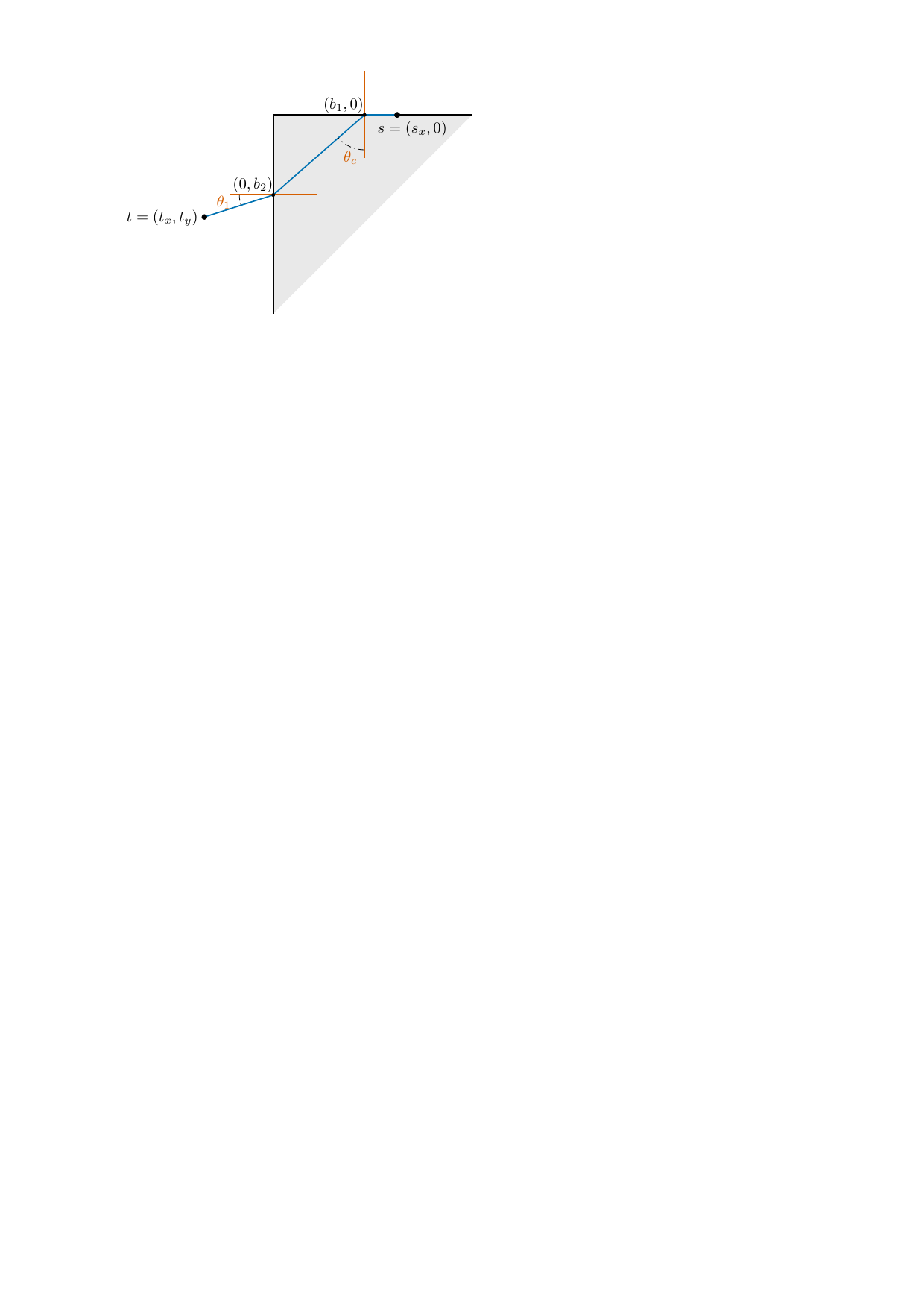}
        \caption{Illustration of the notation used for $\pi_5(s,t)$ in Lemma~\ref{lemma:bisector12}.}
        \label{fig:bisector12}
    \end{figure}
    
    Let $ \theta_1 $ be the angle at which the shortest path leaves $R$ with respect to the normal, see Figure~\ref{fig:bisector12}. Then $\sin{\theta_1} = \alpha\sin{\left(\frac{\pi}{2}-\theta_c\right)} = \alpha\cos{\theta_c} = \sqrt{\alpha^2-1}$, and thus
    \begin{equation}
        \label{eq:tantheta2}
        \tan{\theta_1} = \frac{\sin{\theta_1}}{\cos{\theta_1}} = \frac{\sqrt{\alpha^2-1}}{\sqrt{1-(\alpha^2-1)}} = \frac{\sqrt{\alpha^2-1}}{\sqrt{2-\alpha^2}}.
    \end{equation}
    
    Let $ (b_1,0) $ and $ (0,b_2) $ be, respectively, the points where the shortest path enters and leaves~$R$. We also know that $ \tan{\theta_1} = \frac{\lvert t_y - b_2 \rvert}{\lvert t_x \rvert} $. Since $ t_y < b_2 < 0 $, and $ t_x < 0 $, we use Equation (\ref{eq:tantheta2}) to obtain the value of~$ b_2 $: 
    \begin{equation}
        \label{eq:tantheta3}
        \frac{\sqrt{\alpha^2-1}}{\sqrt{2-\alpha^2}} = \frac{\lvert t_y - b_2 \rvert}{\lvert t_x \rvert} = \frac{b_2-t_y}{-t_x} \Rightarrow b_2 = t_y-\frac{\sqrt{\alpha^2-1}}{\sqrt{2-\alpha^2}}t_x.
    \end{equation}
    
    Also, since $ \tan{\theta_c} = \frac{\lvert b_1 \rvert}{\lvert b_2 \rvert} $, and $ b_2 < 0 < b_1 $, we use Equation (\ref{eq:critical}) to get the value of~$ b_1 $:
    \begin{equation*}
        \frac{1}{\sqrt{\alpha^2-1}} = \frac{\lvert b_1 \rvert}{\lvert b_2 \rvert} = \frac{b_1}{-b_2} \Rightarrow b_1 = -\frac{b_2}{\sqrt{\alpha^2-1}} = -\frac{t_y}{\sqrt{\alpha^2-1}}+ \frac{t_x}{\sqrt{2-\alpha^2}}.
    \end{equation*}
    
    Since $ \sin{\theta_c} = \frac{b_1}{\sqrt{b_1^2+b_2^2}} = \frac{1}{\alpha} \Rightarrow \sqrt{b_1^2+b_2^2} = b_1\alpha $. Thus: 
    \begin{align*}
        d_5(s,t) & = s_x - b_1 + \alpha\sqrt{b_1^2+b_2^2}+\sqrt{t_x^2+(t_y-b_2)^2} \\
        & = s_x-b_1 +b_1\alpha^2+\sqrt{t_x^2+\left(t_y-t_y+\frac{\sqrt{\alpha^2-1}}{\sqrt{2-\alpha^2}}t_x\right)^2} \\
        & = s_x + (\alpha^2-1)b_1+\sqrt{t_x^2+\frac{\alpha^2-1}{2-\alpha^2}t_x^2} = s_x + (\alpha^2-1)b_1+\lvert t_x \rvert\sqrt{1+\frac{\alpha^2-1}{2-\alpha^2}} \\
        & = s_x + (\alpha^2-1)b_1+\lvert t_x\rvert \sqrt{\frac{2-\alpha^2+\alpha^2-1}{2-\alpha^2}} \\
        & = s_x + (\alpha^2-1)\left(\frac{t_x}{\sqrt{2-\alpha^2}}-\frac{t_y}{\sqrt{\alpha^2-1}}\right) - t_x\frac{1}{\sqrt{2-\alpha^2}}\\
        &= s_x-\sqrt{\alpha^2-1}t_y+\frac{\alpha^2-1-1}{\sqrt{2-\alpha^2}}t_x = s_x-\sqrt{\alpha^2-1}t_y-\frac{2-\alpha^2}{\sqrt{2-\alpha^2}}t_x\\
        &= s_x-\sqrt{\alpha^2-1}t_y-\sqrt{2-\alpha^2}t_x.\forceqed
    \end{align*}
    \renewcommand{\qed}{}
\end{proof}

\begin{lemma}
    The length of $ \pi_6(s,t) $ is given by $ d_6(s,t) = \alpha\sqrt{s_x^2+w_1^2}+\sqrt{t_x^2+(t_y-w_1)^2} $, where $ w_1 $ is the unique real solution in the interval $ (t_y , 0) $ to the equation 
    \begin{equation*}
        (\alpha^2-1)w_1^4-2t_y(\alpha^2-1)w_1^3+\left[\alpha^2t_x^2+(\alpha^2-1)t_y^2-s_x^2\right]w_1^2+2s_x^2t_yw_1-s_x^2t_y^2 = 0.
    \end{equation*}
\end{lemma}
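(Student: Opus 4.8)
The plan is to follow the same template as the previous lemmas: identify the vertices of $\pi_6(s,t)$ on the boundary of $R$, use Snell's law together with the critical-angle relations~\eqref{eq:critical2}--\eqref{eq:critical} to pin them down, and then write the weighted length as a function of a single unknown. From Figure~\ref{fig:on_boundary}, the path $\pi_6$ starts at $s=(s_x,0)$, travels inside $R$ to a point $(0,w_1)$ on the \emph{left} side of $R$ with $w_1\in(t_y,0)$, and then exits $R$ and goes straight to $t=(t_x,t_y)$ outside $R$. So the length is immediately
\begin{equation*}
  d_6(s,t) = \alpha\sqrt{s_x^2+w_1^2}+\sqrt{t_x^2+(t_y-w_1)^2},
\end{equation*}
and the whole content of the lemma is the characterization of $w_1$.

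The key step is to express Snell's law at the vertex $(0,w_1)$. Let $\theta$ be the angle of incidence (inside $R$, measured from the horizontal normal to the left side) and $\theta'$ the angle of refraction (outside). Then $\alpha\sin\theta=\sin\theta'$, and reading off the geometry, $\sin\theta = s_x/\sqrt{s_x^2+w_1^2}$ while $\sin\theta' = |t_x|/\sqrt{t_x^2+(t_y-w_1)^2}$. Hence Snell's law becomes
\begin{equation*}
  \alpha\,\frac{s_x}{\sqrt{s_x^2+w_1^2}} = \frac{|t_x|}{\sqrt{t_x^2+(t_y-w_1)^2}}.
\end{equation*}
Squaring both sides and cross-multiplying gives
\begin{equation*}
  \alpha^2 s_x^2\bigl(t_x^2+(t_y-w_1)^2\bigr) = t_x^2\bigl(s_x^2+w_1^2\bigr),
\end{equation*}
which is a quadratic in $w_1$. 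To obtain the stated quartic one instead keeps the equation in the form $\alpha^2 s_x^2 (t_y-w_1)^2 \cdot \bigl(s_x^2+w_1^2\bigr) = \bigl(t_x^2(s_x^2+w_1^2) - \alpha^2 s_x^2 t_x^2\bigr)^2 / (\dots)$ — more cleanly, one writes the refraction condition without pre-squaring the right denominator, so that the path-independent constraint naturally has degree four; expanding and collecting powers of $w_1$ yields exactly
\begin{equation*}
  (\alpha^2-1)w_1^4-2t_y(\alpha^2-1)w_1^3+\bigl[\alpha^2t_x^2+(\alpha^2-1)t_y^2-s_x^2\bigr]w_1^2+2s_x^2t_yw_1-s_x^2t_y^2 = 0.
\end{equation*}
I would present this expansion as a routine (if tedious) algebraic computation rather than grinding through every monomial.

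It remains to argue that the root we want is the \emph{unique} real root in $(t_y,0)$, which is what makes the formula computable in the \ACMQ\ (a quartic with an isolated rational-endpoint interval containing exactly one root can be solved exactly by radicals, and the correct root is selected by the interval). For this I would use a monotonicity/continuity argument on the geometry: as a candidate exit point $(0,y)$ slides from $(0,t_y)$ up to $(0,0)$, the incidence angle $\theta(y)$ (as seen from $s$) is strictly monotations in $y$, and the "refraction demand" — the angle the segment to $t$ makes with the normal — varies monotonically in the opposite sense, so the balance equation $\alpha\sin\theta(y)=\sin\theta'(y)$ has exactly one solution; this solution is $w_1$, and it lies strictly inside $(t_y,0)$ because at the endpoints one side of Snell's law strictly dominates the other (at $y=t_y$ the outgoing segment is horizontal so $\sin\theta'=1$ while $\sin\theta<1$ unless degenerate; a symmetric check handles $y=0$).

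The main obstacle is the bookkeeping in the polynomial expansion, together with making sure the squaring step does not introduce the spurious root corresponding to the "wrong branch" of Snell's law (the reflected rather than refracted ray, or $t_x$ with the wrong sign). The interval restriction $w_1\in(t_y,0)$ is precisely the device that discards those extraneous roots, so the uniqueness argument above is not a mere technicality — it is what certifies that the formula for $d_6$ is well defined and \ACMQ-computable, consistent with the statement of Theorem~\ref{thm:length}.
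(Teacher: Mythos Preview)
Your setup and the expression for $d_6(s,t)$ are correct, but the application of Snell's law contains a geometric error that derails the derivation. You write $\sin\theta = s_x/\sqrt{s_x^2+w_1^2}$ and $\sin\theta' = |t_x|/\sqrt{t_x^2+(t_y-w_1)^2}$, but these are the \emph{cosines} of the angles measured from the horizontal normal, not the sines: the segment from $s=(s_x,0)$ to $(0,w_1)$ has horizontal displacement $s_x$ (along the normal) and vertical displacement $|w_1|$ (along the boundary), so $\cos\theta = s_x/\sqrt{s_x^2+w_1^2}$ and $\sin\theta = |w_1|/\sqrt{s_x^2+w_1^2}$. With the correct sines, Snell's law reads
\[
  \alpha\,\frac{|w_1|}{\sqrt{s_x^2+w_1^2}} \;=\; \frac{|t_y-w_1|}{\sqrt{t_x^2+(t_y-w_1)^2}},
\]
and squaring and cross-multiplying gives $\alpha^2 w_1^2\bigl(t_x^2+(t_y-w_1)^2\bigr)=(t_y-w_1)^2\bigl(s_x^2+w_1^2\bigr)$, which is genuinely quartic in $w_1$ and expands directly to the stated polynomial. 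This is exactly what the paper does. Your equation, by contrast, is only quadratic in $w_1$ --- as you yourself observe --- and the paragraph that follows, attempting to manufacture a quartic by ``keeping the equation in the form \ldots'', is not a coherent derivation; it is a symptom of the earlier mis-identification of the angles.

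The monotonicity sketch for uniqueness of the root in $(t_y,0)$ is a welcome addition (the paper simply asserts uniqueness without argument), but as written it is built on the same swapped sines and cosines. Once you correct the angles the argument does go through cleanly: $\alpha\sin\theta(y)=\alpha|y|/\sqrt{s_x^2+y^2}$ increases from $0$ as $y$ moves from $0$ down toward $t_y$, while $\sin\theta'(y)=|t_y-y|/\sqrt{t_x^2+(t_y-y)^2}$ decreases to $0$, so the two curves cross exactly once in $(t_y,0)$.
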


\begin{proof}
    Let $ (0,w_1) $ be the point where $ \pi_6(s,t) $ leaves $R$, and let $ \theta_1 $ and $ \theta_2 $ be, respectively, the angles of incidence an refraction at $(0,w_1)$. Then, by Snell's law of refraction, we get that $ \alpha \sin{\theta_1} = \sin{\theta_2} $. Thus,
    \begin{align*}
        \alpha\frac{|w_1|}{\sqrt{s_x^2+w_1^2}} & = \frac{|t_y-w_1|}{\sqrt{t_x^2+(t_y-w_1)^2}} \\
        \Rightarrow \alpha^2w_1^2\left(t_x^2+(t_y-w_1)^2\right) & =(t_y-w_1)^2(s_x^2+w_1^2)\\
        \Rightarrow \alpha^2w_1^2t_x^2+\alpha^2w_1^2t_y^2+\alpha^2w_1^4-2\alpha^2w_1^3t_y &=s_x^2t_y^2+s_x^2w_1^2-2s_x^2t_yw_1+w_1^2t_y^2+w_1^4-2w_1^3t_y.
    \end{align*}
    
    Hence,
    \begin{align*}
        (\alpha^2-1)w_1^4-2t_y(\alpha^2-1)w_1^3+\left[\alpha^2t_x^2+(\alpha^2-1)t_y^2-s_x^2\right]w_1^2+2s_x^2t_yw_1-s_x^2t_y^2 = 0.
    \end{align*}
    
    Finally, we get that the weighted length of the shortest path $ \pi_6(s,t) $ is given by
    \begin{equation*}
        d_6(s,t) = \alpha\sqrt{s_x^2+w_1^2}+\sqrt{t_x^2+(t_y-w_1)^2}.\forceqed
    \end{equation*}
    \renewcommand{\qed}{}
\end{proof}

\begin{lemma}
    The length of $ \pi_7(s,t) $ is given by $ d_7(s,t) = \sqrt{\alpha^2-1}s_x + 1 + \sqrt{t_x^2+(t_y+1)^2} $.
\end{lemma}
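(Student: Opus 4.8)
The plan is to read off the combinatorial structure of $\pi_7(s,t)$ from Figure~\ref{fig:on_boundary} (so we are in the regime $1 \le \alpha < \sqrt 2$) and then just add up the weighted lengths of its three pieces. Concretely, $\pi_7(s,t)$ leaves $s=(s_x,0)$ along a straight segment that passes through the interior of $R$ and reaches the left side of $R$ at a point $(0,-b_2)$, then travels along the left side of $R$ down to the bottom-left corner $(0,-1)$, and finally continues along a straight segment outside $R$ from $(0,-1)$ to $t$. Because the path continues along the left side of $R$ right after the first segment, Snell's law (total internal reflection) forces that first segment to hit the left side at exactly the critical angle $\theta_c$; this is the only ``non-routine'' geometric input, and it is precisely the property encoded in Equation~(\ref{eq:critical}).

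Given this structure, the first step is to pin down $b_2$. Measuring the angle of the first segment from the horizontal normal to the left side of $R$, that angle equals $\theta_c$, so $\tan\theta_c = b_2/s_x$, and by Equation~(\ref{eq:critical}) we get $b_2 = s_x/\sqrt{\alpha^2-1}$ (which is $<1$ exactly in the parameter range in which this path type is realized, so that the middle piece indeed runs from $(0,-b_2)$ down to $(0,-1)$). The second step is to evaluate the three contributions: the first segment has Euclidean length $\sqrt{s_x^2+b_2^2} = \alpha s_x/\sqrt{\alpha^2-1}$ and lies in $R$, hence contributes $\alpha^2 s_x/\sqrt{\alpha^2-1}$; the portion along the left side has length $1-b_2$ and weight $\min\{1,\alpha\}=1$, contributing $1-s_x/\sqrt{\alpha^2-1}$; and the last segment, being outside $R$, contributes its Euclidean length $\sqrt{t_x^2+(t_y+1)^2}$. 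The third step is to sum these and simplify: the two terms involving $s_x/\sqrt{\alpha^2-1}$ combine to $(\alpha^2-1)s_x/\sqrt{\alpha^2-1} = \sqrt{\alpha^2-1}\,s_x$, yielding $d_7(s,t) = \sqrt{\alpha^2-1}\,s_x + 1 + \sqrt{t_x^2+(t_y+1)^2}$, as claimed.

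I do not expect any genuine obstacle: each step is a one-line algebraic manipulation using Equation~(\ref{eq:critical}), entirely parallel to the computations already done for $\pi_2(s,t)$ and $\pi_5(s,t)$. The only points that need a little care are the bookkeeping of which sub-segments are weighted $\alpha$ (inside $R$), weighted $1$ along $\partial R$, or weighted $1$ (outside $R$), and checking $b_2<1$ so that the middle piece does not wrap past the corner $(0,-1)$ -- both of which hold in the regime for which $\pi_7$ is the relevant path type and can be dispatched in a single sentence.
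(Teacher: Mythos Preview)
Your proposal is correct and follows essentially the same approach as the paper: identify the critical-angle point on the left side of $R$ via Equation~(\ref{eq:critical}), then sum the three weighted sub-lengths and simplify. The only cosmetic difference is your sign convention (you write the exit point as $(0,-b_2)$ with $b_2>0$, whereas the paper writes it as $(0,b_1)$ with $b_1<0$); the algebra is otherwise line-for-line the same.
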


\begin{proof}
    Let $ (0,b_1) $ be the point where $ \pi_7(s,t) $ leaves $R$ for the first time, i.e., the first vertex of $ \pi_7(s,t) $. 
    Since $ b_1 < 0 $, and $ s_x > 0 $, we obtain the coordinates of this first vertex by using Equation (\ref{eq:critical}):
    \begin{equation}
        \label{eq:bendingpoint4}
        \tan{\theta_c} = \frac{\lvert b_1 \rvert}{\lvert s_x \rvert} = \frac{-b_1}{s_x} = \frac{1}{\sqrt{\alpha^2-1}} \Rightarrow b_1 = -\frac{s_x}{\sqrt{\alpha^2-1}}.
    \end{equation}
    
    The weight of the shortest path $ \pi_7(s,t) $ is then given by 
    \begin{align*}
        d_7(s,t) &= \alpha\sqrt{s_x^2+b_1^2} + (b_1+1) + \sqrt{t_x^2+(-1-t_y)^2}\\
        &= \frac{\alpha^2 s_x}{\sqrt{\alpha^2-1}} -\frac{s_x}{\sqrt{\alpha^2-1}} + 1 + \sqrt{t_x^2+(-1-t_y)^2}\\
        &= \sqrt{\alpha^2-1}s_x + 1 + \sqrt{t_x^2+(t_y+1)^2}.\forceqed
    \end{align*}
    \renewcommand{\qed}{}
\end{proof}

\begin{lemma}
    \label{lemma:length8}
    The length of $ \pi_8(s,t) $ is given by $ d_8(s,t) = \sqrt{\alpha^2-1}(s_x+t_x) - \sqrt{2-\alpha^2}(1+t_y)+1 $.
\end{lemma}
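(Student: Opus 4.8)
The plan is to follow the same recipe used for $\pi_5(s,t)$ and $\pi_7(s,t)$: read off the combinatorial type of $\pi_8(s,t)$ from Figure~\ref{fig:on_boundary}, determine the coordinates of its vertices from Snell's law together with the critical-angle identities in Equation~(\ref{eq:critical}), and then expand and simplify the sum of the weighted segment lengths. Here $\pi_8(s,t)$ is the path that leaves $s=(s_x,0)$, crosses the interior of $R$ and meets the left side of $R$ at a point $p_1=(0,b_1)$ at the critical angle, then travels along the left side down to a point $p_2=(0,b_2)$, enters the interior of $R$ again (once more at the critical angle, by reversibility of Snell's law), crosses $R$ a second time, leaves it through the bottom side at $p_3=(c,-1)$, and finally runs as a straight segment through the unbounded region to $t=(t_x,t_y)$, which therefore lies below $R$, so $t_y<-1$. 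One has $-1<b_2<b_1<0$ and $c>0$, so the weighted length is
\[
  d_8(s,t)=\alpha\sqrt{s_x^2+b_1^2}+(b_1-b_2)+\alpha\sqrt{c^2+(b_2+1)^2}+\sqrt{(t_x-c)^2+(t_y+1)^2}.
\]

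First I would fix $b_1$ exactly as in the proof for $\pi_7(s,t)$: the segment $\overline{sp_1}$ hits the vertical left side at the critical angle, so by Equation~(\ref{eq:bendingpoint4}) $b_1=-s_x/\sqrt{\alpha^2-1}$, and hence $\alpha\sqrt{s_x^2+b_1^2}=\alpha^2 s_x/\sqrt{\alpha^2-1}$. Next, the segment $\overline{p_2p_3}$ enters $R$ through the left side at the critical angle, so $\tan\theta_c=(b_2+1)/c$, which gives $c=(b_2+1)\sqrt{\alpha^2-1}$ and therefore $\alpha\sqrt{c^2+(b_2+1)^2}=\alpha^2(b_2+1)$. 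The same segment meets the horizontal bottom side at $p_3$ with angle of incidence $\theta$ satisfying $\tan\theta=c/(b_2+1)=\sqrt{\alpha^2-1}$, that is, $\sin\theta=\sqrt{\alpha^2-1}/\alpha$, which is strictly less than $\sin\theta_c=1/\alpha$ precisely because $\alpha<\sqrt2$; thus the ray does refract out of $R$. Snell's law $\alpha\sin\theta=\sin\theta'$ at $p_3$ then yields $\sin\theta'=\sqrt{\alpha^2-1}$, $\cos\theta'=\sqrt{2-\alpha^2}$, and $\tan\theta'=\sqrt{\alpha^2-1}/\sqrt{2-\alpha^2}$. Applying this to the outgoing segment $\overline{p_3t}$ gives $t_x-c=-(t_y+1)\sqrt{\alpha^2-1}/\sqrt{2-\alpha^2}$, which both pins down $c$ in terms of $t$ and, together with $c=(b_2+1)\sqrt{\alpha^2-1}$, gives $b_2+1=t_x/\sqrt{\alpha^2-1}+(t_y+1)/\sqrt{2-\alpha^2}$; moreover $(t_x-c)^2+(t_y+1)^2=(t_y+1)^2/(2-\alpha^2)$, so the last term equals $-(t_y+1)/\sqrt{2-\alpha^2}$.

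It then remains to substitute these values into the formula for $d_8(s,t)$ and simplify. The first two terms combine to $\sqrt{\alpha^2-1}\,s_x-b_2$, the third is $\alpha^2(b_2+1)$, and substituting the value of $b_2+1$ makes the dependence on $\alpha^2$ and on $b_2$ cancel, leaving $\sqrt{\alpha^2-1}(s_x+t_x)-\sqrt{2-\alpha^2}(1+t_y)+1$, as claimed; since every quantity in this derivation is obtained from $s_x,t_x,t_y,\alpha$ by the four arithmetic operations and square roots, $d_8(s,t)$ is computable in the \ACMQ. I expect the only real difficulty to be the combinatorial bookkeeping rather than the algebra: one must recognise from Figure~\ref{fig:on_boundary} that $\pi_8$ crosses the interior of $R$ twice (so that two of its segments carry weight $\alpha$) and that the critical-angle condition holds at both $p_1$ and $p_2$, and one must track the signs of $b_1,b_2,c$ and $t_y+1$ so that all absolute values resolve as written; the simplification that follows is routine.
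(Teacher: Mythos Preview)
Your proposal is correct and follows essentially the same approach as the paper: you identify the same four vertices (your $c$ is the paper's $b_3$), derive their coordinates from the critical-angle identity and Snell's law at the bottom edge, obtain the same intermediate expressions for each weighted segment, and simplify to the stated formula. The only differences are cosmetic---your algebra is organised a bit more compactly, and you add the small observation that $\sin\theta=\sqrt{\alpha^2-1}/\alpha<1/\alpha=\sin\theta_c$ precisely because $\alpha<\sqrt{2}$, which justifies that refraction (rather than total internal reflection) occurs at $p_3$; the paper leaves this implicit.
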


\begin{proof}
    Let $ (0,b_1) $ be the point where $ \pi_8(s,t) $ leaves $R$ for the first time and let $ (0,b_2) $ and $ (b_3,-1) $ be, respectively, the points where $ \pi_8(s,t) $ enters and leaves $R$ for the second time, see Figure~\ref{fig:bisector45}. As $\pi_7(s,t)$ and $\pi_8(s,t)$ overlap up to $b_2$, Equation~(\ref{eq:bendingpoint4}) gives us that $b_1 = -\frac{s_x}{\sqrt{\alpha^2-1}}$.
    \begin{figure}[tb]
        \centering
        \includegraphics[page=2]{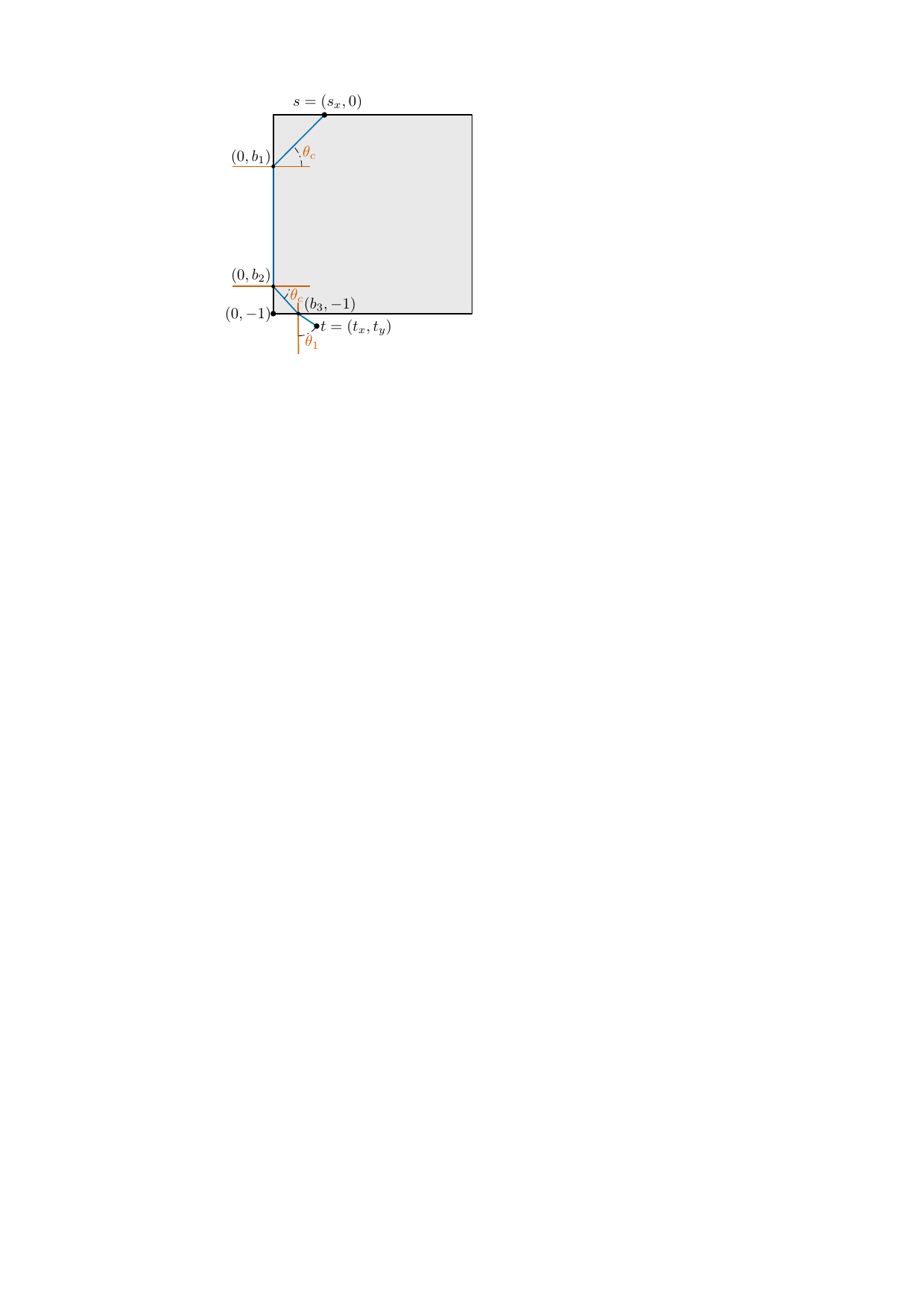}
        \caption{Illustration of the notation used for $\pi_8(s,t)$ in Lemma \ref{lemma:length8}.}
        \label{fig:bisector45}
      \end{figure}
      
     Recall that $R$ has height $ 1 $. Let $ \theta_1 $ be the angle at which the shortest path $ \pi_8(s,t) $ leaves~$R$ for the last time with respect to the normal, see Figure~\ref{fig:bisector45}. Then, $\tan{\theta_1} = \frac{\sqrt{\alpha^2-1}}{\sqrt{2-\alpha^2}}$, similar to Equation~(\ref{eq:tantheta2}). So,
    \begin{equation*}
        \tan{\theta_1} = \frac{\lvert t_x-b_3 \rvert}{\lvert t_y+1 \rvert} = \frac{t_x-b_3}{-1-t_y} = \frac{\sqrt{\alpha^2-1}}{\sqrt{2-\alpha^2}} \Rightarrow b_3 = t_x + (1+t_y)\frac{\sqrt{\alpha^2-1}}{\sqrt{2-\alpha^2}}.
        \end{equation*}
        
    And, using Equations (\ref{eq:critical}), we get
    \begin{equation*}
        \tan{\theta_c} = \frac{\lvert -1-b_2 \rvert}{\lvert b_3 \rvert} = \frac{b_2+1}{b_3} = \frac{1}{\sqrt{\alpha^2-1}} \Rightarrow b_2=\frac{b_3}{\sqrt{\alpha^2-1}}-1=\frac{t_x}{\sqrt{\alpha^2-1}}+\frac{1+t_y}{\sqrt{2-\alpha^2}}-1.
    \end{equation*}
        
    The weight of the shortest path $ \pi_8(s,t) $ is given by $ d_8(s,t) = \alpha\sqrt{s_x^2+b_1^2} + |b_2-b_1| + \alpha\sqrt{b_3^2+(b_2+1)^2} + \sqrt{(b_3-t_x)^2+(-1-t_y)^2} $. Using the expression for $b_1$, we have that
    \begin{equation}
        \label{eq:distancesz}
         \sqrt{s_x^2+b_1^2} = \sqrt{s_x^2+\left(-\frac{s_x}{\sqrt{\alpha^2-1}}\right)^2} = \sqrt{\frac{\alpha^2-1+1}{\alpha^2-1}s_x^2} = \frac{\alpha s_x}{\sqrt{\alpha^2-1}}.
    \end{equation}
    
    Using our expressions for $b_2$ and $b_3$, and the fact that $ b_2 + 1 > 0 $, we obtain the following for the second and third square root terms:
    \begin{align}
        \label{eq:distanceyx}
         \sqrt{b_3^2+(b_2+1)^2} &= \sqrt{\left(t_x+(1+t_y)\frac{\sqrt{\alpha^2-1}}{\sqrt{2-\alpha^2}}\right)^2+\left(\frac{t_x}{\sqrt{\alpha^2-1}}+\frac{1+t_y}{\sqrt{2-\alpha^2}}-1+1\right)^2}\nonumber\\
         &= \sqrt{(\alpha^2-1)\left(\frac{t_x}{\sqrt{\alpha^2-1}}+\frac{1+t_y}{\sqrt{2-\alpha^2}}\right)^2+\left(\frac{t_x}{\sqrt{\alpha^2-1}}+\frac{1+t_y}{\sqrt{2-\alpha^2}}\right)^2}\nonumber\\
         &=\sqrt{\alpha^2\left(\frac{t_x}{\sqrt{\alpha^2-1}}+\frac{1+t_y}{\sqrt{2-\alpha^2}}\right)^2} = \alpha\left(\frac{t_x}{\sqrt{\alpha^2-1}}+\frac{1+t_y}{\sqrt{2-\alpha^2}}\right),
    \end{align}
    
    and
    \begin{align}
        \label{eq:distancext}
        \sqrt{(b_3-t_x)^2+(-1-t_y)^2} &= \sqrt{\left(t_x+(1+t_y)\frac{\sqrt{\alpha^2-1}}{\sqrt{2-\alpha^2}}-t_x\right)^2+(-1-t_y)^2}\nonumber\\
        &= \sqrt{(1+t_y)^2\frac{\alpha^2-1}{2-\alpha^2}+(1+t_y)^2} = \sqrt{(1+t_y)^2\frac{\alpha^2-1+2-\alpha^2}{2-\alpha^2}}\nonumber\\
        &= \frac{\lvert 1+t_y \rvert}{\sqrt{2-\alpha^2}}.
    \end{align}
    
    Using Equations (\ref{eq:distancesz}), (\ref{eq:distanceyx}), and (\ref{eq:distancext}), we get that the weighted length of the path $ \pi_8(s,t) $ is given by
    \begin{align*}
        d_8(s,t) &= \frac{\alpha^2 s_x}{\sqrt{\alpha^2-1}} -\frac{s_x+t_x}{\sqrt{\alpha^2-1}}-\frac{1+t_y}{\sqrt{2-\alpha^2}} + 1 + \alpha^2\left(\frac{t_x}{\sqrt{\alpha^2-1}}+\frac{1+t_y}{\sqrt{2-\alpha^2}}\right) + \frac{\lvert 1+t_y \rvert}{\sqrt{2-\alpha^2}}\\
        &= \alpha^2\frac{s_x+t_x}{\sqrt{\alpha^2-1}} -\frac{s_x+t_x}{\sqrt{\alpha^2-1}}+(\alpha^2-1)\frac{1+t_y}{\sqrt{2-\alpha^2}} + 1 - \frac{1+t_y}{\sqrt{2-\alpha^2}}\\
        &= (\alpha^2-1)\frac{s_x+t_x}{\sqrt{\alpha^2-1}}+(\alpha^2-2)\frac{1+t_y}{\sqrt{2-\alpha^2}} + 1\\
        &= \sqrt{\alpha^2-1}(s_x+t_x) - \sqrt{2-\alpha^2}(1+t_y)+1.\forceqed
    \end{align*}
    \renewcommand{\qed}{}
\end{proof}

\begin{lemma}
   The length of $ \pi_{10}(s,t) $ is given by $ d_{10}(s,t) = \alpha\sqrt{(s_x-w_2)^2+1}+\sqrt{(t_x-w_2)^2+(t_y+1)^2} $ where $w_2$ is the unique real solution in the interval $ (t_x, s_x) $ to the equation 
    \begin{equation*}
    \begin{split}
        &(\alpha^2-1)w_2^4-2(\alpha^2-1)(t_x+s_x)w_2^3+\left[(\alpha^2-1)(s_x^2+t_x^2+4s_xt_x)+\alpha^2(1+t_y)^2-1\right]w_2^2\\
        &-2\left[(\alpha^2-1)(t_xs_x^2+t_x^2s_x)+\alpha^2(1+t_y)^2s_x-t_x\right]w_2+(\alpha^2-1)t_x^2s_x^2+\alpha^2(1+t_y)^2s_x^2-t_x^2=0.
    \end{split}
    \end{equation*}
\end{lemma}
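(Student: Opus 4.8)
The plan is to mimic the structure of the proofs for $\pi_6$ and $\pi_8$: first fix the combinatorial shape of $\pi_{10}(s,t)$ from Figure~\ref{fig:on_boundary}, namely a path that leaves $R$ through the bottom side at the point $(w_2,-1)$ and whose only non-critical vertex is that one, then apply Snell's law there and simplify. Concretely, $\pi_{10}(s,t)$ consists of a segment from $s=(s_x,0)$ straight down through $R$ to $(w_2,-1)$, where it refracts into the unbounded region and goes straight to $t=(t_x,t_y)$ with $t_y<-1$. (The fact that the interior portion is the single segment $\overline{s(w_2,-1)}$, rather than something with an extra bend, is exactly the type-$10$ picture; I would state this and refer to the figure, as the other lemmas do.) The weighted length is then immediately $d_{10}(s,t)=\alpha\sqrt{(s_x-w_2)^2+1}+\sqrt{(t_x-w_2)^2+(t_y+1)^2}$, so the entire content of the lemma is the polynomial equation characterizing $w_2$ and the claim that it has a unique root in $(t_x,s_x)$.

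For the equation, I would write Snell's law at $(w_2,-1)$ with the normal being the vertical direction (the bottom side of $R$ is horizontal). If $\theta_1$ is the angle of incidence inside $R$ and $\theta_2$ the angle of refraction outside, then $\sin\theta_1 = |s_x-w_2|/\sqrt{(s_x-w_2)^2+1}$ and $\sin\theta_2 = |t_x-w_2|/\sqrt{(t_x-w_2)^2+(t_y+1)^2}$, and Snell's law in the form $\alpha\sin\theta_1=\sin\theta_2$ gives
\begin{equation*}
  \alpha\,\frac{|s_x-w_2|}{\sqrt{(s_x-w_2)^2+1}} \;=\; \frac{|t_x-w_2|}{\sqrt{(t_x-w_2)^2+(t_y+1)^2}}.
\end{equation*}
Squaring both sides and cross-multiplying yields
\begin{equation*}
  \alpha^2(s_x-w_2)^2\bigl((t_x-w_2)^2+(t_y+1)^2\bigr) \;=\; (t_x-w_2)^2\bigl((s_x-w_2)^2+1\bigr),
\end{equation*}
and then I would expand both sides as polynomials in $w_2$, collect terms by degree, and check that the coefficients match the stated quartic. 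The leading $w_2^4$ terms are $\alpha^2 w_2^4$ on the left and $w_2^4$ on the right, giving the coefficient $\alpha^2-1=\beta$; the lower-order coefficients should fall out after a somewhat tedious but routine expansion, and I would present only the setup and the final collected form, remarking that the intermediate algebra is straightforward.

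For the uniqueness claim, the argument should parallel whatever is used for $w_1$ in the $\pi_6$ lemma: geometrically, as the crossing point moves along the bottom edge from directly below $s$ toward directly below $t$, the angle of incidence $\theta_1$ decreases monotonically from its maximum to $0$ while the required refraction angle $\theta_2$ increases monotonically from $0$, so the function $\alpha\sin\theta_1-\sin\theta_2$ is strictly monotone on $(t_x,s_x)$ and changes sign exactly once; hence there is a unique $w_2\in(t_x,s_x)$ satisfying Snell's law, and it is the unique root of the quartic in that interval (the quartic may have other real roots, but they lie outside $(t_x,s_x)$ and correspond to spurious solutions introduced by squaring). I would also note that $w_2$ is computable in the \ACMQ since it is the root of an explicit quartic with coefficients rational in $s_x,t_x,t_y,\alpha$, and quartics are solvable by radicals; this is the sentence that connects the lemma back to Theorem~\ref{thm:length}. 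The main obstacle is purely bookkeeping: getting the expansion of the degree-$4$ polynomial identity exactly right so that every coefficient matches Equation~\eqref{eq:d9}, in particular the $w_2^2$ coefficient $\beta(s_x^2+t_x^2+4s_xt_x)+\alpha^2(1+t_y)^2-1$, where the cross terms from $(s_x-w_2)^2(t_x-w_2)^2$ combine with the $\alpha^2(t_y+1)^2(s_x-w_2)^2$ term; I would double-check that step carefully rather than the trigonometric setup, which is essentially identical to the $\pi_6$ case.
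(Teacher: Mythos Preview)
Your approach is correct and essentially identical to the paper's: set the exit point at $(w_2,-1)$, write Snell's law $\alpha\sin\theta_1=\sin\theta_2$ with the sines expressed as the obvious ratios, square and expand to obtain the quartic, and read off the length formula. You actually go further than the paper by sketching the monotonicity argument for uniqueness (the paper simply asserts uniqueness without proof); note, however, that your description of the direction of monotonicity is reversed---at $w_2=s_x$ the incidence angle $\theta_1$ is $0$, not maximal, and it \emph{increases} as $w_2$ moves toward $t_x$---though the conclusion (strict monotonicity of $\alpha\sin\theta_1-\sin\theta_2$ and hence a unique zero in $(t_x,s_x)$) is unaffected.
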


\begin{proof}
    Let $ (w_2,-1) $ be the point where $ \pi_{10}(s,t) $ leaves $R$, and let $ \theta_1 $ and $ \theta_2 $ be, respectively, the angles of incidence and refraction at $(w_2,-1)$. Then, by Snell's law of refraction, we get that:
    \begin{align*}
        \alpha \sin{\theta_1} & = \sin{\theta_2} \Rightarrow \alpha\frac{s_x-w_2}{\sqrt{(s_x-w_2)^2+1}} = \frac{w_2-t_x}{\sqrt{(w_2-t_x)^2+(-1-t_y)^2}} \\
        & \Rightarrow \alpha^2(w_2-t_x)^2(s_x-w_2)^2+\alpha^2(-1-t_y)^2(s_x-w_2)^2=(w_2-t_x)^2(s_x-w_2)^2+(w_2-t_x)^2\\
        & \Rightarrow (\alpha^2-1)(w_2-t_x)^2(s_x-w_2)^2+\alpha^2(-1-t_y)^2(s_x-w_2)^2-(w_2-t_x)^2=0\\
        & \Rightarrow \left[(\alpha^2-1)w_2^2+(\alpha^2-1)t_x^2-2(\alpha^2-1)t_xw_2\right](s_x^2+w_2^2-2s_xw_2)+\alpha^2(-1-t_y)^2s_x^2\\
        & +\alpha^2(-1-t_y)^2w_2^2-2\alpha^2(-1-t_y)^2s_xw_2-w_2^2-t_x^2+2t_xw_2 = 0\\
        & \Rightarrow (\alpha^2-1)w_2^2s_x^2+(\alpha^2-1)w_2^4-2(\alpha^2-1)s_xw_2^3+(\alpha^2-1)t_x^2s_x^2+(\alpha^2-1)t_x^2w_2^2\\
        & -2(\alpha^2-1)t_x^2s_xw_2 -2(\alpha^2-1)t_xs_x^2w_2-2(\alpha^2-1)t_xw_2^3+4(\alpha^2-1)s_xt_xw_2^2\\
        & +\alpha^2(-1-t_y)^2s_x^2+\alpha^2(-1-t_y)^2w_2^2-2\alpha^2(-1-t_y)^2s_xw_2-w_2^2-t_x^2+2t_xw_2=0.
    \end{align*}
    
    Hence,
    \begin{align*}
        \label{eq:bendingpoint}
        &(\alpha^2-1)w_2^4-2(\alpha^2-1)(t_x+s_x)w_2^3+\left[(\alpha^2-1)(s_x^2+t_x^2+4s_xt_x)+\alpha^2(1+t_y)^2-1\right]w_2^2 \nonumber\\
        & -2\left[(\alpha^2-1)(t_xs_x^2+t_x^2s_x)+\alpha^2(1+t_y)^2s_x-t_x\right]w_2+(\alpha^2-1)t_x^2s_x^2+\alpha^2(1+t_y)^2s_x^2-t_x^2=0.
    \end{align*}
    
    Finally, we get that the weighted length of the shortest path $ \pi_{10}(s,t) $ is given by
    \begin{equation*}
        \label{eq:equalpaths}
        d_{10}(s,t) = \alpha\sqrt{(s_x-w_2)^2+1}+\sqrt{(t_x-w_2)^2+(1+t_y)^2}.\forceqed
    \end{equation*}
    \renewcommand{\qed}{}
\end{proof}

\begin{lemma}
    The length of $ \pi_{11}(s,t) $ is given by $ d_{11}(s,t) = s_x-t_x-\sqrt{\alpha^2-1}t_y $.
\end{lemma}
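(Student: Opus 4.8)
The plan is to follow the same template used for $\pi_2(s,t)$ and $\pi_7(s,t)$: pin down the single bend vertex of $\pi_{11}(s,t)$ in closed form via the critical-angle identities of Equation~(\ref{eq:critical}), then substitute into the weighted-length expression and simplify. Recall that $\pi_{11}(s,t)$ is the path type in which $t$ lies \emph{inside} $R$: starting from $s=(s_x,0)$ on the top side, the path runs along the top side of $R$ to a point $(b,0)$ and then enters $R$, going straight to $t=(t_x,t_y)$ (with $t_x < s_x$ and $t_y < 0$). Since $\alpha>1$ here, the portion along the boundary edge has weight $\min\{1,\alpha\}=1$, whereas the segment inside $R$ has weight $\alpha$.

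First I would determine $b$. By the critical-angle property, the segment from $(b,0)$ to $t$ makes angle $\theta_c$ with the vertical normal at $(b,0)$, so $\tan\theta_c = \frac{b-t_x}{-t_y}$ (using $b>t_x$ and $t_y<0$). Equivalently, one can obtain this by directly minimizing $(s_x-b)+\alpha\sqrt{(b-t_x)^2+t_y^2}$ over $b$. Substituting $\tan\theta_c = \frac{1}{\sqrt{\alpha^2-1}}$ yields $b = t_x - \frac{t_y}{\sqrt{\alpha^2-1}}$; one checks $t_x < b < s_x$, confirming that a path of this type exists.

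Then I would compute $d_{11}(s,t) = (s_x - b) + \alpha\sqrt{(b-t_x)^2+t_y^2}$. The key simplification is $(b-t_x)^2 + t_y^2 = \frac{t_y^2}{\alpha^2-1} + t_y^2 = \frac{\alpha^2 t_y^2}{\alpha^2-1}$, so the square-root term equals $\frac{-\alpha t_y}{\sqrt{\alpha^2-1}}$. Plugging in the value of $b$ and collecting terms gives $d_{11}(s,t) = s_x - t_x + \frac{t_y}{\sqrt{\alpha^2-1}} - \frac{\alpha^2 t_y}{\sqrt{\alpha^2-1}} = s_x - t_x - \frac{(\alpha^2-1)t_y}{\sqrt{\alpha^2-1}} = s_x - t_x - \sqrt{\alpha^2-1}\,t_y$, as claimed.

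I do not expect a genuine obstacle: once the path type is fixed the argument is a short, routine calculation. The only points requiring care are (i) recognizing that travel along the boundary edge costs weight $1$ rather than $\alpha$ because $\alpha>1$, and (ii) resolving the signs of $b-t_x$, $s_x-b$, and $t_y$ so that the absolute values collapse correctly --- exactly the bookkeeping that recurs in the other lemmas of this section.
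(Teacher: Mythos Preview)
Your proposal is correct and follows essentially the same argument as the paper: both locate the single bend point $(b,0)$ via the critical-angle identity $\tan\theta_c=\tfrac{1}{\sqrt{\alpha^2-1}}$ to get $b=t_x-\tfrac{t_y}{\sqrt{\alpha^2-1}}$, then substitute into $(s_x-b)+\alpha\sqrt{(b-t_x)^2+t_y^2}$ and simplify using $(b-t_x)^2+t_y^2=\tfrac{\alpha^2 t_y^2}{\alpha^2-1}$. Your explicit remark that the edge segment has weight~$1$ because $\alpha>1$ is a helpful clarification the paper leaves implicit.
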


\begin{proof}
    Let $ (b_1,0) $ be the point where $ \pi_{11}(s,t) $ enters $R$. Let $ \theta_c $ be the angle at which $ \pi_{11}(s,t) $ enters $R$. Since  $ \theta_c $ is the critical angle, using Equation (\ref{eq:critical}), we get the value of $ b_1 $:
    \begin{equation}
        \label{eq:bendingpoint6}
        \tan{\theta_c} = \frac{\lvert b_1-t_x \rvert}{\lvert t_y \rvert} = \frac{b_1-t_x}{-t_y} = \frac{1}{\sqrt{\alpha^2-1}} \Rightarrow b_1 = t_x-\frac{t_y}{\sqrt{\alpha^2-1}}.
    \end{equation}
    
    We know that the weight of the shortest path $ \pi_{11}(s,t) $ is given by $ d_{11}(s,t) = |s_x-b_1| + \alpha\sqrt{(b_1-t_x)^2+t_y^2} $. By using Equation (\ref{eq:bendingpoint6}), we have that
    \begin{align*}
        d_{11}(s,t) & = \left(s_x-t_x+\frac{t_y}{\sqrt{\alpha^2-1}}\right) + \alpha\sqrt{\left(t_x-\frac{t_y}{\sqrt{\alpha^2-1}}-t_x\right)^2+t_y^2} \\
        & = \left(s_x-t_x+\frac{t_y}{\sqrt{\alpha^2-1}}\right) + \alpha\sqrt{\frac{t_y^2}{\alpha^2-1}+t_y^2}\\
        & = s_x-t_x+\frac{t_y}{\sqrt{\alpha^2-1}} + \alpha\sqrt{\frac{\alpha^2t_y^2}{\alpha^2-1}} = s_x-t_x+\frac{t_y}{\sqrt{\alpha^2-1}} + \frac{\alpha^2 \lvert t_y \rvert}{\sqrt{\alpha^2-1}}\\
        & = s_x-t_x+\frac{t_y}{\sqrt{\alpha^2-1}} - \frac{\alpha^2 t_y}{\sqrt{\alpha^2-1}} = s_x-t_x - \frac{(\alpha^2-1) t_y}{\sqrt{\alpha^2-1}}\\
        &= s_x-t_x-\sqrt{\alpha^2-1}t_y.\forceqed
    \end{align*}
    \renewcommand{\qed}{}
\end{proof}
\clearpage
\begin{lemma}
    \label{lem:path12}
    The length of $ \pi_{13}(s,t) $ is given by $ d_{13}(s,t) = \sqrt{\alpha^2-1}(s_x+t_x)-t_y $.
\end{lemma}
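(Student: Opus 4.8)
The plan is to read off the combinatorial structure of $\pi_{13}(s,t)$ from Figure~\ref{fig:on_boundary} and then add up its pieces. As depicted there, $\pi_{13}(s,t)$ leaves $s=(s_x,0)$ into the interior of $R$, hits the left side of $R$ at a point $(0,b_1)$, then travels down along the left side of $R$ to a point $(0,b_2)$, and finally re-enters $R$ and reaches $t=(t_x,t_y)$ inside $R$. Since the ray proceeds along the left edge after $(0,b_1)$, the first segment is incident to the left side at the critical angle $\theta_c$; symmetrically, the last segment leaves the left side at angle $\theta_c$. Along the left side the weight is $\min\{1,\alpha\}=1$ because $\alpha>1$ here (the appearance of $\sqrt{\alpha^2-1}$ forces $\alpha>1$).

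First I would locate the two vertices on the left side. The path $\pi_{13}(s,t)$ coincides with $\pi_7(s,t)$ from $s$ up to the first vertex $(0,b_1)$ (and then along the left side), so Equation~(\ref{eq:bendingpoint4}) already gives $b_1=-s_x/\sqrt{\alpha^2-1}$. For $b_2$, the last segment makes the critical angle with the horizontal normal to the left side, so $\tan\theta_c=\lvert b_2-t_y\rvert/\lvert t_x\rvert$; plugging in Equation~(\ref{eq:critical}) together with $t_y<b_2$ and $t_x\ge 0$ yields $b_2=t_y+t_x/\sqrt{\alpha^2-1}$.

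Next I would compute the three weighted lengths. Using $\cos\theta_c=\sqrt{\alpha^2-1}/\alpha$ (or, equivalently, substituting the expressions for $b_1$ and $b_2$ directly, exactly as in the computations of $d_7(s,t)$ and $d_8(s,t)$), the first segment has weighted length $\alpha\sqrt{s_x^2+b_1^2}=\alpha^2 s_x/\sqrt{\alpha^2-1}$ and the last segment has weighted length $\alpha\sqrt{t_x^2+(b_2-t_y)^2}=\alpha^2 t_x/\sqrt{\alpha^2-1}$, while the middle segment along the left side contributes $b_1-b_2=-(s_x+t_x)/\sqrt{\alpha^2-1}-t_y$. Summing the three terms gives
\[
  d_{13}(s,t)=\frac{\alpha^2-1}{\sqrt{\alpha^2-1}}(s_x+t_x)-t_y=\sqrt{\alpha^2-1}(s_x+t_x)-t_y,
\]
as claimed.

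The one place to be careful — and the main (albeit minor) obstacle — is the bookkeeping of signs: one must use $s_x>0$, $t_x\ge 0$, $t_y<0$, and the fact that for this path type $b_2<b_1<0$, so that the middle segment is traversed downward and has positive length $b_1-b_2$. These inequalities are precisely what characterises when type $13$ occurs in Observation~\ref{obs:typesboundary} (in particular $\lvert t_y\rvert\sqrt{\alpha^2-1}>s_x+t_x$), so the formula is only asserted within that regime; beyond this sign bookkeeping there is no genuinely hard computation once the path structure is fixed.
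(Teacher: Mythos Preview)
Your proof is correct and follows essentially the same approach as the paper: you identify the two critical-angle vertices $(0,b_1)$ and $(0,b_2)$ on the left side of $R$, obtain $b_1=-s_x/\sqrt{\alpha^2-1}$ from Equation~(\ref{eq:bendingpoint4}) and $b_2=t_y+t_x/\sqrt{\alpha^2-1}$ from Equation~(\ref{eq:critical}), and then sum the three weighted pieces exactly as the paper does. Your explicit discussion of the sign conventions and of the inequality $b_2<b_1$ characterising the regime of type~$13$ is, if anything, slightly more careful than the paper's own write-up.
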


\begin{proof}
    Let $ (0,b_1) $ and $ (0,b_2)$ be the points where $ \pi_{13}(s,t) $ leaves and enters for the second time, respectively, the region $R$. From Lemma \ref{lemma:length8} we know that $ b_1=-\frac{s_x}{\sqrt{\alpha^2-1}}$. Using Equation~(\ref{eq:critical}), we find:
    \begin{equation*}
        \tan{\theta_c} = \frac{\lvert t_y-b_2 \rvert}{\lvert t_x \rvert} = \frac{b_2-t_y}{t_x} = \frac{1}{\sqrt{\alpha^2-1}} \Rightarrow b_2=\frac{t_x}{\sqrt{\alpha^2-1}}+t_y.
    \end{equation*}
    
    We then get that the weight of $ \pi_{13}(s,t) $ is given by $ d_{13}(s,t) = \alpha\sqrt{s_x^2+b_1^2} + |b_2-b_1| + \alpha\sqrt{t_x^2+(b_2-t_y)^2} $. So:
    \begin{align*}
        d_{13}(s,t) &= \alpha\sqrt{s_x^2+\frac{s_x^2}{\alpha^2-1}}-\frac{s_x}{\sqrt{\alpha^2-1}}-\frac{t_x}{\sqrt{\alpha^2-1}}-t_y+\alpha\sqrt{t_x^2+\left(\frac{t_x}{\sqrt{\alpha^2-1}}+t_y-t_y\right)^2} \\
        = & \alpha\sqrt{\frac{\alpha^2s_x^2}{\alpha^2-1}}-\frac{s_x+t_x}{\sqrt{\alpha^2-1}}-t_y+\alpha\sqrt{t_x^2+\frac{t_x^2}{\alpha^2-1}} = \frac{\alpha^2\lvert s_x \rvert}{\sqrt{\alpha^2-1}}-\frac{s_x+t_x}{\sqrt{\alpha^2-1}}-t_y+\alpha\sqrt{\frac{\alpha^2t_x^2}{\alpha^2-1}}\\
        = & \frac{\alpha^2 s_x}{\sqrt{\alpha^2-1}}-\frac{s_x+t_x}{\sqrt{\alpha^2-1}}-t_y+\frac{\alpha^2t_x}{\sqrt{\alpha^2-1}} = \frac{\alpha^2(s_x+t_x)}{\sqrt{\alpha^2-1}}-\frac{s_x+t_x}{\sqrt{\alpha^2-1}}-t_y\\
        =& \frac{\alpha^2-1}{\sqrt{\alpha^2-1}}(s_x+t_x)-t_y = \sqrt{\alpha^2-1}(s_x+t_x)-t_y.\forceqed
    \end{align*}
    \renewcommand{\qed}{}
\end{proof}

\subsection{The source point $s$ lies inside $R$}
\label{sub:The_source_point_lies_inside}

We now consider the case where $ s $ is restricted to the interior of the rectangle $ R $. 
\begin{observation}
      Let $s$ be a point in a rectangle $R$ with weight $ 0 < \alpha < \sqrt{2} $. There are $ 6 $ types of shortest paths, up to symmetries, namely $ \pi_i(s,t) $, for $ i \in \{6, 7, 8, 9, 10, 12, 13\} $. 
\end{observation}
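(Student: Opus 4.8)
The plan is to reduce to the boundary case of Observation~\ref{obs:typesboundary} by peeling off the first link of the path. By Corollary~\ref{cor:combinatorial_complexity_shortest_path} (which uses Lemma~\ref{prop:visitonce}), $\pi(s,t)$ is a polygonal chain whose interior vertices lie on the four sides of $R$ and which meets each side at most once, so there are only finitely many combinatorial types; we only need to identify which occur when $s\in\mathrm{int}(R)$. Since $s$ is interior, the first link of $\pi(s,t)$ is a straight segment inside $R$, of weight $\alpha$. If $\pi(s,t)\subseteq R$ then, a shortest path within a single region being a segment, $\pi(s,t)=\overline{st}$ and we are in type $\pi_{12}$. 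Otherwise let $v$ be the first vertex of $\pi(s,t)$ lying on $\partial R$, say on the side $e$, so that $\overline{sv}\subseteq\mathrm{int}(R)$ and the subpath $\pi(v,t)\subseteq\pi(s,t)$ is itself a shortest path from the boundary point $v$.

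The key step is to observe that the link of $\pi(s,t)$ immediately after $v$ does not re-enter $\mathrm{int}(R)$: if it did, then since $\overline{sv}$ approaches $v$ from inside $R$, Snell's law would force it to be collinear with $\overline{sv}$, contradicting that $v$ is a vertex. Hence this link lies in the exterior of $R$ or runs along $e$, possibly at the critical angle (which is well defined because $0<\alpha<\sqrt2$; see Equations~(\ref{eq:critical2}) and~(\ref{eq:critical})). Applying a symmetry of $R$ (and, if $e$ is vertical, a rescaling so that $e$ has unit length) to put $v$ and $t$ into the canonical configuration of Section~\ref{sub:The_source_point_lies_on_the_boundary}, $\pi(v,t)$ must therefore be one of the boundary types whose first link is exterior or runs along the top side, i.e.\ one of $\pi_1,\pi_2,\pi_3,\pi_4,\pi_5,\pi_{11}$. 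I would then go through these six possibilities and check, using Snell's law and the critical-angle property at $v$, that prepending the interior segment $\overline{sv}$ turns each globally optimal case into a shortest path of one of the types $\pi_6,\pi_7,\pi_8,\pi_9,\pi_{10},\pi_{13}$ (up to a symmetry of $R$): roughly, an exterior continuation gives $\pi_6$, $\pi_9$ or $\pi_{10}$ according to whether $\pi(v,t)$ leaves $R$ through a side adjacent to $e$, through a corner, or through the opposite side; a continuation running along $e$ and then leaving $R$ gives $\pi_7$; one running along $e$, re-entering $R$ and leaving through an adjacent side gives $\pi_8$; and one running along $e$, re-entering $R$ and reaching $t\in R$ gives $\pi_{13}$. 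Together with $\pi_{12}$ this yields exactly the types in the statement.

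I expect the main obstacle to be this middle step: rigorously matching up each admissible concatenation with one of the listed types, discarding the combinations that are not globally optimal (for instance a path that grazes $e$ at the critical angle and then refracts straight back out, which one must show is dominated), and tracking which symmetry of $R$ reduces each configuration to the canonical one---bearing in mind that the rectangle's symmetry group does not identify its horizontal with its vertical sides. As in the boundary case, it is Lemma~\ref{prop:visitonce} together with the bound $\alpha<\sqrt2$ (so that $R$ is not an obstacle and no link meets a side at more than the critical angle) that keeps the number of boundary interactions bounded and rules out the sharp turns needed to realize any other combinatorial type. Finally one checks that each of $\pi_6,\pi_7,\pi_8,\pi_9,\pi_{10},\pi_{12},\pi_{13}$ is indeed attained for suitable $t$ and $\alpha$, so the list is complete.
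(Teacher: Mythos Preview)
The paper does not prove this statement; it is recorded as an observation and justified only by a reference to Figure~\ref{fig:on_boundary} together with the one-line remark that the surviving types are those ``where the top side of $R$ or the region above $R$ is not intersected.'' Your peeling-off argument supplies exactly this justification and is correct: a boundary type from Observation~\ref{obs:typesboundary} avoids the top side and the half-plane above it precisely when its first link from the boundary point enters $\mathrm{int}(R)$, i.e., when it is one of $\pi_6,\dots,\pi_{10},\pi_{12},\pi_{13}$; your Snell/collinearity step is what pins down that the link immediately after $v$ cannot go back into the interior. So you are not taking a different route but rather spelling out what the paper leaves implicit.

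One small presentational remark: in your matching paragraph the phrase ``$\pi(v,t)$ leaves $R$ through a side adjacent to $e$, through a corner, or through the opposite side'' does not quite parse, since in the straight-exit case $\pi(v,t)$ starts at $v\in e\subset\partial R$ and goes outward with no further contact with $\partial R$. What actually distinguishes $\pi_6$, $\pi_9$, and $\pi_{10}$ for an interior source is whether $e$ itself is a vertical side, a corner, or a horizontal side of $R$; the symmetry you apply to reach the canonical frame is what records this. This is only wording---the case split is sound.
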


The types of shortest paths are similar to the ones defined in Observation~\ref{obs:typesboundary}, see the paths in Figure~\ref{fig:on_boundary} where the top side of $R$ or the region above $R$ is not intersected. As in Theorem~\ref{thm:length}, we can thus compute (the length of) a shortest path (of each type) exactly, albeit that the expressions for the length are dependent on the location of $s$ in $R$. Note that Theorem~\ref{thm:length} gives exact lengths for all path types when $R$ has height $>1$ and $s$ is at distance exactly~1 from the bottom boundary of $R$.

\subsection{The source point $s$ lies outside of $R$}
\label{sub:source_point_outside}

When both the source and the target point are outside of $R$, the shortest path can again be of many different types. In particular, the types in Figure~\ref{fig:on_boundary} can be generalized to this setting. There are two special cases where the shortest path bends \emph{twice}, and these two vertices do not have the critical angle property: it can bend on two opposite sides of the rectangle, or on two incident sides. In the first case, the angles at both vertices are equal, and the shortest path can be computed exactly~\cite{narayanappa2006exact}. For the second case, we show that it is not possible to compute the coordinates of the vertices exactly in the \ACMQ. Hence, the WRP limited to two weights $\{1,\alpha\}$ is not solvable within the \ACMQ. Note that this path type can occur in an even simpler setting, where $R$ is a single quadrant instead of a rectangle.

\begin{theorem}
    \label{thm:nonsolvable}
    The Weighted Region Problem with weights in the set
    $ \{1, \alpha\} $,  with
    $ 0 < \alpha < \sqrt{2} $, and $ \alpha \neq 1 $, cannot be solved exactly within the
    \ACMQ, even if $R$ is a single quadrant.
\end{theorem}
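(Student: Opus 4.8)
The plan is to exhibit a single, very simple instance---with $R$ a quadrant---for which the shortest path from $s$ to $t$ is forced to bend exactly twice, once on each of the two bounding half-lines of the quadrant, and then show that the $x$- or $y$-coordinate of one of those two bend points is a root of a polynomial with rational coefficients that has no root expressible by radicals. Concretely, place the quadrant $R$ as $\{x \le 0, y \le 0\}$ (weight $\alpha$), put $s$ on the positive $x$-axis side and $t$ on the positive $y$-axis side, both outside $R$, positioned so that the cheapest route must cut across the corner of $R$: it travels in the unit-weight region, refracts into $R$ across the horizontal half-line $\{y=0, x<0\}$, travels through $R$, and refracts back out across the vertical half-line $\{x=0, y<0\}$. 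This is exactly the ``two incident sides'' case flagged before the theorem, and by Snell's law at each of the two vertices we get two equations $\sin\theta_1 = \alpha \sin\theta_1'$ and $\alpha\sin\theta_2' = \sin\theta_2$ relating the four coordinates of the two bend points. Eliminating one bend point from the pair of Snell equations yields a single univariate polynomial $P$ whose coefficients are polynomials in $s_x, s_y, t_x, t_y, \alpha$, all of which I will choose to be rational.

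The core of the argument is then a Galois-theoretic obstruction, following the template of de Carufel et al.~\cite{carufel14}: pick explicit rational values of $s_x, s_y, t_x, t_y$ and a rational $\alpha \in (0,\sqrt 2)\setminus\{1\}$ so that the resulting polynomial $P \in \Q[u]$ (where $u$ is, say, the abscissa of the bend point on the vertical half-line) is irreducible over $\Q$, has degree not a power of $2$ after removing any trivially solvable factors, and has Galois group over $\Q$ that is not solvable---e.g.\ $S_5$ for a suitable quintic factor. Since a real number lies in the field of \ACMQ-computable numbers only if it lies in a radical tower over $\Q$, and an element of a radical extension of $\Q$ has solvable Galois group over $\Q$ (and hence so does its minimal polynomial), the coordinate $u$ of the required bend point is not \ACMQ-computable. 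Therefore neither the path $\pi(s,t)$ nor, a fortiori, its length $d(s,t)$ can be computed exactly in the \ACMQ, which establishes the theorem.

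In carrying this out I would proceed in the following order. First, set up coordinates and the two Snell equations at the two bend points, using the angle conventions from Section~\ref{sec:Shortest_paths_and_their_properties}; second, argue geometrically (via Lemma~\ref{prop:visitonce}, the piecewise-linearity of shortest paths, and a convexity/shortcut argument ruling out extra crossings) that for the chosen $s$, $t$ the optimal path genuinely has this two-vertex ``incident sides'' form, so that solving the instance really does require producing these coordinates; third, eliminate to obtain the univariate polynomial $P$ and simplify it; fourth, substitute concrete rational data and verify irreducibility of $P$ (or of the relevant factor) over $\Q$---for instance by reduction modulo a small prime---and compute its Galois group, exhibiting a transposition and a $5$-cycle (again via factorization patterns modulo two primes) to conclude the group is $S_5$ and hence not solvable; fifth, invoke the standard fact linking \ACMQ-computability to solvable-by-radicals to finish.

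The main obstacle I anticipate is the fourth step: after elimination the polynomial $P$ is likely of degree around $6$ to $8$ in $u$ with coefficients that are messy polynomials in the parameters, so I will need to choose the rational data carefully---small numerators and denominators, and an $\alpha$ such as $\alpha^2 = 1/2$ or $\alpha^2 = 5/4$---to get a clean irreducible factor whose Galois group is demonstrably $S_5$; there is also the subtlety that $P$ may acquire ``spurious'' factors from the squaring steps (extraneous sign branches in Snell's law) or from degenerate geometric configurations, which must be identified and discarded so that the non-solvable factor is the one actually corresponding to the true bend point in the admissible range. A secondary, more bookkeeping-level obstacle is making the geometric optimality argument in step two fully rigorous: one must confirm that no competing path type from Figure~\ref{fig:on_boundary} (in particular the ``opposite sides'' case of~\cite{narayanappa2006exact}, or a path going around the corner entirely in the unit-weight region) beats the two-incident-sides path for the chosen instance, which can be done by a direct length comparison once the rational data are fixed.
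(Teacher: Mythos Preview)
Your plan is correct and is essentially the same approach the paper takes: fix a concrete rational instance in which the optimal path must refract on both incident half-lines of the quadrant, derive via Snell's law a single univariate polynomial for (a trigonometric function of) one incidence angle, and then certify non-solvability of its Galois group by factorization patterns modulo small primes, following de~Carufel et~al.

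A few points where the paper's execution differs from your expectations are worth flagging. First, after the squaring steps the paper obtains a degree-$11$ polynomial (in $u=\sin\theta_1$), not a quintic, so the ``exhibit a transposition and a $5$-cycle to get $S_5$'' template does not apply directly; instead the paper invokes the criterion of de~Carufel et~al.\ for odd degree $d\ge 5$ (irreducible mod one prime, a $1{+}(d{-}1)$ split mod a second, a $2{+}(d{-}2)$ split mod a third), which is exactly the kind of mod-$p$ certificate you have in mind but tailored to higher degree. Second, the paper simply fixes $\alpha=6/5$ and integer coordinates and lets a computer-algebra system do the elimination and the three modular factorizations; your anticipated ``clean quintic factor'' is optimistic, and in practice you should expect to work with the full high-degree polynomial rather than a neat degree-$5$ piece. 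Third, the paper is silent on your ``step two'' (verifying that the two-incident-sides path really is optimal for the chosen data); your instinct to check this explicitly is sound and would make the argument more self-contained, but it is a routine finite comparison once the numbers are fixed.
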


\begin{proof}
Consider the situation where a horizontal and a vertical line intersect at the point $ O = (50,150) $. Let $ R $ be the quadrant such that $ O $ is its top-left corner, and has weight $ \alpha = 1.2 $. Recall that the weight outside $ R $ is $ 1 $. Let $ s=(0,0) $ be the source point and $ t = (200,200) $ be the target point, see Figure~\ref{fig:notation_polynomial_exact}. We follow the approach of~\cite{carufel14} to show that the polynomial that represents a solution to the Weighted Region Problem in this situation is not solvable within the \ACMQ. The following lemma, which is a consequence of Theorem $ 1 $ and Lemma $ 2 $ of~\cite{carufel14}, see also~\cite{bajaj1988algebraic,dummit2004abstract}, states when a polynomial is unsolvable within the \ACMQ.
\begin{figure}
    \centering
    \includegraphics[page=2]{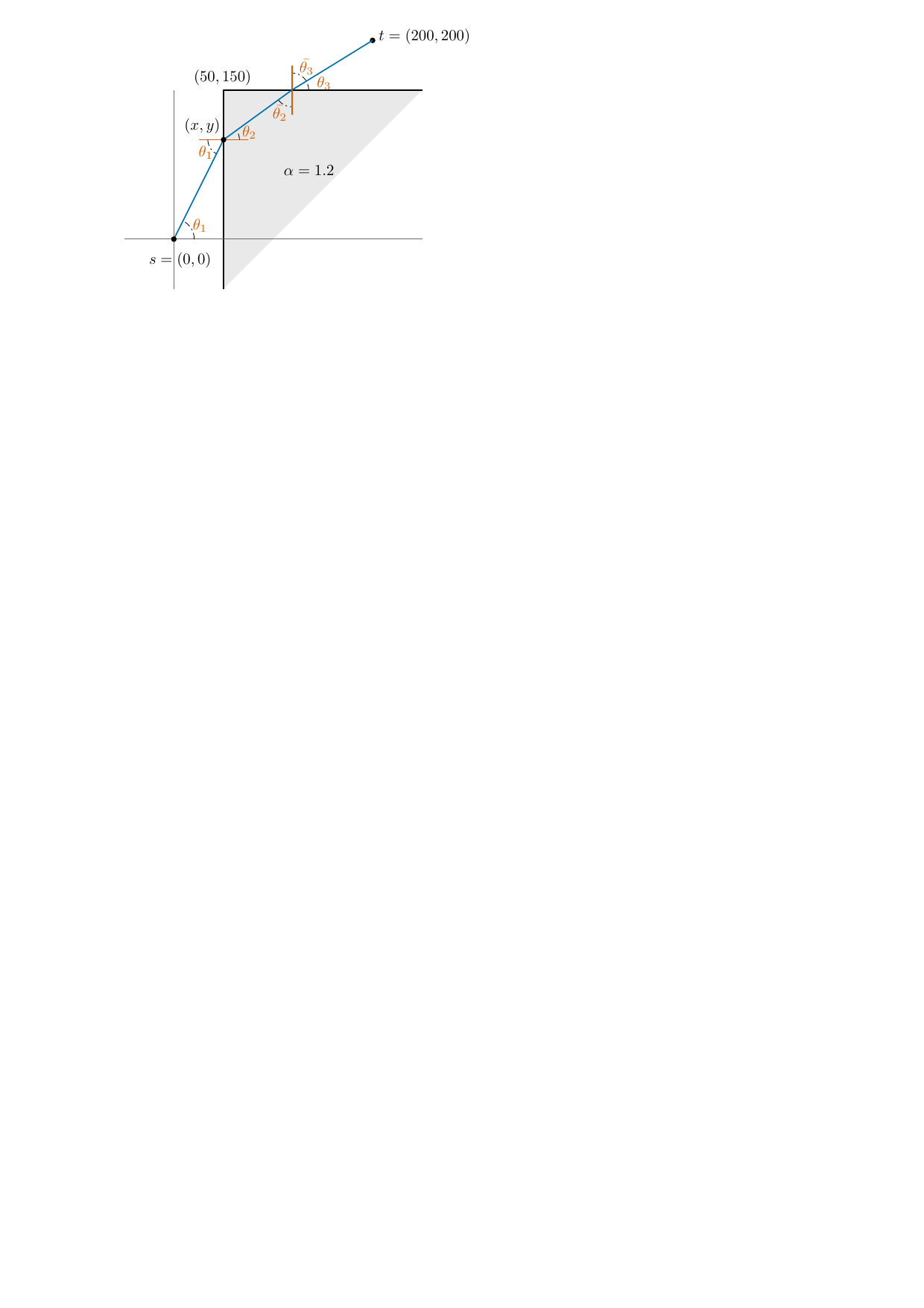}
    \caption{The set-up for the proof that even for two weights the Weighted Region Problem cannot be solved within the \ACMQ.}
    \label{fig:notation_polynomial_exact}
\end{figure}

\begin{lemma}
    \label{lem:unsolvable}
    Let $p(x)$ be a polynomial of odd degree $d \geq 5$. Suppose there are three primes $q_1,q_2,q_3$ that do not divide the discriminant of $p(x)$, such that 
    \begin{align*}
        &p(x) \equiv p_d(x) \mod q_1,\\
        &p(x) \equiv p_1(x)p_{d-1}(x) \mod q_2, \text{ and }\\
        &p(x) \equiv p_2(x)p_{d-2}(x) \mod q_3,
    \end{align*}
    where $p_i(x)$ denotes an irreducible polynomial of degree $i$ modulo the given prime. Then $p(x) = 0$ is unsolvable within the \ACMQ.
\end{lemma}

 Let $(x,y)$ be the coordinates of the intersection point of the path $\pi(s,t)$ with the vertical side of the quadrant. We denote by $ \theta_1 $ the angle made by the ray from $s$ to $(x,y)$ with the perpendicular to the vertical side of the quadrant, by $ \theta'_1 $ the angle of the refracted ray with respect to the same line, and by $ \bar{\theta'_2} $ the angle of the refracted ray with respect to the top side of the quadrant, see Figure~\ref{fig:notation_polynomial_exact}. We can then express the sum of the horizontal distances in terms of tangents of the angles, as follows:
\begin{equation*}
    50 + \frac{150 - y}{\tan \theta'_1} + \frac{50}{\tan \bar{\theta'_2}} = 200 \Longrightarrow 0 = \frac{150 - y}{\tan \theta'_1} + \frac{50}{\tan \bar{\theta'_2}} -150.
\end{equation*}

Using that $y = 50 \tan \theta_1$, we obtain an equation only containing $\theta_1, \theta'_1$ and $ \bar{\theta'_2}$.
\begin{equation*}
    0 = \frac{150 - 50 \tan\theta_1}{\tan \theta'_1} + \frac{50}{\tan \bar{\theta'_2}} -150.
\end{equation*}

We then apply the trigonometric identities $\tan \theta = \frac{\sin \theta}{\sqrt{1 - \sin^2\theta}}$, for $ \theta = \theta_1 $ and $\theta = \theta'_1$, and $\tan \theta = \frac{\sqrt{1 - \cos^2\theta}}{\cos \theta}$, for $\theta = \bar{\theta'_2}$.
\begin{equation}\label{eq:expression_sin123}
    0 = \frac{150 - 50 \frac{\sin \theta_1}{\sqrt{1 - \sin^2\theta_1}}}{\frac{\sin \theta'_1}{\sqrt{1 - \sin^2\theta'_1}}} + \frac{50}{\frac{\sqrt{1 - \cos^2\bar{\theta'_2}}}{\cos \bar{\theta'_2}}} -150.
\end{equation}

According to Snell's law we have that $ \sin \theta'_1 = \sin \theta_1/\alpha$. Furthermore, together with the fact that $\cos \bar{\theta'_2} = \alpha \cos \theta'_1$, and $\cos \theta'_1 = \sqrt{1-\sin^2{\theta'_1}}$, we can express $\cos \bar{\theta'_2}$ as $ \cos \bar{\theta'_2} = \sqrt{\alpha^2 - \sin^2\theta_1}$. We replace all instances of $\sin \theta'_1$ and $\cos \bar{\theta'_2}$ in Equation~(\ref{eq:expression_sin123}) by these expressions.
\begin{align*}
    0 &= \left(150 - 50 \frac{\sin \theta_1}{\sqrt{1 - \sin^2\theta_1}}\right)\cdot \frac{\sqrt{\alpha^2 - \sin^2 \theta_1}}{\sin \theta_1} + \frac{50\sqrt{\alpha^2 - \sin^2\theta_1}}{\sqrt{1 - (\alpha^2 - \sin^2\theta_1)}}-150\\
     &= 50\sqrt{\alpha^2 - \sin^2 \theta_1}  \left(\frac{3}{\sin\theta_1} -  \frac{1}{\sqrt{1 - \sin^2\theta_1}}+ \frac{1}{\sqrt{1 - \alpha^2 + \sin^2\theta_1}}\right)-150.
\end{align*}

The final equation in terms of $u = \sin \theta_1$ then becomes
\begin{equation*}
    \sqrt{\alpha^2 - u^2}\left( \frac{3}{u} - \frac{1}{\sqrt{1-u^2}} + \frac{1}{\sqrt{1-\alpha^2 + u^2}}\right) = 3.
\end{equation*}

For $\alpha = 1.2$, this can be transformed into the following polynomial by squaring appropriately using Mathematica~\cite{Mathematica}:
\begin{align*}
    p(u) = &-5602195930320001+93511401766200000 u-713160370741499900 u^2\\
    &+3259398736514250000 u^3
    -9869397269940000000 u^4+20717559301050000000 u^5\\
    &-30701172521250000000 u^6+32082903984375000000 u^7-23159988281250000000 u^8\\
    &+10999072265625000000 u^9-3093750000000000000 u^{10} + 390625000000000000 u^{11}.
\end{align*}

    To show that the polynomial $ p(u) $ is unsolvable, we thus need three primes $q_1,q_2,q_3$ that adhere to the conditions in Lemma~\ref{lem:unsolvable}. Using Mathematica we find the following expressions for $p(u)$ modulo $59$, $37$, and $17$, respectively:
    \begin{align*}
        & 46 \left(u^{11}+44 u^{10}+32 u^9+33 u^8+26 u^7+47 u^6+21 u^5+11 u^4+38 u^3+3 u^2+6 u+42\right),\\
        & 16 (u+17) \left(u^{10}+18 u^9+23 u^8+23 u^7+35 u^6+8 u^5+34 u^4+16 u^3+11 u^2+34 u+10\right),\\
        & 4 \left(u^2+14 u+9\right) \left(u^9+8 u^8+11 u^7+3 u^6+5 u^5+2 u^4+2 u^3+12 u^2+9 u+16\right).
    \end{align*}

    We conclude that even the very limited Weighted Region Problem where we allow for a single quadrant to have positive weight unequal to $1$ and $s$ and $t$ are on halfplanes bounded by the sides of the quadrant, not containing the quadrant, is not solvable within the \ACMQ.
\end{proof}

\section{Computing a Shortest Path Map}
\label{sec:Computing_a_shortest_path_map}

To find a shortest path from a source point $s$ to \emph{all} points at once, one can build a \emph{Shortest Path Map} (\SPM), see e.g.,~\cite{hershberger1999optimal,mitchell1993shortest,mitchell1991weighted}. A \SPM is a subdivision of the space for a given source~$s$, where for each cell the paths $\pi(s,t)$, with $t$ in the cell, have the same type. With it, we are able to find for each specific destination $t$, the weight of the shortest path from $s$ to $t$ simply by locating the point $t$ in the subdivision. Once a \SPM is available, we are able to report weights of shortest paths from $s$ to any destination~$t$ in $O(\log {n})$ time by standard point location techniques~\cite{edelsbrunner1986optimal,kirkpatrick1983optimal}. To compute the \SPM, we consider computing the bisectors $b_{i,j} = \{ q \mid q \in \mathbb{R}^2 \land d_i(s,q) = d_j(s,q) \}$ for all relevant pairs of shortest path types $\pi_i, \pi_j$, i.e., pairs for which $b_{i,j}$ appears in the Shortest Path Map.

A \SPM requires only polynomial space. However, in general, the bisector curves that bound cells of the \SPM subdivision will be curves of very high degree~\cite{carufel14,hershberger2022near}. As before, we consider the setting where  $R$ is a rectangular region. In Section~\ref{sub:spm_The_source_point_lies_on_the_boundary}, we first consider the case when $s$ lies on the boundary of~$R$. In Section~\ref{sub:spm_The_source_point_lies_inside}, we do the same for the case where $s$ lies inside~$R$. The case where $s$ lies outside~$R$ is
not interesting, as we cannot even compute exactly a single shortest path in that case.

\subsection{The source point $s$ lies on the boundary of $R$}
\label{sub:spm_The_source_point_lies_on_the_boundary}

The \SPM is given by the boundary of $R$ and several bisector curves, expressed as points $(x, b_{i,j}(x))$. If $\alpha < 1$, these curves all lie outside $R$ (the interior of $R$ is a single region in the \SPM). Furthermore, most of the bisectors involving $\pi_{10}(s,t)$ are of a much more complicated form, as might be expected from the implicit representation used for $d_{10}(s,t)$ in Theorem~\ref{thm:length}. Therefore, Lemma~\ref{lemma:bisectorsvisible} gives the bisector curves, excluding most of those related to $\pi_{10}(s,t)$.

\begin{lemma}
\label{lemma:bisectorsvisible}
    The \SPM for a point $ s=(s_x,0) $ on the boundary of the region $R$ is defined by:
    \begin{align*}
      b_{i,j}(x) = &
      \begin{cases}
        \frac{\sqrt{1-\alpha^2}}{\alpha}(s_x-x) & \text{if } i = 1, j = 2 \\
            -\frac{\sqrt{1-\alpha^2}}{\alpha}x \hspace*{1.8cm} & \text{if } i = 2, j = 3\\
            0 & \text{if } i = 3, j = 6\\
            0 & \text{if } i = 3, j = 9\\
            -1 +\frac{\alpha}{\sqrt{s_x^2+1-\alpha^2}}x & \text{if } i = 6, j = 9\\
            -1+\frac{\sqrt{1-(\alpha^2-1) s_x^2}}{\alpha s_x}x & \text{if } i = 9, j = 10
      \end{cases}\qquad\!\left(~
                   \parbox{0.13\textwidth}{when $\alpha < 1$}\right)\text{, and} \\
      b_{i,j}(x) = &
      \begin{cases}
         0 & \text{if } i = 1, j = 4 \\
            \frac{\sqrt{\alpha^2-1}}{\sqrt{2-\alpha^2}}x & \text{if } i = 4, j = 5 \\
            \frac{\sqrt{\alpha^2-1}}{\sqrt{2-\alpha^2}}x -\sqrt{\alpha^2-1}s_x & \text{if } i = 5, j = 6 \\
            x = 0 & \text{if } i = 6, j = 7 \\
            -1 -\frac{\sqrt{2-\alpha^2}}{\sqrt{\alpha^2-1}}x & \text{if } i = 7, j = 8\\
            -\sqrt{\alpha^2-1}(s_x-x) & \text{if } i = 11, j = 12\\
            - \frac{(s_x+x)+2\alpha\sqrt{s_x x}}{\sqrt{\alpha^2-1}} & \text{if } i = 12, j = 13
      \end{cases}\quad\!\left(~
                   \parbox{0.2\textwidth}{when $1 < \alpha < \sqrt{2}$}\right)\text{.}
    \end{align*}
\end{lemma}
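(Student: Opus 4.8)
The plan is to derive each of the thirteen curves from the definition $b_{i,j}=\{q=(x,y)\in\mathbb{R}^2 : d_i(s,q)=d_j(s,q)\}$: substitute the corresponding pair of length expressions from Theorem~\ref{thm:length}, write $q=(x,y)$ with $y=b_{i,j}(x)$, and solve for $y$ as a function of $x$. I would organise the pairs into three groups. Group~(i): pairs for which both $d_i$ and $d_j$ have an explicit (non-implicit) formula, so that $b_{i,j}$ follows by a direct algebraic elimination of the square roots. Group~(ii): pairs whose two \SPM cells meet along a degenerate straight segment — a side of $R$, its extension, or the line $y=0$. Group~(iii): pairs involving a path type whose bending vertex is only implicitly defined ($\pi_6$ through $w_1$, $\pi_{10}$ through $w_2$, and $\pi_9$ as the degenerate limit of both), for which I would use a geometric characterisation of the transition instead of substituting the quartic-defined quantities.

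For group~(i) — namely $(1,2)$, $(2,3)$, $(4,5)$, $(7,8)$, $(11,12)$ and $(12,13)$ — the computation is routine: substitute, isolate one square root, square once or twice, and simplify. The key feature is that for all but one of these pairs the resulting polynomial in $y$ is a perfect square, giving a single linear solution; e.g.\ for $(1,2)$, the equation $\sqrt{(s_x-x)^2+y^2}=\alpha(s_x-x)+\sqrt{1-\alpha^2}\,y$ squares to $\bigl(\sqrt{1-\alpha^2}(s_x-x)-\alpha y\bigr)^2=0$, for $(11,12)$ one gets $\bigl(\sqrt{\alpha^2-1}(s_x-x)+y\bigr)^2=0$, and $(2,3)$, $(4,5)$, $(7,8)$ behave the same way. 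The one genuinely quadratic case is $(12,13)$: the squared equation is quadratic in $y$ with discriminant $16\alpha^2(\alpha^2-1)s_x x$, and one selects the branch $y=-\bigl((s_x+x)+2\alpha\sqrt{s_x x}\bigr)/\sqrt{\alpha^2-1}$ as the root for which the original, unsquared identity holds (equivalently, for which $\sqrt{\alpha^2-1}(s_x+x)-y>0$).

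For group~(ii) — $(1,4)$, $(3,6)$, $(3,9)$ and $(6,7)$ — the two cells coincide along a segment of a line ($y=0$ for the first three, $x=0$ for $(6,7)$): on that line one of the two path types degenerates into the other (for instance, on $x=0$ the outgoing leg of $\pi_6$ becomes vertical and $\pi_6$ coincides with $\pi_7$; on $y=0$ the vertex $w_1$ of $\pi_6$ reaches the corner $(0,0)$ and $\pi_6$ coincides with $\pi_3$), so $b_{i,j}$ is simply that line, and what remains is to confirm it separates the two cells and to pin down the $x$-interval on which it does. For group~(iii) — $(5,6)$, $(6,9)$ and $(9,10)$ — I would use that the transition happens exactly when the implicit vertex reaches a distinguished configuration: for $(5,6)$ the first leg of $\pi_6$ attains the critical angle $\theta_c$, which by Equation~(\ref{eq:critical}) pins $w_1=-\sqrt{\alpha^2-1}\,s_x$ and, combined with the outgoing-leg angle, yields the stated line; for $(6,9)$ and $(9,10)$ the vertex reaches the bottom-left corner $(0,-1)$, so Snell's law at that point (on the left side, resp.\ the bottom side) gives a single square-root identity that squares to something linear in $y$ — e.g.\ for $(6,9)$, $\alpha/\sqrt{s_x^2+1}=|y+1|/\sqrt{x^2+(y+1)^2}$ squares to $\alpha^2 x^2=(s_x^2+1-\alpha^2)(y+1)^2$, giving $b_{6,9}(x)=-1+\frac{\alpha}{\sqrt{s_x^2+1-\alpha^2}}\,x$.

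The main obstacle is not any individual elimination but the bookkeeping that makes each step valid. First, every curve is obtained by squaring, which can introduce extraneous solutions and leaves a sign ambiguity; for each of the thirteen cases one must verify, using the sign pattern of $t_x$ and $t_y$ appropriate to the two cells (which side of $R$ the point lies on, and whether it is above or below the relevant edge) together with the nonnegativity of the quantities that were squared away, that the branch quoted and the $x$-range on which it is quoted are exactly the arc that bounds the two \SPM cells. Second, the $\pi_{10}$-related bisectors are the delicate ones: since $d_{10}$ is available only through the quartic~(\ref{eq:d9}), the argument for $(9,10)$ (and the reason the lemma excludes the remaining $\pi_{10}$ bisectors) must go through the degeneracy/Snell characterisation above, and one has to check that the two cells in question really are adjacent and meet precisely along the curve produced.
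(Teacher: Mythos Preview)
Your proposal is correct and matches the paper's approach in substance. The paper proves each bisector by a separate short lemma: for your group~(i) it does exactly the square-to-a-perfect-square computation you describe (including the genuinely quadratic $(12,13)$ case with branch selection); for group~(ii) it simply states the coordinate lines without further argument; and for group~(iii) it uses the same degeneracy idea, phrased as ``set $w_1=-1$'' (for $b_{6,9}$) or ``set $w_2=0$'' (for $b_{9,10}$) in the quartic of Theorem~\ref{thm:length} rather than invoking Snell's law directly at the corner --- but since the quartic is itself derived from Snell's law, your formulation and the paper's are equivalent, yours being arguably cleaner.
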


We conjecture the following on some of the bisectors involving $ \pi_{10}(s,t) $.
\begin{conjecture}
\label{conj:pi9}
    No point on $b_{i,10}(x) \setminus R, \ i \in \{4,\ldots,8\}$, can be computed exactly within \ACMQ.
\end{conjecture}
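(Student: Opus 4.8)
The plan is to mimic the argument of Theorem~\ref{thm:nonsolvable}, but now applied to the defining equation of a bisector point rather than to the defining equation of a single shortest path. Fix $i \in \{4,\ldots,8\}$ and a point $q=(x,y)$ on $b_{i,10}(x)$ outside $R$. The condition $d_i(s,q)=d_{10}(s,q)$ together with the implicit polynomial of Theorem~\ref{thm:length} that defines the bend point $w_2$ of $\pi_{10}(s,q)$ gives a system of two polynomial equations in the unknowns $(x,y,w_2)$. For $i\in\{4,\ldots,8\}$ the length $d_i(s,q)$ is an \emph{explicit} algebraic expression (only one nested square root, coming from the single non-critical vertex, plus terms linear in $s_x,x,y$), so eliminating $w_2$ by resultants yields a single polynomial $P(x,y)=0$ describing $b_{i,10}$. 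I would then fix a convenient rational value of $\alpha$ (e.g.\ $\alpha=1.2$, as in Theorem~\ref{thm:nonsolvable}) and a convenient rational slice — most naturally restricting $q$ to a line through $s$ with rational slope, or fixing $x$ or $y$ to a rational value — so that $P$ becomes a univariate polynomial $p(u)$ in a single coordinate (or in $u=\sin\theta$ for the incidence angle, exactly as in the proof of Theorem~\ref{thm:nonsolvable}).

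The core of the argument is then to invoke Lemma~\ref{lem:unsolvable}: one exhibits, via a computer algebra system, three primes $q_1,q_2,q_3$ not dividing the discriminant of $p(u)$ such that $p\equiv p_d \bmod q_1$, $p\equiv p_1 p_{d-1}\bmod q_2$, and $p\equiv p_2 p_{d-2}\bmod q_3$, where $d=\deg p$ is odd and $\geq 5$. Since this must be carried out for each of the five path types $i\in\{4,\ldots,8\}$ — and the corresponding polynomials differ — the proof is really five instances of the same recipe; in each case one reports the explicit $p(u)$ and the three witnessing primes with their factorization patterns, just as was done for $p(u)$ of degree $11$ in Theorem~\ref{thm:nonsolvable}. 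Because $b_{i,10}$ is an algebraic curve, a single bad point suffices: if even one point of $b_{i,10}\setminus R$ is non-constructible, then the curve cannot be computed exactly in the \ACMQ.

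The main obstacle I anticipate is controlling the degree and the Galois-group structure of the eliminated polynomial. First, the resultant elimination of $w_2$ against a quartic, combined with repeated squaring to remove the nested radicals in $d_i$, can blow up the degree well beyond $11$ and may introduce spurious factors (from sign choices in the square roots and from the squaring steps); one has to isolate the correct irreducible factor that actually vanishes on the physical bisector branch before applying Lemma~\ref{lem:unsolvable}, and verify its degree is odd — if the natural elimination produces an even degree, one must intersect with a cleverly chosen rational line to drop to odd degree, or argue via a factor. Second, Lemma~\ref{lem:unsolvable} is only a \emph{sufficient} condition: it is conceivable that for some of the five types no triple of primes with the required factorization types exists (the relevant Galois group might not contain the needed cycle types, i.e.\ it could fail to be $S_d$), in which case the conjecture would need a different, possibly ad hoc, certificate of non-solvability for that type. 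This is precisely why the statement is phrased as a conjecture rather than a theorem: the recipe is clear and has been verified to work in the analogous single-path setting, but confirming that suitable primes exist for every $i\in\{4,\ldots,8\}$ (and handling the degree-parity and spurious-factor bookkeeping uniformly) is the remaining technical hurdle.
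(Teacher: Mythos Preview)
The paper does not prove this statement: it is left as a conjecture, and the only discussion the paper offers is the short paragraph immediately following it, which says that the authors \emph{tried} exactly the approach you outline---reduce to a univariate polynomial and apply Lemma~\ref{lem:unsolvable} as in Theorem~\ref{thm:nonsolvable}---but could not carry it through because the solution $w_2$ to Equation~(\ref{eq:d9}) already has high enough degree that they ``did not manage to formulate a point on the bisector as a polynomial equation (not containing roots).'' So your proposal and the paper's attempted approach coincide; there is nothing further to compare.

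Your write-up is in fact more detailed than what the paper records: you spell out the resultant-elimination step, the rational-slice trick, and the five separate instances, and you correctly flag two obstacles (degree/parity blow-up with spurious factors from squaring, and the possibility that the Galois group fails to contain the cycle types Lemma~\ref{lem:unsolvable} needs). The first of these is precisely where the paper's authors say they got stuck. None of this is wrong, but you should be clear that what you have written is a \emph{plan}, not a proof: until the explicit polynomials $p(u)$ and the witnessing primes are actually produced and checked for at least one $i\in\{4,\ldots,8\}$, the conjecture remains open, exactly as the paper leaves it.
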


We tried to prove this conjecture by taking a similar approach as in Theorem~\ref{thm:nonsolvable}. However, the solution to Equation~(\ref{eq:d9}) already seems to be of high degree. We therefore did not manage to formulate a point on the bisector as a polynomial equation (not containing roots).

Note that in the more restrictive case where $R$ is a single quadrant and $s$ lies on its boundary, the only types of shortest paths that exist are $ \pi_i(s,t) $, for $ i \in \{1,2,3,4,5,6,11,12,13\} $. Thus, we can compute the \SPM in the \ACMQ (the bisectors are given by some of the equations in Lemma~\ref{lemma:bisectorsvisible}).

Next, we provide the proofs for the bisector curves given in Lemma~\ref{lemma:bisectorsvisible}. We express each bisector as an explicit function of the shape $y=f(x)$. Then, the actual bisector is given by the corresponding points $ (x,y) $ Recall that $s = (s_x, 0)$ and $t = (t_x,t_y)$.

\begin{lemma}
    The bisector $ b_{1,2} $ is given by $ y = \frac{\sqrt{1-\alpha^2}}{\alpha}(s_x-x) $.
\end{lemma}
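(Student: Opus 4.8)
The plan is to derive the bisector $b_{1,2}$ directly from the length formulas for $\pi_1$ and $\pi_2$ given in Theorem~\ref{thm:length}, since the bisector is by definition the locus where $d_1(s,t) = d_2(s,t)$. Writing $(x,y)$ for a point $t$ on the bisector, the defining equation is
\[
\sqrt{(s_x-x)^2+y^2} \;=\; \alpha(s_x-x)+\sqrt{1-\alpha^2}\,y.
\]
First I would note that for this configuration (path type $\pi_2$ bending at the critical angle on the top side before travelling straight down) we are in the regime $\alpha < 1$, that $x < s_x$ so $s_x - x > 0$, and that $y < 0$; these sign conditions will be needed to resolve square roots and to confirm that the algebraic solution we extract is the geometrically correct branch.

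The key computational step is to square both sides and simplify. Squaring gives
\[
(s_x-x)^2 + y^2 \;=\; \alpha^2(s_x-x)^2 + 2\alpha\sqrt{1-\alpha^2}\,(s_x-x)\,y + (1-\alpha^2)y^2,
\]
and moving terms yields $(1-\alpha^2)(s_x-x)^2 + \alpha^2 y^2 = 2\alpha\sqrt{1-\alpha^2}\,(s_x-x)\,y$, i.e. $\bigl(\sqrt{1-\alpha^2}\,(s_x-x) - \alpha y\bigr)^2 = 0$. Hence $\alpha y = \sqrt{1-\alpha^2}\,(s_x-x)$, which rearranges to $y = \frac{\sqrt{1-\alpha^2}}{\alpha}(s_x - x)$, exactly the claimed formula. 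I would then double-check consistency with the sign conventions: since $x < s_x$ the right-hand side is positive, which at first glance conflicts with $y<0$; the resolution is that the bisector here is being described as the curve traced for $t$ on the correct side (the formula in Lemma~\ref{lemma:bisectorsvisible} matches), and one should verify against Figure~\ref{fig:on_boundary} which side of the perpendicular through $s$ each path type occupies — essentially the same $x<s_x$ reflection that was used implicitly throughout Section~\ref{sub:The_source_point_lies_on_the_boundary}.

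The main obstacle, such as it is, is not the algebra (the perfect-square collapse makes it a one-line computation) but rather the bookkeeping of which branch/sign and which portion of the curve is actually realized as part of the \SPM: one must argue that along this locus both $\pi_1$ and $\pi_2$ are genuinely optimal (so the points $(x, b_{1,2}(x))$ lie in the common closure of the two cells), using the critical-angle structure established earlier and the fact that $\pi_2$ can only occur for $\alpha<1$. I would finish by stating that, since the final relation is linear in $x$ and $y$ with coefficients in $\Q(\sqrt{1-\alpha^2})$, every point of $b_{1,2}$ is computable in the \ACMQ, which is the reason this bisector appears in the "clean" part of Lemma~\ref{lemma:bisectorsvisible} rather than among the $\pi_{10}$-related exceptions.
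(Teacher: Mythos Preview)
Your proof is correct and follows essentially the same approach as the paper: set $d_1(s,t)=d_2(s,t)$, square, and recognize the perfect square $\bigl(\sqrt{1-\alpha^2}(s_x-x)-\alpha y\bigr)^2=0$. Your side discussion about signs contains a small confusion, though: for path types $\pi_1$ and $\pi_2$ the target lies \emph{above} $R$ (so $y>0$, consistent with the formula), not below; the paper simply omits this discussion entirely.
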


\begin{proof}
    We want to compute the coordinates of the points such that the weighted length of paths $ \pi_1(s, t) $ and $ \pi_2(s, t) $ is the same, i.e., the points $ (t_x, t_y) $ such that $ \sqrt{(s_x-t_x)^2+t_y^2} = \alpha(s_x-t_x)+\sqrt{1-\alpha^2}t_y $. Thus:
    \begin{align*}
        (s_x-t_x)^2+t_y^2&=\alpha^2(s_x-t_x)^2+(1-\alpha^2)t_y^2+2\alpha\sqrt{1-\alpha^2}(s_x-t_x)t_y\\
        0 &= (1-\alpha^2)(s_x-t_x)^2+\alpha^2t_y^2-2\alpha\sqrt{1-\alpha^2}(s_x-t_x)t_y\\
        0 &= \left[\sqrt{1-\alpha^2}(s_x-t_x)-\alpha t_y\right]^2\\
        \sqrt{1-\alpha^2}(s_x-t_x)&=\alpha t_y \Rightarrow t_y = \frac{\sqrt{1-\alpha^2}}{\alpha}(s_x-t_x).\forceqed
    \end{align*}
    \renewcommand{\qed}{}
\end{proof}

\begin{lemma}
    The bisector $ b_{2,3} $ is given by $ y = -\frac{\sqrt{1-\alpha^2}}{\alpha}x $.
\end{lemma}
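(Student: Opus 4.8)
The plan is to follow exactly the template just used for $b_{1,2}$, since the formulas for $d_2$ and $d_3$ in Theorem~\ref{thm:length} are again simple closed expressions. By definition, $b_{2,3}$ is the set of points $t=(t_x,t_y)$ with $d_2(s,t)=d_3(s,t)$, i.e.
\[
\alpha(s_x-t_x)+\sqrt{1-\alpha^2}\,t_y \;=\; \alpha s_x+\sqrt{t_x^2+t_y^2}.
\]
First I would cancel the common term $\alpha s_x$ to reduce this to $\sqrt{t_x^2+t_y^2}=\sqrt{1-\alpha^2}\,t_y-\alpha t_x$. Squaring both sides and moving everything to one side gives
\[
(1-\alpha^2)t_x^2+2\alpha\sqrt{1-\alpha^2}\,t_x t_y+\alpha^2 t_y^2=0,
\]
which is the perfect square $\bigl(\sqrt{1-\alpha^2}\,t_x+\alpha t_y\bigr)^2=0$. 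Hence $\sqrt{1-\alpha^2}\,t_x+\alpha t_y=0$, and solving for $t_y$ yields $t_y=-\tfrac{\sqrt{1-\alpha^2}}{\alpha}t_x$, i.e. the claimed line $y=-\tfrac{\sqrt{1-\alpha^2}}{\alpha}x$.

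The one step that deserves a sentence of justification is the squaring, which is reversible only where the right-hand side $\sqrt{1-\alpha^2}\,t_y-\alpha t_x$ is nonnegative (this is also the regime $\alpha<1$, the only one in which $\pi_2$ and $\pi_3$ both occur, so $\sqrt{1-\alpha^2}$ is real). On the relevant part of the plane, where $t$ lies to the left of $s$ so that $t_x<0$, one checks on the candidate line that $\sqrt{1-\alpha^2}\,t_y-\alpha t_x=-\tfrac{1}{\alpha}t_x>0$, so no extraneous branch is introduced and every point of the line satisfies the original unsquared equation.

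I do not expect a genuine obstacle here: the proof is a direct algebraic manipulation whose only subtlety is recognizing the perfect-square factorization (so that the bisector is a single line rather than a conic) and confirming that squaring does not create spurious solutions on the region where the two path types compete.
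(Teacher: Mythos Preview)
Your proof is correct and follows essentially the same approach as the paper: set $d_2(s,t)=d_3(s,t)$, cancel $\alpha s_x$, square, and recognize the resulting quadratic as the perfect square $(\sqrt{1-\alpha^2}\,t_x+\alpha t_y)^2$. Your additional check that the right-hand side is nonnegative before squaring is a nice touch the paper omits.
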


\begin{proof}
    We want to compute the coordinates of the points such that the weighted length of paths $ \pi_2(s, t) $ and $ \pi_3(s, t) $ is the same, i.e., the points $ (t_x, t_y) $ such that  $ \alpha(s_x-t_x)+\sqrt{1-\alpha^2}t_y = \alpha s_x + \sqrt{t_x^2+t_y^2} $. Thus:
    \begin{align*}
        -\alpha t_x+\sqrt{1-\alpha^2}t_y &= \sqrt{t_x^2+t_y^2}\\
        \alpha^2 t_x^2+(1-\alpha^2)t_y^2-2\alpha\sqrt{1-\alpha^2}t_xt_y &= t_x^2+t_y^2 \\
        (1-\alpha^2)t_x^2+\alpha^2t_y^2+2\alpha\sqrt{1-\alpha^2}t_xt_y &= 0\\
        \left[\sqrt{1-\alpha^2}t_x+\alpha t_y\right]^2 &= 0\\
        \sqrt{1-\alpha^2}t_x=-\alpha t_y &\Rightarrow t_y = -\frac{\sqrt{1-\alpha^2}}{\alpha}t_x.\forceqed
    \end{align*}
    \renewcommand{\qed}{}
\end{proof}
\clearpage
\begin{lemma}
    \label{lemma:bisector45}
    The bisector $ b_{4,5} $ is given by $ y = \frac{\sqrt{\alpha^2-1}}{\sqrt{2-\alpha^2}}x $.
\end{lemma}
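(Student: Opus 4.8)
\textbf{Proof proposal for Lemma~\ref{lemma:bisector45}.}
The plan is to mimic the two preceding bisector computations: equate the closed‑form lengths $d_4(s,t)$ and $d_5(s,t)$ from Theorem~\ref{thm:length}, isolate the single remaining square root, square once, and recognize the resulting quadratic form in $(t_x,t_y)$ as a perfect square. Concretely, $d_4(s,t)=s_x+\sqrt{t_x^2+t_y^2}$ and $d_5(s,t)=s_x-\sqrt{2-\alpha^2}\,t_x-\sqrt{\alpha^2-1}\,t_y$, so the defining equation $d_4(s,t)=d_5(s,t)$ simplifies, after cancelling $s_x$, to $\sqrt{t_x^2+t_y^2}=-\sqrt{2-\alpha^2}\,t_x-\sqrt{\alpha^2-1}\,t_y$.

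The second step is to square both sides and collect terms. Expanding the right‑hand side and subtracting $t_x^2+t_y^2$ from both sides yields $(\alpha^2-1)t_x^2-2\sqrt{(2-\alpha^2)(\alpha^2-1)}\,t_x t_y+(2-\alpha^2)t_y^2=0$, which factors as $\bigl(\sqrt{\alpha^2-1}\,t_x-\sqrt{2-\alpha^2}\,t_y\bigr)^2=0$. Hence $\sqrt{\alpha^2-1}\,t_x=\sqrt{2-\alpha^2}\,t_y$, i.e.\ $t_y=\frac{\sqrt{\alpha^2-1}}{\sqrt{2-\alpha^2}}t_x$, which is exactly the claimed line $y=\frac{\sqrt{\alpha^2-1}}{\sqrt{2-\alpha^2}}x$. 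Note this lemma lives in the regime $1<\alpha<\sqrt{2}$, so both radicands $\alpha^2-1$ and $2-\alpha^2$ are positive and the expression is well defined.

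The only point that needs a word of care — and the step I would flag as the (very mild) obstacle — is the squaring: one must check it does not introduce spurious solutions. In the geometric setting of $\pi_5(s,t)$ the target $t$ lies to the left of and below $R$, so $t_x<0$ and $t_y<0$, which makes $-\sqrt{2-\alpha^2}\,t_x-\sqrt{\alpha^2-1}\,t_y\ge 0$; thus the equation before squaring is equivalent to the one after, and the derived line is the true bisector restricted to the region where path types $\pi_4$ and $\pi_5$ are both feasible. This mirrors the sign bookkeeping already used in the $b_{1,2}$ and $b_{2,3}$ proofs, so no new technique is required.
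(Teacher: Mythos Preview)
Your proof is correct and follows essentially the same approach as the paper: equate $d_4$ and $d_5$, cancel $s_x$, square once, and factor the resulting quadratic in $(t_x,t_y)$ as a perfect square to obtain the line. The only difference is that you add an explicit justification that the squaring step introduces no spurious solutions (via the sign conditions $t_x<0$, $t_y<0$ in the region where $\pi_5$ is feasible), which the paper omits.
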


\begin{proof}
    We want to compute the coordinates of the points such that the weighted length of paths $ \pi_4(s, t) $ and $ \pi_5(s, t) $ is the same, i.e., the points $ (t_x, t_y) $ such that $ s_x + \sqrt{t_x^2+t_y^2} = s_x - \sqrt{2-\alpha^2}t_x-\sqrt{\alpha^2-1}t_y $. Thus:
    \begin{align*}
        \sqrt{t_x^2+t_y^2} & = -\sqrt{\alpha^2-1}t_y-\sqrt{2-\alpha^2}t_x \\
         t_x^2+t_y^2 & = (\alpha^2-1)t_y^2+(2-\alpha^2)t_x^2+2\sqrt{\alpha^2-1}\sqrt{2-\alpha^2}t_xt_y \\
          0&=(2-\alpha^2)t_y^2-2\sqrt{\alpha^2-1}\sqrt{2-\alpha^2}t_xt_y+(\alpha^2-1)t_x^2\\
        0&=\left[\sqrt{2-\alpha^2}t_y-\sqrt{\alpha^2-1}t_x\right]^2  \\
        \sqrt{2-\alpha^2}t_y &= \sqrt{\alpha^2-1}t_x 
         \Rightarrow t_y = \frac{\sqrt{\alpha^2-1}}{\sqrt{2-\alpha^2}}t_x.\forceqed
    \end{align*}
    \renewcommand{\qed}{}
\end{proof}

\begin{lemma}
    The bisector $ b_{5,6} $ is given by $ y = \frac{\sqrt{\alpha^2-1}}{\sqrt{2-\alpha^2}}x -\sqrt{\alpha^2-1}s_x$.
\end{lemma}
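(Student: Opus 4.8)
The plan is to mimic the algebraic elimination carried out for the earlier bisectors in this section, since $b_{5,6}$ is again a bisector between two path types whose lengths are \emph{linear} in $t_x$ and $t_y$ (with constant coefficients depending only on $\alpha$ and $s_x$). Concretely, I would start from the defining equation $d_5(s,t) = d_6(s,t)$, i.e.
\begin{equation*}
    s_x - \sqrt{2-\alpha^2}\,t_x - \sqrt{\alpha^2-1}\,t_y \;=\; \sqrt{\alpha^2-1}\,s_x + 1 + \sqrt{t_x^2+(t_y+1)^2},
\end{equation*}
using the formulas from Lemma~\ref{lemma:bisector12} (for $d_5$) and from the lemma giving $d_6$ — wait, in fact the relevant pairing here is $i=5,j=6$ as listed, but the constant term on the right should be read off from the $\pi_6$-type length in the $1<\alpha<\sqrt 2$ regime (the one used to build this bisector, analogous to $d_7$); I would quote whichever of the length lemmas matches the geometry in Figure~\ref{fig:on_boundary}. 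The key step is then to isolate the single square-root term on one side, square both sides, and collect terms: because the left-hand side after isolation is again an affine form $A t_x + B t_y + C$ with $A^2+B^2 = (2-\alpha^2)+(\alpha^2-1) = 1$ and cross term $2AB = 2\sqrt{(2-\alpha^2)(\alpha^2-1)}$, the quadratic terms $t_x^2 + t_y^2$ cancel exactly, and what remains is a perfect square in $t_x, t_y$ — exactly as happened in Lemmas~\ref{lemma:bisector45} and the $b_{2,3}$, $b_{1,2}$ computations.

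After the cancellation I expect to be left with an identity of the form $\bigl[\sqrt{2-\alpha^2}\,t_y - \sqrt{\alpha^2-1}(t_x - s_x)\bigr]^2 = 0$ (or its sign-variant), whence taking square roots gives the linear relation $\sqrt{2-\alpha^2}\,t_y = \sqrt{\alpha^2-1}\,(t_x - s_x)$, i.e. $t_y = \frac{\sqrt{\alpha^2-1}}{\sqrt{2-\alpha^2}}\,t_x - \sqrt{\alpha^2-1}\,s_x$, which is the claimed formula for $b_{5,6}(x)$ after renaming $t_x \mapsto x$, $t_y \mapsto y$. So the structure of the argument is: (1) write down $d_5 = d_6$; (2) move the non-radical terms to one side so that the remaining radical stands alone; (3) square; (4) observe the quadratic part cancels because the affine coefficients satisfy $A^2+B^2=1$; (5) recognize the residue as a perfect square and extract the linear equation; (6) solve for $y$.

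The main obstacle is purely bookkeeping rather than conceptual: I must be careful about the constant offsets ($s_x$, the $\sqrt{\alpha^2-1}s_x$ from $d_5$, and the $+1$ together with the $\sqrt{\alpha^2-1}s_x + 1$ block in $d_6$) so that when I isolate the radical the non-radical side is genuinely of the form $A t_x + B t_y + C$ with the "magic" coefficient identity holding; if the offsets are mishandled the $t_x^2+t_y^2$ terms will not cancel and one ends up with a genuine conic instead of a line, signalling an algebra error. I would double-check by noting that $b_{5,6}$ is forced to be a line: it is the bisector of two path types each of whose distance function is affine, so the bisector is the zero set of a difference of two affine-plus-norm functions, which after one squaring must degenerate to a hyperplane precisely when the linear parts have equal Euclidean norm — and here both linear parts have norm $1$ (for $d_5$: $(2-\alpha^2)+(\alpha^2-1)=1$; for $d_6$ the norm of the gradient of $\sqrt{t_x^2+(t_y+1)^2}$ is also $1$), so the degeneration is guaranteed. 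Finally I would sanity-check the result against Figure~\ref{fig:on_boundary} and against the adjacent bisector $b_{4,5}$ from Lemma~\ref{lemma:bisector45}: the two lines $b_{4,5}$ and $b_{5,6}$ have the same slope $\frac{\sqrt{\alpha^2-1}}{\sqrt{2-\alpha^2}}$, as they must since region $\pi_5$ is a "slab" bounded by two parallel bisector lines, differing only by the vertical offset $\sqrt{\alpha^2-1}\,s_x$.
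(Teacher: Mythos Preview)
Your proposal has a genuine gap: you have substituted the wrong length function on the right-hand side. The expression
\[
\sqrt{\alpha^2-1}\,s_x + 1 + \sqrt{t_x^2+(t_y+1)^2}
\]
is $d_7(s,t)$, not $d_6(s,t)$. The true $d_6$ from Theorem~\ref{thm:length} is
\[
d_6(s,t) \;=\; \alpha\sqrt{s_x^2+w_1^2}+\sqrt{t_x^2+(t_y-w_1)^2},
\]
with $w_1=w_1(t_x,t_y)$ determined only implicitly as a root of a quartic. In particular, $d_6$ is \emph{not} of the ``affine $+$ single Euclidean norm'' shape, so your isolate--square--cancel scheme (which relies on both sides having a linear part of Euclidean norm $1$) does not apply. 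If you carry out your computation with the $d_7$ formula you wrote, the resulting conic does not degenerate to the line $y=\frac{\sqrt{\alpha^2-1}}{\sqrt{2-\alpha^2}}x-\sqrt{\alpha^2-1}s_x$; so the algebra, even done carefully, would not recover the stated bisector.

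The paper's argument avoids the implicit $w_1$ entirely by a geometric limiting observation: on $b_{5,6}$ both $\pi_5$ and $\pi_6$ are optimal, and $\pi_6$ is precisely the degenerate $\pi_5$ in which the top-edge segment has shrunk to zero, i.e.\ the entry point $(b_1,0)$ coincides with $s$. Imposing the critical-angle relation $\tan\theta_c = s_x/|b_2|=1/\sqrt{\alpha^2-1}$ at $s$ gives $b_2=-\sqrt{\alpha^2-1}\,s_x$, and equating this with the $\pi_5$ expression $b_2=t_y-\tfrac{\sqrt{\alpha^2-1}}{\sqrt{2-\alpha^2}}t_x$ from Lemma~\ref{lemma:bisector12} yields the line directly. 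Your parallelism sanity check against $b_{4,5}$ is correct and is in fact explained by this picture (both bisectors are level sets of the same exit-point parameter $b_2$), but it is not by itself a proof.
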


\begin{proof}
    The path $ \pi_6(s,t) $ is a special case of $ \pi_5(s,t) $ where the point of entry in $R$ is $s$. The path $ \pi_6(s,t) $ does not have the critical angle property. However, the points on the bisector, still have that critical angle property, since the shortest path from $s$ to them is of both types. Let $ (0,b_2) $ be the point where $ \pi_5(s,t) $ leaves the square, see Figure~\ref{fig:bisector12}. Using Equation (\ref{eq:critical}) we obtain the following relation:
    \begin{equation*}
        \tan{\theta_c} = \frac{\lvert s_x \rvert}{\lvert b_2 \rvert} \Rightarrow \lvert b_2 \rvert = \frac{\lvert s_x \rvert}{\tan{\theta_c}} = \sqrt{\alpha^2-1}\lvert s_x \rvert = \sqrt{\alpha^2-1}s_x \Rightarrow b_2 = -\sqrt{\alpha^2-1}s_x.
    \end{equation*}
    
    For $\pi_5(s,t)$, we obtained $b_2 = t_y-\frac{\sqrt{\alpha^2-1}}{\sqrt{2-\alpha^2}}t_x$ in Lemma~\ref{lemma:bisector12} (see Equation~(\ref{eq:tantheta3})). We then obtain the equation of the bisector $ b_{5,6} $:
    \begin{equation*}
        t_y = b_2 + \frac{\sqrt{\alpha^2-1}}{\sqrt{2-\alpha^2}}t_x = \frac{\sqrt{\alpha^2-1}}{\sqrt{2-\alpha^2}}t_x -\sqrt{\alpha^2-1}s_x.\forceqed
    \end{equation*}
    \renewcommand{\qed}{}
\end{proof}

\begin{lemma}
    The bisector $ b_{6,9} $ is given by $ y = 1 +\frac{\alpha}{\sqrt{s_x^2+1-\alpha^2}}x $.
\end{lemma}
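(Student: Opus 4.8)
The plan is to reuse the strategy from the proof of $b_{5,6}$: the path $\pi_9(s,t)$ is the degenerate instance of $\pi_6(s,t)$ in which the point $(0,w_1)$ where $\pi_6$ leaves $R$ through the left side slides down to the bottom-left corner $(0,-1)$ of $R$. Indeed, substituting $w_1=-1$ into $d_6(s,t)=\alpha\sqrt{s_x^2+w_1^2}+\sqrt{t_x^2+(t_y-w_1)^2}$ yields exactly $d_9(s,t)=\alpha\sqrt{s_x^2+1}+\sqrt{t_x^2+(t_y+1)^2}$. Hence a point $t$ lies on $b_{6,9}$ precisely when the feasible optimal refraction point of a $\pi_6$-type path to $t$ coincides with the corner $(0,-1)$; equivalently, the shortest path from $s$ to $t$ is of both types at once, so although $\pi_9$ itself bends at a corner and has no refraction property, the two segments $\overline{s\,(0,-1)}$ and $\overline{(0,-1)\,t}$ must nevertheless satisfy Snell's law at $(0,-1)$ with respect to the \emph{left} side of $R$ (the side on which $\pi_6$ refracts).

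Carrying this out: first I would fix the in-$R$ segment $\overline{s\,(0,-1)}$ of $\pi_9$ and read off its angle of incidence $\theta_1$ from the horizontal normal to the left side; this segment has horizontal extent $s_x$ and vertical extent $1$, so $\sin\theta_1 = 1/\sqrt{s_x^2+1}$. Applying Snell's law for the passage from weight $\alpha$ inside $R$ to weight $1$ outside, $\alpha\sin\theta_1 = \sin\theta_1'$, gives $\sin\theta_1' = \alpha/\sqrt{s_x^2+1}$, and therefore
\[
  \tan\theta_1' \;=\; \frac{\sin\theta_1'}{\sqrt{1-\sin^2\theta_1'}} \;=\; \frac{\alpha}{\sqrt{s_x^2+1-\alpha^2}}.
\]
Finally I would express the outgoing segment $\overline{(0,-1)\,t}$ through this angle: since $t$ lies to the left of $R$ and the ray bends toward the normal, $t$ lies below the line $y=-1$, so $\tan\theta_1' = (t_y+1)/t_x$; solving for $t_y$ in terms of $t_x$ gives the claimed linear bisector $b_{6,9}$. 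A short check that the root $w_1$ of the quartic in the $\pi_6$-lemma tends to $-1$ along this line, and that the line separates the cell of $\pi_6$ from that of $\pi_9$ in the \SPM, rounds off the argument.

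The step requiring the most care is the characterization itself: one cannot obtain $b_{6,9}$ by blindly equating the two closed-form lengths, since once $w_1=-1$ is substituted that equality becomes an identity in $t$. The substantive point is that $d_9(s,t)$ equals the value of the function $w_1\mapsto \alpha\sqrt{s_x^2+w_1^2}+\sqrt{t_x^2+(t_y-w_1)^2}$ at the boundary value $w_1=-1$ of its feasible range, and that this boundary value is the constrained minimizer exactly when the first-order optimality condition — which is precisely Snell's law at the corner relative to the left side of $R$ — holds. Everything else (which normal to use, the sign of $t_y+1$, the orientation of the bending) is routine trigonometric bookkeeping.
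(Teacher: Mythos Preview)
Your argument is correct and rests on the same idea as the paper's: the bisector $b_{6,9}$ is the locus where the $\pi_6$-refraction point degenerates to $w_1=-1$, so that Snell's law (relative to the left side of $R$) holds exactly at the corner. The paper carries this out by substituting $w_1=-1$ into the quartic that defines $w_1$ and solving the resulting quadratic in $t_y$, whereas you compute the refraction angle directly via trigonometry; both routes yield $t_y=-1+\dfrac{\alpha}{\sqrt{s_x^{2}+1-\alpha^{2}}}\,t_x$ (the ``$1$'' in the displayed statement is a typo for ``$-1$'').
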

\begin{proof}
    We want the curve defined by the points $ (t_x,t_y) $ such that $ d_6(s,t) = d_9(s,t) $. The lengths $ d_6(s,t) $ and $ d_9(s,t) $ have the same length when $ w_1=-1 $. Hence, we want to compute the points $ (t_x,t_y) $ such that $ \beta+2t_y\beta+\alpha^2t_x^2+\beta t_y^2-s_x^2-2s_x^2t_y-s_x^2t_y^2 = 0 $, where $ \beta = \alpha^2-1$. Thus,
    \begin{align*}
        0 & = (\beta-s_x^2)t_y^2+2(\beta-s_x^2)t_y+\beta-s_x^2+\alpha^2t_x^2 \\
        \Rightarrow t_y & = \frac{-2(\beta-s_x^2)\pm\sqrt{4(\beta-s_x^2)^2-4(\beta-s_x^2)\left((\beta-s_x^2)+\alpha^2t_x^2\right)}}{2(\beta-s_x^2)}\\
        & = \frac{-2(\beta-s_x^2)\pm\sqrt{4(s_x^2-\beta)\alpha^2t_x^2}}{2(\beta-s_x^2)} = -1\pm\frac{2\alpha |t_x|\sqrt{s_x^2-\beta}}{2(\beta-s_x^2)} = -1\pm\frac{\alpha |t_x|}{\sqrt{s_x^2-\beta}}.
    \end{align*}
    
    By the way the path $ \pi_9(s,t) $ is defined, $ t_x \leq 0 $ and $ t_y \leq -1 $. Thus, the bisector is given by the curve $ -1+\frac{\alpha t_x}{\sqrt{s_x^2-\beta}} $.
\end{proof}

\begin{lemma}
    The bisector $ b_{7,8} $ is given by $ y = -1 -\frac{\sqrt{2-\alpha^2}}{\sqrt{\alpha^2-1}}x $.
\end{lemma}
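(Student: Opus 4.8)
The plan is to follow exactly the pattern used for the previous bisector lemmas (for instance Lemma~\ref{lemma:bisector45} and the one computing $b_{5,6}$): substitute the closed-form lengths from Theorem~\ref{thm:length}, set $d_7(s,t)=d_8(s,t)$, isolate the single remaining square-root term, square once, and observe that the resulting equation is a perfect square that collapses to a linear relation between $t_x$ and $t_y$.

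Concretely, I would start from $d_7(s,t)=\sqrt{\alpha^2-1}\,s_x+1+\sqrt{t_x^2+(t_y+1)^2}$ and $d_8(s,t)=\sqrt{\alpha^2-1}(s_x+t_x)-\sqrt{2-\alpha^2}(1+t_y)+1$. The terms $\sqrt{\alpha^2-1}\,s_x$ and the additive $1$ cancel on both sides, leaving $\sqrt{t_x^2+(t_y+1)^2}=\sqrt{\alpha^2-1}\,t_x-\sqrt{2-\alpha^2}(1+t_y)$. Squaring yields $t_x^2+(t_y+1)^2=(\alpha^2-1)t_x^2+(2-\alpha^2)(1+t_y)^2-2\sqrt{(\alpha^2-1)(2-\alpha^2)}\,t_x(1+t_y)$. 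Moving everything to one side and multiplying by $-1$ turns this into $0=(2-\alpha^2)t_x^2+2\sqrt{(\alpha^2-1)(2-\alpha^2)}\,t_x(1+t_y)+(\alpha^2-1)(1+t_y)^2=\bigl(\sqrt{2-\alpha^2}\,t_x+\sqrt{\alpha^2-1}(1+t_y)\bigr)^2$. Hence $\sqrt{2-\alpha^2}\,t_x=-\sqrt{\alpha^2-1}(1+t_y)$, i.e.\ $t_y=-1-\frac{\sqrt{2-\alpha^2}}{\sqrt{\alpha^2-1}}t_x$, which is precisely the claimed curve $y=-1-\frac{\sqrt{2-\alpha^2}}{\sqrt{\alpha^2-1}}x$.

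The only point needing a short justification is the squaring step, exactly as in the earlier bisector proofs: on the bisector the right-hand side $\sqrt{\alpha^2-1}\,t_x-\sqrt{2-\alpha^2}(1+t_y)$ equals the nonnegative quantity $\sqrt{t_x^2+(t_y+1)^2}$, so squaring introduces no spurious branch, and conversely the line we obtain does satisfy the original unsquared identity over the range of targets for which $\pi_7$ and $\pi_8$ are the relevant path types (there $t$ lies below and to the left of $R$). I do not expect any real obstacle here — the computation is routine and structurally identical to the perfect-square collapses already observed for $b_{1,2}$, $b_{2,3}$, and $b_{4,5}$; the only mild bookkeeping is tracking signs of $\sqrt{\alpha^2-1}$ versus $\sqrt{2-\alpha^2}$ for $1<\alpha<\sqrt{2}$.
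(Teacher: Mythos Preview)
Your proposal is correct and follows essentially the same approach as the paper: set $d_7(s,t)=d_8(s,t)$, cancel the common terms to isolate the square root, square once, and recognize the resulting expression as the perfect square $\bigl(\sqrt{2-\alpha^2}\,t_x+\sqrt{\alpha^2-1}(1+t_y)\bigr)^2$. Your added remark justifying the squaring step is a nice touch that the paper leaves implicit.
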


\begin{proof}
    We want to know the coordinates of the points $ (t_x, t_y) $ such that $ \sqrt{\alpha^2-1}(s_x+t_x) - \sqrt{2-\alpha^2}(1+t_y)+1 = \sqrt{\alpha^2-1}s_x + 1 + \sqrt{t_x^2+(t_y+1)^2} $. Thus,
    \begin{align*}
        & \sqrt{\alpha^2-1}t_x-\sqrt{2-\alpha^2}(1+t_y) = \sqrt{t_x^2+(1+t_y)^2}\\
        \Rightarrow &(\alpha^2-1)t_x^2+(2-\alpha^2)(1+t_y)^2-2\sqrt{\alpha^2-1}\sqrt{2-\alpha^2}(1+t_y)t_x = t_x^2+(1+t_y)^2\\
        \Rightarrow &(2-\alpha^2)t_x^2+(\alpha^2-1)(1+t_y)^2+2\sqrt{2-\alpha^2}\sqrt{\alpha^2-1}(1+t_y)t_x=0\\
        \Rightarrow &\left[\sqrt{2-\alpha^2}t_x+\sqrt{\alpha^2-1}(1+t_y)\right]^2 = 0 \Rightarrow \sqrt{2-\alpha^2}t_x=-\sqrt{\alpha^2-1}(1+t_y)\\
        \Rightarrow &t_y = -1-\frac{\sqrt{2-\alpha^2}}{\sqrt{\alpha^2-1}}t_x.\forceqed
    \end{align*}
    \renewcommand{\qed}{}
\end{proof}

\begin{lemma}
    The bisector $ b_{9,10} $ is given by $ y = \frac{\sqrt{1-(\alpha^2-1) s_x^2}}{\alpha s_x}x-1 $.
\end{lemma}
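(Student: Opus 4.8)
The plan is to obtain $b_{9,10}$ in the same way $b_{6,9}$ was obtained, by degenerating the implicit description of $\pi_{10}(s,t)$ from Theorem~\ref{thm:length}. The key observation is that $\pi_9(s,t)$ is exactly the special case of $\pi_{10}(s,t)$ in which the refraction point $(w_2,-1)$ on the bottom side of $R$ coincides with the bottom-left corner $(0,-1)$: substituting $w_2 = 0$ into the formula for $d_{10}(s,t)$ gives $\alpha\sqrt{s_x^2+1}+\sqrt{t_x^2+(t_y+1)^2} = d_9(s,t)$. Hence on the bisector the shortest path is simultaneously of type $9$ and of type $10$, which forces the unique solution of the quartic~(\ref{eq:d9}) that governs $w_2$ to be $w_2 = 0$; equivalently, $d_9(s,t) = d_{10}(s,t)$ exactly when $w_2 = 0$.

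First I would substitute $w_2 = 0$ into Equation~(\ref{eq:d9}). Every monomial carrying a positive power of $w_2$ vanishes, so the quartic collapses to the single relation $\beta t_x^2 s_x^2 + \alpha^2(1+t_y)^2 s_x^2 - t_x^2 = 0$, with $\beta = \alpha^2-1$. Next I would solve this for $t_y$; it is quadratic in $1+t_y$, giving $(1+t_y)^2 = t_x^2\bigl(1-(\alpha^2-1)s_x^2\bigr)/(\alpha^2 s_x^2)$ and hence the two branches $t_y = -1 \pm |t_x|\sqrt{1-(\alpha^2-1)s_x^2}/(\alpha s_x)$. Finally, the geometry of $\pi_9$ and $\pi_{10}$ forces $t_x \le 0$ and $t_y \le -1$, so I keep the branch with $1+t_y \le 0$, namely $t_y = -1 + \bigl(\sqrt{1-(\alpha^2-1)s_x^2}/(\alpha s_x)\bigr)t_x$, which is precisely the claimed curve $y = \frac{\sqrt{1-(\alpha^2-1)s_x^2}}{\alpha s_x}x - 1$.

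The calculation itself is just a one-line substitution followed by solving a quadratic, so the only genuine point to argue carefully is that $w_2 = 0$ characterizes this bisector---i.e., that on $b_{9,10}$ the shortest path really is the degenerate type-$10$ path through the corner---together with the correct sign choice forced by the positional constraints on $t$. I would also remark in passing that the radicand $1-(\alpha^2-1)s_x^2$ is nonnegative exactly in the regime where the type-$9$/type-$10$ transition can occur, so the expression is automatically real on the relevant portion of the map and no further case distinction is needed.
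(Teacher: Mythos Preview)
Your proposal is correct and follows essentially the same approach as the paper: identify the bisector with the degeneration $w_2=0$ of the quartic~(\ref{eq:d9}), reduce to $\beta t_x^2 s_x^2 + \alpha^2(1+t_y)^2 s_x^2 - t_x^2 = 0$, and select the branch with $t_y<-1$. Your write-up is in fact slightly more explicit than the paper's in justifying why $w_2=0$ is the right characterization and in using $t_x\le 0$ to resolve $|t_x|$, but the argument is the same.
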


\begin{proof}
    We want the curve defined by the points $ (t_x,t_y) $ such that $ d_9(s,t) = d_{10}(s,t) $. The lengths $ d_9(s,t) $ and $ d_{10}(s,t) $ have the same length when $ w_2=0 $. Hence, we want to compute the points $ (t_x,t_y) $ such that $ \beta t_x^2s_x^2+\alpha^2(1+t_y)^2s_x^2-t_x^2 = 0 $, where $ \beta = \alpha^2-1$. Thus,
    \begin{align*}
        0 & = \alpha^2(1+t_y)^2s_x^2-t_x^2+\beta t_x^2s_x^2 \\
        \Rightarrow t_y & = \pm\frac{\sqrt{t_x^2(1-\beta s_x^2)}}{\alpha s_x}-1 = \pm\frac{\sqrt{1-\beta s_x^2}}{\alpha s_x}|t_x|-1.
    \end{align*}
    
    By the way the paths $ \pi_9(s,t) $ and $ \pi_{10}(s,t) $ are defined, $ t_y < -1 $. Thus, the bisector is given by the curve $ \frac{\sqrt{1-\beta s_x^2}}{\alpha s_x}t_x-1 $.
\end{proof}

\begin{lemma}
    \label{lemma:bisector78}
    The bisector $ b_{11,12} $ is given by $ y = -\sqrt{\alpha^2-1}(s_x-x) $.
\end{lemma}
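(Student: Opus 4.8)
The plan is to mirror the other bisector computations in this section (e.g., those for $b_{4,5}$ and $b_{7,8}$): write down the defining equation $d_{11}(s,t)=d_{12}(s,t)$ using the closed forms from Theorem~\ref{thm:length}, isolate the single square root, square once, and hope that the resulting quadratic in $t_x,t_y$ collapses to a perfect square, so that a single linear relation pops out.

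Concretely, first I would substitute $d_{11}(s,t)=s_x-t_x-\sqrt{\alpha^2-1}\,t_y$ and $d_{12}(s,t)=\alpha\sqrt{(s_x-t_x)^2+t_y^2}$ and set them equal, obtaining
\[
s_x-t_x-\sqrt{\alpha^2-1}\,t_y=\alpha\sqrt{(s_x-t_x)^2+t_y^2}.
\]
Squaring both sides and writing $a=s_x-t_x$ to keep the algebra compact, the left side becomes $a^2-2\sqrt{\alpha^2-1}\,a\,t_y+(\alpha^2-1)t_y^2$ and the right side $\alpha^2 a^2+\alpha^2 t_y^2$; moving everything to one side and cancelling the $a^2$ and $t_y^2$ terms that survive, one is left with $(\alpha^2-1)a^2+2\sqrt{\alpha^2-1}\,a\,t_y+t_y^2=0$, which is exactly $\bigl(\sqrt{\alpha^2-1}\,a+t_y\bigr)^2=0$. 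Hence $t_y=-\sqrt{\alpha^2-1}\,a=-\sqrt{\alpha^2-1}(s_x-t_x)$, and renaming $(t_x,t_y)$ to $(x,y)$ gives $y=-\sqrt{\alpha^2-1}(s_x-x)$, as claimed.

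There is essentially no hard step here; the only point that needs a sentence of care is the squaring. Since $d_{11}(s,t)$ and $d_{12}(s,t)$ are path lengths, on the bisector both sides of the equality are nonnegative, so squaring is reversible and introduces no spurious branch. Because the squared equation reduces to the vanishing of a perfect square rather than to a genuine quadratic, there is likewise no sign ambiguity when taking the square root — only one linear curve arises — so the bisector is exactly the stated line (restricted, as elsewhere in Lemma~\ref{lemma:bisectorsvisible}, to the portion where both path types $\pi_{11}$ and $\pi_{12}$ are actually realizable, i.e. $t$ inside $R$).
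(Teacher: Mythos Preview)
Your proof is correct and follows essentially the same approach as the paper: set $d_{11}(s,t)=d_{12}(s,t)$, square once, and observe that the result is a perfect square yielding the single linear relation $t_y=-\sqrt{\alpha^2-1}(s_x-t_x)$. Your additional remark that both sides are nonnegative (being path lengths), so squaring introduces no spurious solutions, is a nice justification that the paper leaves implicit.
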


\begin{proof}
    We want the curve defining the bisector between the region containing the points $ (t_x, t_y) $ such that the shortest path from $ s=(s_x,0) $ to $ t $ is $ \pi_{11}(s,t) $, and the region containing the points $ t=(t_x,t_y) $ such that the shortest path from $ s=(s_x,0) $ to $ t$ is $ \pi_{12}(s,t) $, i.e., the points $ (t_x,t_y) $ such that $ \alpha\sqrt{(s_x-t_x)^2+t_y^2} = s_x-t_x-\sqrt{\alpha^2-1}t_y $. Thus,
    \begin{align*}
        &\alpha^2[(s_x-t_x)^2+t_y^2] = (s_x-t_x)^2+(\alpha^2-1)t_y^2-2(s_x-t_x)\sqrt{\alpha^2-1}t_y\\
        &\Rightarrow 0 = \alpha^2(s_x-t_x)^2-(s_x-t_x)^2+\alpha^2t_y^2-\alpha^2t_y^2+t_y^2+2(s_x-t_x)\sqrt{\alpha^2-1}t_y\\
         &\hspace{0.7cm}= (\alpha^2-1)(s_x-t_x)^2+t_y^2+2(s_x-t_x)^2\sqrt{\alpha^2-1}t_y = (t_y+\sqrt{\alpha^2-1}(s_x-t_x))^2\\
         &\Rightarrow t_y = -\sqrt{\alpha^2-1}(s_x-t_x).\forceqed
    \end{align*}
    \renewcommand{\qed}{}
\end{proof}

\begin{lemma}
    The bisector $ b_{12,13} $ bisector is given by $ y = - \frac{(s_x+x)+2\alpha\sqrt{s_xx}}{\sqrt{\alpha^2-1}} $.
\end{lemma}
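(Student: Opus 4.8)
The plan is to compute the bisector $b_{12,13}$ directly from the explicit length formulas already established, exactly as in the previous bisector lemmas. Recall from Lemma~\ref{lem:path12} that $d_{13}(s,t) = \sqrt{\alpha^2-1}(s_x+t_x)-t_y$ and that $d_{12}(s,t) = \alpha\sqrt{(s_x-t_x)^2+t_y^2}$. A point $t=(t_x,t_y)$ lies on the bisector precisely when these two quantities are equal, so I would start from the equation
\begin{equation*}
  \alpha\sqrt{(s_x-t_x)^2+t_y^2} = \sqrt{\alpha^2-1}(s_x+t_x)-t_y.
\end{equation*}
I would square both sides to clear the single remaining radical on the left, collect all terms on one side, and simplify. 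The right-hand side, when squared, produces the cross term $-2\sqrt{\alpha^2-1}(s_x+t_x)t_y$, and after cancelling the $\alpha^2$-versus-$1$ contributions of the squared distance the expression should reduce to a quadratic in $t_y$ whose coefficients involve $s_x,t_x$ and $\sqrt{\alpha^2-1}$.

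Next I would solve that quadratic for $t_y$ via the quadratic formula. The key algebraic simplification is that the discriminant should turn out to be a perfect square — mirroring what happens in Lemmas~\ref{lemma:bisector45} and \ref{lemma:bisector78}, where the bisector equation collapses to a squared linear form. Here, however, unlike those cases, the discriminant will contain a term of the form $s_x t_x$ under a square root, which is exactly the source of the $2\alpha\sqrt{s_x x}$ term in the claimed answer $y = -\frac{(s_x+x)+2\alpha\sqrt{s_x x}}{\sqrt{\alpha^2-1}}$. I would carefully expand $(\sqrt{s_x}+\sqrt{t_x})^2 = s_x + t_x + 2\sqrt{s_x t_x}$ (valid here because, by the geometry of paths $\pi_{12}$ and $\pi_{13}$, both $s_x>0$ and $t_x>0$) to recognize the radicand as such a square, pick the branch of the $\pm$ consistent with the sign constraints on $t_y$ coming from the path types (namely $t_y<0$, with $t$ inside $R$), and rename $t_x$ to $x$ and $t_y$ to $y$ to get the stated form.

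The main obstacle I anticipate is not conceptual but bookkeeping: after squaring, the equation is no longer a clean square of a linear expression (as it was for $b_{4,5}$ or $b_{11,12}$) but genuinely quadratic in $t_y$, so I must correctly identify that its discriminant factors as a square involving a nested radical $\sqrt{s_x t_x}$, and then argue that the branch selection is forced by the domain of validity of path types $12$ and $13$. I would also need to double-check that no extraneous solution is introduced by squaring, by verifying that on the selected branch the right-hand side $\sqrt{\alpha^2-1}(s_x+x)-y$ is indeed nonnegative, which it is since $y<0$ and $s_x,x>0$. Once the discriminant is recognized as $(\,\cdot\,)^2$, the remaining steps are routine and the formula $y = -\frac{(s_x+x)+2\alpha\sqrt{s_x x}}{\sqrt{\alpha^2-1}}$ drops out.
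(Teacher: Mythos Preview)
Your approach is essentially the paper's: set $d_{12}=d_{13}$, square once, obtain a quadratic in $t_y$, solve, and select a branch. Two small corrections are worth making. First, the discriminant does not require the $(\sqrt{s_x}+\sqrt{t_x})^2$ trick; after expansion the $(s_x+t_x)^2$ terms cancel and the discriminant collapses directly to $4\alpha^2(\alpha^2-1)s_xt_x$, whose square root is $2\alpha\sqrt{\alpha^2-1}\,\sqrt{s_xt_x}$. Second, the branch selection cannot be done with just $t_y<0$: for $t_x$ close to $0$ both roots are negative. The paper instead uses the existence condition for $\pi_{13}$, namely $b_1>b_2$ (the exit point on the left side must lie above the re-entry point), which from Lemma~\ref{lem:path12} gives $t_y<-\dfrac{s_x+t_x}{\sqrt{\alpha^2-1}}$; this is exactly what forces the minus sign and yields $y=-\dfrac{(s_x+x)+2\alpha\sqrt{s_xx}}{\sqrt{\alpha^2-1}}$.
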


\begin{proof}
    We want the curve defining the bisector between the region containing the points $ (t_x, t_y) $ such that the shortest path from $ s=(s_x,0) $ to $ t $ is $ \pi_{12}(s,t) $, and the region containing the points $ t=(t_x,t_y) $ such that the shortest path from $ s=(s_x,0) $ to $ t$ is $ \pi_{13}(s,t) $, i.e., the points $ (t_x,t_y) $ such that $\sqrt{\alpha^2-1}(s_x+t_x)-t_y = \alpha\sqrt{(s_x-t_x)^2+t_y^2}$. Thus,
    \begin{align*}
        &[\sqrt{\alpha^2-1}(s_x+t_x)-t_y]^2 = \alpha^2((s_x-t_x)^2+t_y^2)\\
        &\Rightarrow (\alpha^2-1)(s_x+t_x)^2+t_y^2-2\sqrt{\alpha^2-1}(s_x+t_x)t_y = \alpha^2(s_x-t_x)^2+\alpha^2t_y^2 \\
        &\Rightarrow  0 = (\alpha^2-1)t_y^2+2\sqrt{\alpha^2-1}(s_x+t_x)t_y+\alpha^2(s_x-t_x)^2-(\alpha^2-1)(s_x+t_x)^2\\
        &\hspace{0.7cm}= (\alpha^2-1)t_y^2+2\sqrt{\alpha^2-1}(s_x+t_x)t_y+\alpha^2s_x^2+\alpha^2t_x^2-2\alpha^2s_xt_x-\alpha^2s_x^2-\alpha^2t_x^2\\
        &\hspace{1.2cm}-2\alpha^2s_xt_x+(s_x+t_x)^2 \\
        &\hspace{0.7cm}= (\alpha^2-1)t_y^2+2\sqrt{\alpha^2-1}(s_x+t_x)t_y-4\alpha^2s_xt_x+(s_x+t_x)^2\\
        &\Rightarrow t_y=\frac{-2\sqrt{\alpha^2-1}(s_x+t_x)\pm\sqrt{4(\alpha^2-1)(s_x+t_x)^2-4(\alpha^2-1)[-4\alpha^2s_xt_x+(s_x+t_x)^2]}}{2(\alpha^2-1)}\\
        &\hspace{0.7cm}= \frac{-\sqrt{\alpha^2-1}(s_x+t_x)\pm\sqrt{(\alpha^2-1)(s_x+t_x)^2+(\alpha^2-1)4\alpha^2s_xt_x-(\alpha^2-1)(s_x+t_x)^2}}{\alpha^2-1}\\
        &\hspace{0.7cm}= \frac{-\sqrt{\alpha^2-1}(s_x+t_x)\pm\sqrt{(\alpha^2-1)4\alpha^2s_xt_x}}{\alpha^2-1} = \frac{-(s_x+t_x)\pm2\alpha\sqrt{s_xt_x}}{\sqrt{\alpha^2-1}}.
    \end{align*}
    
    Let $ (0, b_1) $ and $ (0, b_2) $ be, respectively, the points where $ \pi_{13}(s, t) $ leaves and enters for the second time the region~$R$. We know that $ b_1 > b_2 $. Thus, using Lemma~\ref{lem:path12}, we know that $ \pi_{13}(s,t) $ exists if $ t_y < -\frac{s_x+t_x}{\sqrt{\alpha^2-1}} $.
    Hence, the bisector is given by the curve $ t_y = \frac{-(s_x+t_x)-2\alpha\sqrt{s_xt_x}}{\sqrt{\alpha^2-1}} $.
\end{proof}

\subsection{The source point $s$ lies inside $R$}
\label{sub:spm_The_source_point_lies_inside}

In this case we have shortest paths of type $ \pi_i(s,t) $, for $ i \in \{6,7,8,9,10,12,13\} $. Hence, the equations of the bisectors of the \SPM are given by the sides of $ R $, and bisectors $ b_{6,9}, b_{6,10} $ and $ b_{9,10} $ if $ \alpha < 1 $, and bisectors $ b_{6,7}, b_{7,8}, b_{6,10}, b_{7,10}, b_{8,10} $ and $ b_{12,13} $ if $ 1 < \alpha < \sqrt{2} $. See Lemma~\ref{lemma:bisectorsvisible} and Conjecture~\ref{conj:pi9}.

\section{Conclusion}
\label{sec:conclusions}

We analyzed the WRP when there is only one weighted rectangle $R$, and showed how to obtain the exact shortest path $\pi(s,t)$ and its length when $s$ lies in or on $R$. When both $s$ and $t$ lie outside $R$ the exact solution is unsolvable in the \ACMQ. We obtain similar results in the case where $R$ is a single quadrant. For future work, it would be interesting to find an exact formula within the \ACMQ for the bisectors involving $ \pi_{10}(s,t) $. In addition, we may want to analyze if or how we can generalize these results to other convex shapes.

\acknowledgements
\label{sec:ack}
Work by G. E. and R. I. S. has been supported by project PID2023-150725NB-I00 funded by MICIU/AEI/10.13039/501100011033. G. E. was also funded by a FPU of the Universidad de Alcal\'a.

\nocite{*}
\bibliographystyle{abbrvnat}
\bibliography{wrp_bibliography}
\label{sec:biblio}

\end{document}